\documentclass[envcountsame,envcountsect,a4paper,runningheads]{llncs}

\newcommand\arxiv[1]{#1}
\newcommand\conf[1]{}

\usepackage{hyperref}

\hypersetup{
  colorlinks=true,
  linkcolor=blue,
  anchorcolor=blue,
  citecolor=blue,
  urlcolor=blue,
  filecolor=blue,
  breaklinks=true
}
\usepackage{prftree}
\usepackage{amsmath,amsfonts,amssymb}
\usepackage[only,llbracket,rrbracket]{stmaryrd}
\SetSymbolFont{stmry}{bold}{U}{cmr}{b}{n} 
\usepackage{cite}
\usepackage{float}
\floatstyle{boxed}
\restylefloat{figure}
\usepackage{multicol}
\allowdisplaybreaks
\usepackage{nicefrac}

\newcommand{\lvec}{\ensuremath{\textrm{Vectorial}}}
\newcommand{\llin}{\ensuremath{\textrm{Lineal}}}
\newcommand{\lvecr}{\ensuremath{\lvec_{\textrm{\tiny R}}}}
\newcommand{\llinr}{\ensuremath{\llin_{\textrm{\tiny R}}}} 
\newcommand{\ve}[1]{\ensuremath{\mathbf{#1}}}
\newcommand{\Sc}{\ensuremath{\mathsf{S}}}
\DeclareMathAlphabet{\mathpzc}{OT1}{pzc}{m}{n}
\newcommand{\varu}[1]{\ensuremath{\mathpzc{#1}}}
\newcommand{\vara}[1]{\ensuremath{\mathbb{#1}}}
\newcommand{\sui}[1]{\ensuremath{\sum_{i=1}^{#1}}}
\newcommand{\suj}[1]{\ensuremath{\sum_{j=1}^{#1}}}
\newcommand{\suk}[1]{\sum_{k=1}^{#1}} \newcommand{\FV}[1]{\ensuremath{FV}(#1)}
\newcommand{\true}{{\textbf{true}}} \newcommand{\false}{{\textbf{false}}}
\newcommand{\True}{\mathcal{T}} \newcommand{\False}{\mathcal{F}}
\newcommand{\canon}[1]{\left[#1\right]}
\newcommand{\cocanon}[1]{\left\{#1\right\}}
\newcommand{\citeCharTypes}{\cite[Lem.~4.2]{vectorial}}
\newcommand{\citeEquivSumScalars}{\cite[Lem.~4.4]{vectorial}}
\newcommand{\citeEquivForall}{\cite[Lem.~4.5]{vectorial}}
\newcommand{\V}{\mathcal{V}} \newcommand{\ssubt}{\mathrel{\preceq}}
\newcommand{\tnorm}[1]{\ensuremath{\mathrel{\mathcal{W}\left(#1\right)}}}
\newcommand{\recap}[3]{\noindent {\bf #1 \ref{#2}.} \emph{#3}}
\newcommand{\xrecap}[4]{\noindent {\bf #1 \ref{#3} (#2).} \emph{#4}}
\newcommand{\haligneddots}{\cleaders\hbox to 3pt{\hss$\cdot$\hss}\hfill\kern0pt}
\newcommand{\textbetweenrules}[2][2pt]{%
  \par\addvspace{\topsep}
  \noindent\makebox[\textwidth]{%
    \sbox0{\quad#2\quad}%
    \dimen0=.5\dimexpr\ht0+#1\relax \dimen2=-.5\dimexpr\ht0-#1\relax
    \dimen4=.5\dimexpr\textwidth-\wd0\relax \setbox2=\vbox to \ht0{%
      \vss \hbox to \dimen4{\haligneddots} \vss }%
    \copy2 \box0 \box2 }\par\nopagebreak\addvspace{\topsep}%
}
\newcommand{\inductioncase}[1]{\noindent\ignorespaces\textbetweenrules{\textbf{#1}}}
\newcommand{\sug}[2]{\ensuremath{\sum_{#1=1}^{#2}}}
\newcommand{\textleadbydots}[2][2pt]{%
  \par\addvspace{\topsep}
  \noindent\makebox[\textwidth]{%
    \sbox0{#2\quad}%
    \dimen0=.5\dimexpr\ht0+#1\relax \dimen2=-.5\dimexpr\ht0-#1\relax
    \dimen4=\dimexpr\textwidth-\wd0\relax \setbox2=\vbox to \ht0{%
      \vss \hbox to \dimen4{\haligneddots} \vss }%
    \box0 \box2 }\par\nopagebreak\addvspace{\topsep}%
}
\newcommand{\subst}[2]{[{#1}/{#2}]}

\begin{document}

\title{The Vectorial Lambda Calculus Revisited\thanks{This paper is based on the first author's master thesis~\cite{Noriega20}.}}
\author{Francisco Noriega\inst{1} \and
Alejandro D\'iaz-Caro\inst{2,3}}

\authorrunning{F.~Noriega and A.~D\'iaz-Caro}

\institute{DC, FCEyN. Universidad de Buenos Aires, Argentina\\
  \email{frannoriega.92@gmail.com}
  \and
  DCyT. Universidad Nacional de Quilmes, Argentina
  \and
  ICC.  CONICET--Universidad de Buenos Aires, Argentina
  \\
  \email{adiazcaro@icc.fcen.uba.ar}}

\maketitle              

\begin{abstract}
  We revisit the \lvec\ Lambda Calculus, a typed version of \llin. \lvec\ (as
  well as \llin) has been originally designed for quantum computing, as an extension to
  System F where linear combinations of lambda terms are also terms and linear
  combinations of types are also types.
  In its first presentation, \lvec\ only provides a weakened version of the
  Subject Reduction property. We prove that our revised \lvec\ Lambda Calculus
  supports the standard version of said property, answering a long standing
  issue.  In addition we also introduce the concept of weight of types and
  terms, and prove a relation between the weight of terms and of its types.

\keywords{Lambda calculus \and Type theory \and Quantum computing}
\end{abstract}

\section{Introduction}\label{ch:introduction}
The ``quantum data, classic control'' paradigm has been proposed for
programming languages by Selinger~\cite{selinger_2004}. It presumes that quantum computers will have a specialized
device, known as QRAM~\cite{Knill04}, attached to a classical computer, with the
latter instructing the former which operations to perform over which qubits. In
this scheme, the classical computer is the one that reads the output of measurements
performed on the qubits to retrieve the classical bits and continue running the
program. Hence, the quantum memory and the allowed operations are only provided
as black boxes under this paradigm. The quantum lambda
calculus~\cite{selinger_valiron_2006}, as well as several programming languages for
describing quantum algorithms, such as Qiskit~\cite{Qiskit}, or the more evolved Quipper~\cite{quipper} and QWIRE~\cite{qwire},
follow this scheme. However, a lambda calculus allowing for programming those
black boxes continues to be a long-standing problem. This is what is known as
``quantum data and control''. One of the first attempts for quantum control
within the lambda calculus was van Tonder's calculus~\cite{tonder04lambda},
which placed the lambda terms directly inside the quantum memory. A completely
different path started with Arrighi and Dowek's
work~\cite{ArrighiDowekRTA08,ArrighiDowekLMCS17}, who proposed a new untyped
calculus called \llin. In \llin, linear combinations (i.e.~superpositions) of 
terms are also terms, and they showed how to encode quantum operations with it.

\llin\ is a minimalistic language able to model high-level computation with linear
algebra, providing a computational definition of vector spaces and
bilinear functions. The first problem addressed by this language was how to
model higher-order computable operators over infinite dimensional vector spaces.
This serves as a basis for studying wider notions of computability upon
abstract vector spaces, whatever the interpretation of the vectors might be
(probabilities, number of computational paths leading to one result, quantum
states, etc.). Thus, the terms are modelled as said state vectors, and if
$\ve{t}$ and $\ve{u}$ are valid terms, then so is the term \( \alpha\cdot\ve{t}
+ \beta\cdot\ve{u} \), representing the superposition of the state vectors
$\ve{t}$ and $\ve{u}$ with some scalars $\alpha$ and $\beta$. However, the
downside of this generality in the context of quantum computing, is that the
operators are not restricted to being unitary (as required by quantum physics).
It was not until several years later~\cite{DiazcaroGuillermoMiquelValironLICS19} that the problem of
how to restrict such a language to the quantum realm has been somehow solved using a realizability technique.
However, such a technique is based on defining the denotational semantics
first, and then extracting a type system from there (maybe with an infinite
amount of typing rules) that fits such semantics. The problem on how to
extract a finite set of typing rules, which is expressive enough, remains open.
In~\cite{MalherbeDiazcaro2020b} there is a first attempt to define such a
language, called Lambda-$\mathcal S_1$, which is, however, far from the
original \llin. For example, in \llin\ it is possible to define an Oracle
$U_f$ implementing the one-bit to one-bit function $f$\footnote{See, for
  example, \cite[\S 1.4.2]{NC10} for more information about this Oracle, or
  \cite[\S 5.2]{NC10}, for a deeper discussion about oracles in general.} with
  a lambda-term abstraction taking the function $f$ as a parameter. This is not
  possible in Lambda-$\mathcal S_1$, since to ensure that the produced $U_f$ is
  unitary would require to do a test of orthogonality between two open
  lambda-terms. Therefore, the realizability technique provides only part of
  the solution, but more complex type systems that take into account the scalars
  within the types might be needed to solve this problem.

\lvec~\cite{vectorial} is a polymorphic
typed version of \llin\ providing a formal account of linear operators and
vectors at the level of the type system, including both scalars and sums of
types. In \lvec, if $\Gamma \vdash \ve t:T$ and $\Gamma \vdash \ve r:R$ then
$\Gamma \vdash~\alpha\cdot\ve t + \beta\cdot\ve r~:~\alpha\cdot T + \beta\cdot
R$. In general, if $\ve t$ has type $\sum_i\alpha_i\cdot U_i$, it reduces to a superposition $\sum_i\alpha_i\cdot\ve r_i$, with each $\ve r_i$ of type $U_i$.
As in \llin, finite vectors and
matrices can be encoded within \lvec. The linear
combinations of types typing the encoded expressions give some information on
the linear combination of values to be obtained. In particular, $U_f$ is
typable in \lvec. In addition, \llin, its untyped version, required some kind of
restrictions to avoid non confluent terms issued from the fact that not
normalising terms can be considered as a form of infinite, and so the
subtraction of any two terms is not always well defined\footnote{An easy
  example is a term $Y_{\ve b}$ rewriting to $\ve b+Y_{\ve b}$, so without
  further restrictions, $Y_{\ve b}-Y_{\ve b}$ may be rewritten both to $\ve 0$
  and to $\ve b+Y_{\ve b}-Y_{\ve b}$ and thus to $\ve b$.}. With type systems
  ensuring strong normalisation, such kind of issues
  disappear~\cite{ArrighiDiazcaroLMCS12,DiazcaroPetitWoLLIC12,lmcs:927,vectorial}.

\lvec\ has been a step into the quest for a quantum lambda calculus in the
quantum data and control paradigm. However, despite its many interesting
properties, \lvec\ does not feature the subject reduction property. For
example, while $(\lambda x.x)+(\lambda x.x)$ can be typed by $(U\to U)+(V\to
V)$ for any $U$ and $V$, $2\cdot(\lambda x.x)$ can only by typed by
$2\cdot(U\to U)$ or $2\cdot (V\to V)$. So, even if $U+U$ is equivalent to
$2\cdot U$, subject reduction is lost if $\ve t+\ve t$ reduces to $2\cdot\ve
t$, as it is the case in \llin.
In~\cite{vectorial} only a weakened
version of subject reduction has been established. This is the reason why,
after defining \lvec, the quest for quantum control in the lambda calculus has
taken a turn into simpler type
systems~\cite{DiazcaroDowekTPNC17,DiazcaroMalherbeLSFA18,DiazcaroGuillermoMiquelValironLICS19,DiazcaroDowekRinaldiBIO19,DiazcaroMalherbeACS20,DiazcaroDowek2020b},
none of them considered to be complete yet.

By revisiting \lvec, we noticed that it is possible to fix its lack of
subject reduction, while preserving many properties of the original system.
This is the main contribution of our paper: to provide a non-trivial
redefinition of \lvec, featuring subject reduction, while still having the main
desirable properties of the original system. We think that this modified version of
\lvec\ will provide the needed framework in the quest for the
quantum-controlled lambda calculus.

\subsection*{Plan of the paper}\label{sec:introduction:plan}
The definition of this revised version of \lvec, which we will call \lvecr\ along
this paper to avoid confusion, is given in Section~\ref{ch:vecrev}. We also discuss the design decisions
behind the revision in order to regain the standard version of the subject
reduction property. In Section~\ref{sec:examples} we bring back key examples
from \lvec, showing that they are still valid for \lvecr. We prove subject
reduction in Section~\ref{ch:sr}. In Section~\ref{ch:other-properties} we
present the proof for other desirable properties of the system: progress, strong
normalisation, and weight preservation, that is, the weight of a typed term is equal to the weight of its type.

\section{The calculus}\label{ch:vecrev}
\subsection{\texorpdfstring{\llinr}{LinealR}: The untyped setting}\label{sec:vecrev:terms}
\llin~\cite{ArrighiDowekRTA08,ArrighiDowekLMCS17} extends the lambda calculus
with linear combinations of terms. In our revised version, which we call \llinr,
the grammar of terms is given by
\[
  \ve t ::= x\mid\lambda x.\ve t\mid (\ve t)~\ve t\mid\alpha\cdot\ve t\mid\ve
  t+\ve t
\]
where $\alpha$ belongs to a commutative ring $(\Sc,+,\times)$.

This grammar differs from that of \llin\ in the fact that we do not include a
term $\ve 0$ representing the null linear combination. Indeed, $0\cdot\ve t$ is
a proper term, but it differs from $0\cdot\ve r$ when $\ve t\neq\ve r$. This
modification comes from the fact that in a typed calculus, $\ve 0$ would have to be
typed with any type. Then, for example, $(\lambda x.x+0\cdot\ve t)~\ve r$ may
not have a type, if $\ve t$ is not an arrow type for example, while $(\lambda
x.x)~\ve r$ can always be typed. So it becomes crucial not to simplify the term
$0\cdot\ve t$, and consequently we do not need a term $\ve 0$. In fact, such
linear combinations can be seen as forming a ``weak'' module, differing from a
module in the fact that there is no neutral element for the addition.
See~\cite[\S II.B]{DiazcaroGuillermoMiquelValironLICS19} for a longer
discussion about the weak structure, which has been historically used within
the concept of unbounded operators, introduced by von Neumann to give a
rigorous mathematical definition to the operators that are used in quantum
mechanics. For historical reasons we will continue calling the calculus ``The
\emph{Vectorial} Lambda Calculus'', while it could be named ``The \emph{Weak
Module} Lambda Calculus''.

Variables and abstractions are called basis
terms~\cite{ArrighiDowekRTA08,vectorial} or pure values~\cite{DiazcaroGuillermoMiquelValironLICS19}:
\[
  \ve b ::= x\mid\lambda x.\ve t
\]

The reduction rules, given in Figure~\ref{fig:vecrev:terms}, are split in four groups.
The groups E (elementary rules) and F (factorisation rules) deal with the (weak)
module axioms. The group B is composed by only one rule, the beta-reduction,
following a ``call-by-basis'' strategy~\cite{AssafDiazcaroPerdrixTassonValironLMCS14}, that is, the beta-reduction can occur
only when the argument is a basis term. Finally, the group A (application rules)
deals with applications in linear combinations: If the left hand side or the
right hand side of an application is a linear combination (and so, the
conditions for applying the call-by-basis beta-rule are not met), then the
application is first distributed over the linear combination.

\begin{figure}[t]
  \centering
  \[
    \begin{array}{rcl@{\quad}rcl@{\quad}rcl}
      \multicolumn{3}{c}{\textrm{Group E}}&\multicolumn{3}{c}{\textrm{Group F}}&\multicolumn{3}{c}{\textrm{Group A}}\\
      1\cdot\ve{t}&\to&\ve{t} &            \alpha\cdot\ve{t}+\beta\cdot\ve{t}&\to&(\alpha+\beta)\cdot\ve{t} & (\ve{t}+\ve{r})~\ve{u}&\to&(\ve{t})~\ve{u}+(\ve{r})~\ve{u}\\
      \alpha\cdot (\beta\cdot\ve{t})&\to&(\alpha\times\beta)\cdot\ve{t} &           \alpha\cdot\ve{t}+\ve{t}&\to&(\alpha+1)\cdot\ve{t} & (\ve{t})~(\ve{r}+\ve{u})&\to&(\ve{t})~\ve{r}+(\ve{t})~\ve{u}\\
      \alpha\cdot (\ve{t}+\ve{r})&\to&\alpha\cdot\ve{t}+\alpha\cdot\ve{r} &           \ve{t}+\ve{t}&\to&(1+1)\cdot\ve{t} & (\alpha\cdot\ve{t})~\ve{r}&\to&\alpha\cdot (\ve{t})~\ve{r}\\
      &&&&&&                        (\ve{t})~(\alpha\cdot\ve{r})&\to&\alpha\cdot(\ve{t})~\ve{r}\\
      \multicolumn{9}{c}{\textrm{Group B}}\\
      &&& (\lambda x.\ve{t})~\ve{b}&\to&\ve{t}[\ve{b}/x] 
    \end{array}
  \]

  Contextual rules
  \[
    \prftree{\ve t\to\ve r}{\alpha\cdot\ve t\to\alpha\cdot\ve r}
    \quad
    \prftree{\ve t\to\ve r}{\ve u+\ve t\to \ve u+\ve r}
    \quad
    \prftree{\ve t\to \ve r}{(\ve u)~\ve t\to(\ve u)~\ve r}
    \quad
    \prftree{\ve t\to \ve r}{(\ve t)~\ve u\to(\ve r)~\ve u}
    \quad
    \prftree{\ve t\to \ve r}{\lambda x.\ve t\to\lambda x.\ve r}
  \]
  \caption{Reduction relation of \llinr\ and \lvecr.}
  \label{fig:vecrev:terms}
\end{figure}

\subsection{\texorpdfstring{\lvecr}{VectorialR}: Typed \texorpdfstring{\llinr}{LinealR}}\label{sec:vecrev:typesystem}
The grammar of types~\cite{vectorial} consists in a
sort of unit types, that is, types which are not linear combinations of other
types, aimed to type base terms, and a sort of general types, which are linear
combinations of unit types, or type variables of that sort.
\[
  \begin{array}[t]{l@{\hspace{1.5cm}}r@{$\ ::=\quad$}l}
    \text{\em Types:} & T & U~|~\alpha\cdot T\mid T+T\mid \vara{X}\\
    \text{\em Unit types:} & U & \varu{X}\mid U\to T\mid\forall\varu{X}.U\mid\forall \vara{X}.U
  \end{array}
\]
We write $T,R,S$ for general types and $U, V, W$ for unit types. Notice that
there are two kinds of variables, distinguished by its typography. Variables
$\varu X, \varu Y, \varu Z$ are variables meant to be replaced only by unit
types, while $\vara X, \vara Y, \vara Z$ can be replaced by any type. Note, however, that,
for example, $\forall\varu X.\varu X$ is a valid type (even if not inhabited),
while $\forall\vara X.\vara X$ is not even grammatically correct. In the same
way, since arrows have the shape $U\to T$, an $\vara X$ variable can only appear
in the body of the arrow. The shape of the arrow accounts for the fact that the
calculus is call-by-base, and so only base terms can be passed as arguments.

As with terms, types form a (weak) module. Therefore, we consider the
equivalence between types given in Figure~\ref{fig:typeequiv}.
\begin{figure}[t]
    \[
      \begin{array}[t]{r@{~\equiv~}l@{\hspace{1.5cm}}r@{~\equiv~}l}
        1\cdot T & T							 & 	\alpha\cdot T+\beta\cdot T & (\alpha+\beta)\cdot T\\
        \alpha\cdot(\beta\cdot T) & (\alpha\times\beta)\cdot T & 		T+R & R+T\\
        \alpha\cdot T+\alpha\cdot R	&\alpha\cdot (T+R) 			 & 	T+(R+S) & (T+R)+S
      \end{array}
    \]
    \caption{Equivalence between types}
    \label{fig:typeequiv}
\end{figure}

A typing sequent $\Gamma\vdash\ve t:T$ relates a context $\Gamma$, formed by a
set of unit-typed term variables (and, as usual, written as a coma-separated list
of variables and types), a term $\ve t$ and a type $T$. The rules to construct
valid typing sequents are given in Figure~\ref{fig:vecrev:types}, and they have been
modified in relation to the set of rules from \lvec~\cite{vectorial}. We write
$X$ when we do not want to specify which kind of variable we refer to
($\varu{X}$ or $\vara{X}$). The notation $T[A/X]$ is a way to abbreviate two
rules, one where $A$ is a unit type and $X$ is $\varu X$, and another one with $A$
any type and $X$ is $\vara X$. Similarly, $\forall_I$ (resp.~$\forall_E$) stands for
$\forall_\varu{I}$ or $\forall_\vara{I}$ (resp.~$\forall_\varu{E}$ or
$\forall_\vara{E}$) depending on which kind of variable is being introduced
(resp.~eliminated).

\begin{figure}[t]
  \centering
  \[
    \begin{array}{c}
      \vcenter{\prftree[r]{$ax$}{}{\Gamma, x:{U}\vdash x:{U}}}
      \qquad
      \qquad
      \vcenter{\prftree[r]{$\equiv$}{\Gamma\vdash\ve{t}: T}{R \equiv T}{\Gamma\vdash\ve{t}: R}}
      \\
      \vcenter{\prftree[r]{$\to_I$}{\Gamma, x:{U} \vdash\ve{t}: T}{\Gamma \vdash \lambda x.\ve{t}:{U}\to T}}
      \quad
      \vcenter{\prftree[r]{$\to_E$}{\Gamma \vdash\ve{t}:\sui{n}\alpha_i\cdot\forall\vec{X}.(U\to T_i)}{\Gamma\vdash\ve{r}:\suj{m}\beta_j\cdot U[\vec{A}_j/\vec{X}]}{\Gamma \vdash(\ve{t})~\ve{r}:\sui{n}\suj{m} \alpha_i\times\beta_j\cdot {T_i[\vec{A}_j/\vec{X}]}}}
      \\
      \vcenter{\prftree[r]{$\forall_{I}$}{\Gamma\vdash\ve{t}: \sui{n}\alpha_i\cdot U_i}{X\notin\FV{\Gamma}}{\Gamma\vdash\ve{t}:\sui{n}\alpha_i\cdot \forall X.U_i}}
      \qquad
      \qquad
      \vcenter{\prftree[r]{$\forall_{E}$}{\Gamma\vdash\ve{t}: \sui{n}\alpha_i\cdot \forall X.U_i}{\Gamma\vdash\ve{t}: \sui{n}\alpha_i\cdot U_i[A/X]}}
      \\
      \vcenter{\prftree[r]{$+_I$}{\Gamma\vdash\ve{t}: T}{\Gamma\vdash\ve{r}: R}{\Gamma\vdash\ve{t}+\ve{r}: T+R}}
      \qquad
      \vcenter{\prftree[r]{$1_E$}{\Gamma \vdash 1\cdot \ve{t}: T}{\Gamma \vdash \ve{t}: T}}
      \qquad
      \vcenter{\prftree[r]{$S$}{\Gamma \vdash \ve{t}: T_i}{\forall i \in \{1,\dots,n\}}{\Gamma \vdash \left(\sui{n} \alpha_i\right) \cdot \ve{t}: \sui{n} \alpha_i \cdot T_i}}
    \end{array}
  \]
  \caption{Typing rules of \lvecr.}
  \label{fig:vecrev:types}
\end{figure}

Since the main focus of this work is to provide a revision of $\lvec$ to recover
the subject reduction property, we deemed necessary to revise the typing rules.
To make it clear how this new type system solves the problem, we analyse the
problem the original system had.

In \lvec, instead of $1_E$ and $S$, there is an arguably more natural rule
$\alpha_I$:
\[
  \prftree[r]{$\alpha_I$}{\Gamma\vdash\ve t:T}{\Gamma\vdash\alpha\cdot\ve
    t:\alpha\cdot T}
\]

However, consider a term $\ve{t}$ typable both by $T$ and $R\not\equiv T$. The
term $\alpha \cdot \ve{t} + \beta \cdot \ve{t}$ can be typed by $\alpha \cdot T
+ \alpha \cdot R$, both, in \lvec\ and in \lvecr. However, upon reducing this
term by rule $\alpha \cdot \ve{t} + \beta \cdot \ve{t} \to (\alpha + \beta)
\cdot \ve{t}$ (from Group F), the given term in \lvec\ can only be typed either
by $(\alpha+\beta)\cdot T$ or $(\alpha+\beta)\cdot R$, breaking subject
reduction. Instead, the added rule $S$ in \lvecr\ allows to type such a term
with the correct type $\alpha\cdot T+\beta\cdot R$.

We can generalise the problem, so for any term $\ve{t}$ that can be typed with
$T_1,\dots,T_n$, the system should be able to type $(\sui{n} \alpha_i)
\cdot \ve{t}$ with $\sui{n} \alpha_i \cdot T_i$. The only condition
we must satisfy is that the scalar associated with the term is equal to the sum
of the scalars of the type, which in this case is $\sui{n} \alpha_i$.

Rule $S$ has been introduced to solve this problem, and it also served as a
replacement for rule $\alpha_I$, which is the particular case with $n=1$.

However, the rule $S$ alone is not enough to solve the problem. Continuing with
the example, using the new rule $S$ we have
\[
  \prftree[r]{$S$}
  {\prftree
    {\vdots}
    {\Gamma \vdash \ve{t}: T}}
  {\prftree
    {\vdots}
    {\Gamma \vdash \ve{t}: R}}
  {\Gamma \vdash (\alpha + \beta) \cdot \ve{t}: \alpha \cdot T + \beta \cdot R}
\]
In the particular case when $\alpha + \beta = 1$, the previous conclusion is
$\Gamma \vdash 1 \cdot \ve{t}: \alpha \cdot T + \beta \cdot R$, and so by
applying the rewriting rule $1 \cdot \ve{t} \to \ve{t}$ (from Group E), we end
up having to derive $\Gamma \vdash \ve{t}: \alpha \cdot T + \beta \cdot R$. Such is the reason
for the rule $1_E$.

\section{Interpretation of typing judgements}\label{sec:examples}
In the general case the calculus can represent infinite-dimensional linear
operators such as $\lambda x.x$, $\lambda x.\lambda y.y$, $\lambda x.\lambda
f.(f)\,x$,\dots and their applications. Even for such general terms $\ve t$, the
vectorial type system provides much information about the superposition of basis
terms $\sum_i\alpha_i\cdot\ve b_i$ to which $\ve t$ is reduced to, as proven
by~Theorem~\ref{thm:progress} (Progress). How much information is brought by the type
system in the finitary case is the topic of this section.

Next we show how to encode finite-dimensional 
linear operators, i.e.~matrices, together with their applications to vectors.
This encoding slightly differs from that of $\lvec$~\cite[\S 6]{vectorial}.

\subsection{In 2 dimensions}
In this section we show how $\lvecr$ handles the Hadamard gate\footnote{The
  Hadamard gate is a well known quantum operator sending $|0\rangle$ to $\frac 1{\sqrt 2}|0\rangle+\frac 1{\sqrt 2}|1\rangle$ 
  and $|1\rangle$ to $\frac 1{\sqrt 2}|0\rangle-\frac 1{\sqrt 2}|1\rangle$.},
and how to encode matrices and vectors in general.

With an empty typing context, the booleans $\true=\lambda x.\lambda y.x\,$ and
$\,\false=\lambda x.\lambda y.y$ (or $|0\rangle$ and $|1\rangle$ in Dirac notation) can be respectively typed with the types
$\True=\forall \varu{XY}.\varu X\to (\varu Y\to\varu X)\,$ and
$\,\False=\forall\varu{XY}.\varu X\to (\varu Y\to\varu Y)$. The superposition
has the following type $\vdash\alpha\cdot\true+\beta\cdot\false:\alpha\cdot\True
+ \beta\cdot\False$. (Note that it can also be typed with $(\alpha+\beta)\cdot
\forall\varu X.\varu X\to\varu X\to\varu X$).

The linear map $\ve{U}$ sending $\true$ to $a\cdot\true+b\cdot\false$ and
$\false$ to $c\cdot\true+d\cdot\false$ 
is written as
\[
  \ve U={\lambda
    x.\cocanon{((x)~\canon{a\cdot\true+b\cdot\false})~\canon{c\cdot\true+d\cdot\false}}}.
\]
where $\canon{\ve t}$ stands for $\lambda x.\ve t$, for a fresh variable $x$,
and $\cocanon{\ve t}$ stands for $(\ve t)~\lambda x.x$. This way,
$\cocanon{\canon{\ve t}}\to^*\ve t$.
Such an encoding is needed to freeze the distribution of an application with
respect to its argument. Indeed, $(\ve t)~(\ve r+\ve s)\to(\ve t)~\ve r+(\ve
t)~\ve s$, while $(\ve t)~(\lambda x.\ve s+\ve t)$ does not distribute since
the argument is already a base term.

The following sequent is valid:
\[
  \vdash\ve{U}:\forall \vara{X}.((I\to (a\cdot\True+b\cdot\False))\to(I\to
  (c\cdot\True+d\cdot\False))\to I\to \vara{X})\to \vara{X}.
\]
or, using a similar notation $\canon T$ for $I\to T$,
\[
  \vdash\ve{U}:\forall
  \vara{X}.(\canon{a\cdot\True+b\cdot\False}\to\canon{c\cdot\True+d\cdot\False}\to\canon{\vara
    X})\to \vara{X}.
\]
One can check that $\vdash~({\textbf{U}})~\true~:~a\cdot\True+b\cdot\False$, as
expected since it reduces to $a\cdot\true+b\cdot\false$:
\begin{align*}
  &({\textbf{U}})~\true\\
  &= \left(\lambda x.\cocanon{\left((x)\canon{a\cdot\true+b\cdot\false}\right)\canon{c\cdot\true+d\cdot\false}}\right)~\left(\lambda x.\lambda y.x\right)\\
  &= \lambda x.\left(\left(\left((x)~\left(\lambda f.a\cdot\true+b\cdot\false\right)\right)~\left(\lambda g.c\cdot\true+d\cdot\false\right)\right)~\left(\lambda x.x\right)\right)~(\lambda x.\lambda y.x)\\
  &\to (((\lambda x.\lambda y.x)~(\lambda f.a\cdot\true+b\cdot\false))~(\lambda g.c\cdot\true+d\cdot\false))~(\lambda x.x)\\
  &\to ((\lambda y.\lambda f.a\cdot\true+b\cdot\false)~(\lambda g.c\cdot\true+d\cdot\false))~(\lambda x.x)\\
  &\to (\lambda f.a\cdot\true+b\cdot\false)~(\lambda x.x)\\
  &\to a\cdot\true+b\cdot\false
\end{align*}

The Hadamard gate $\textbf{H}$ is the particular case $a=b=c=-d=\nicefrac1{\sqrt2}$.
The term $({\textbf{H}})~(\nicefrac1{\sqrt2}\cdot\true+\nicefrac1{\sqrt2}\cdot \false)$ has 
type $\True+0\cdot\False$, and reduces as follows.
\begin{align*}
  &({\textbf{H}})~\left(\nicefrac1{\sqrt2}\cdot\true+\nicefrac1{\sqrt2}\cdot \false\right)
  \ \to^*\ \left(({\textbf{H}})~\left(\nicefrac1{\sqrt2}\cdot \true\right)\right)+\left(({\textbf{H}})~\left(\nicefrac1{\sqrt2}\cdot \false\right)\right)\\
  &\to^*\  \nicefrac1{\sqrt2}\cdot(({\textbf{H}})~\true)+\nicefrac1{\sqrt2}\cdot(({\textbf{H}})~\false)\\
  &\to^*\  \nicefrac1{\sqrt2}\cdot\left(\nicefrac1{\sqrt2}\cdot\true+\nicefrac1{\sqrt2}\cdot\false\right)+\nicefrac1{\sqrt2}\cdot\left(\nicefrac1{\sqrt2}\cdot\true-\nicefrac1{\sqrt2}\cdot\false\right)\\
  &\to^*\  \nicefrac{1}{2}\cdot\true+\nicefrac{1}{2}\cdot\false + \nicefrac{1}{2}\cdot\true-\nicefrac{1}{2}\cdot\false
  \ \to^*\  \true + 0\cdot\false
\end{align*}

But we can do more than typing $2$-dimensional vectors or $2\times2$-matrices:
using the same technique we can encode vectors and matrices of any size.

\subsection{Vectors in \texorpdfstring{$n$}{n} dimensions}\label{sec:vec}
The $2$-dimensional space is represented by the span of $\lambda x_1x_2.x_1$ and
$\lambda x_1x_2.x_2$: the $n$-dimensional space is simply represented by the
span of all the $\lambda x_1\cdots{}x_n.x_i$, for $i \in
\left\{1,\dots,n\right\}$. As for the two dimensional case where
\[
  \vdash~
  \alpha_1\cdot\lambda x_1x_2.x_1 +
  \alpha_2\cdot\lambda x_1x_2.x_2
  ~:~
  \alpha_1\cdot\forall \varu{X}_1\varu{X}_2.\varu{X}_1
  +
  \alpha_2\cdot\forall \varu{X}_1\varu{X}_2.\varu{X}_2,
\]
an $n$-dimensional vector is typed with
\[
  \vdash~
  \sui{n}\alpha_i\cdot\lambda x_1\cdots{}x_n.x_i
  ~:~
  \sui{n}\alpha_i\cdot\forall \varu{X}_1\cdots{}\varu{X}_n.\varu{X}_i.
\]
We use the notations
\[
  {\ve e}_i^n = \lambda x_1\cdots{}x_n.x_i,
  \qquad
  {\ve E}_i^n = \forall \varu{X}_1\cdots{}\varu{X}_n.\varu{X}_i
\]
and write
\[
  \begin{array}{r@{~=~}l@{~=~}l}
    \left\llbracket
    \begin{array}{c}
      \alpha_1 \\
      \vdots \\
      \alpha_n
    \end{array}
    \right\rrbracket^{\textrm{term}}_{n}
    &
    \left(\begin{array}{c}
      \alpha_{1}\cdot{\ve e}_1^n\\
      +\\
      \cdots\\
      +\\
      \alpha_{n}\cdot{\ve e}_n^n
    \end{array}\right)
    &
    \sum\limits_{i=1}^{n}\alpha_i\cdot {\ve e}_i^n
    \\[4em]
    \left\llbracket
    \begin{array}{c}
      \alpha_1 \\
      \vdots \\
      \alpha_n
    \end{array}
    \right\rrbracket^{\textrm{type}}_{n}
    &
    \left(\begin{array}{c}
      \alpha_{1}\cdot{\ve E}_1^n \\
      +\\
      \cdots \\
      +\\
      \alpha_{n}\cdot{\ve E}_n^n
    \end{array}\right)
    &
    \sum\limits_{i=1}^{n}\alpha_i\cdot {\ve E}_i^n
  \end{array}
\]

\subsection{\texorpdfstring{$n\times m$}{nxm} matrices}\label{sec:mat}
Once the representation of vectors is chosen, it is easy to generalise the
representation of $2\times 2$ matrices to the $n\times m$ case. Suppose that the
matrix $U$ is of the form
\[
  U =
  \left(
  \begin{array}{ccc}
    \alpha_{11} & \cdots & \alpha_{1m}
    \\
    \vdots && \vdots
    \\
    \alpha_{n1} & \cdots & \alpha_{nm}
  \end{array}
  \right),
\]
then its representation is
\[
  \left\llbracket
  U
  \right\rrbracket^{\textrm{term}}_{n\times m}
  ={~~~~}
  \lambda x.
  \left\{
  \left(
  \cdots
  \left(
  (x)
  \left[
  \begin{array}{c}
    \alpha_{11}\cdot{\ve e}_1^n
    \\+\\
    \cdots
    \\+\\
    \alpha_{n1}\cdot{\ve e}_n^n
  \end{array}
  \right]
  \right)
  \cdots
  \left[
  \begin{array}{c}
    \alpha_{1m}\cdot{\ve e}_1^n
    \\+\\
    \cdots
    \\+\\
    \alpha_{nm}\cdot{\ve e}_n^n
  \end{array}
  \right]
  \right)
  \right\}\qquad
\]
and its type is
\[
  \left\llbracket
  U
  \right\rrbracket^{\textrm{type}}_{n\times m}
  ={~~~~}
  \forall\vara{X}.
  \left(
  \left[
  \begin{array}{c}
    \alpha_{11}\cdot{\ve E}_1^n
    \\+\\
    \cdots
    \\+\\
    \alpha_{n1}\cdot{\ve E}_n^n
  \end{array}
  \right]\to
  \cdots
  \to
  \left[
  \begin{array}{c}
    \alpha_{1m}\cdot{\ve E}_1^n
    \\+\\
    \cdots
    \\+\\
    \alpha_{nm}\cdot{\ve E}_n^n
  \end{array}
  \right]\to
  [~\vara{X}~]
  \right)
  \to
  \vara{X},
\]
that is, an almost direct encoding of the matrix $U$.

\section{Subject Reduction}\label{ch:sr}
Recovering the Subject Reduction property constitutes
the main focus of this work. In the original system, the Group F was the group
of rules that required special consideration and did not satisfy the property in
full.

The proof of the Subject Reduction theorem requires some intermediate results
that we develop in this section. 
\conf{We give enough details for reproducing all the proofs. The full detailed proofs are given in the 51-pages long arXiv'ed version at~\cite{paper:arxiv}.}
\arxiv{The full proofs are given in the Appendix~\ref{app:proofsSR}.}

\arxiv{We use the standard notation for equivalence classes: $[x]$ identifies the class
from which $x$ is a representative.} Given a type derivation tree $\pi$, we may
refer to it simply by its last sequent, $\pi = \Gamma \vdash \ve{t}: T$, when
there is no ambiguity. We also write $size(\pi)$ for the number of sequents
present on the tree $\pi$.

The following lemma gives a canonical form for types.
\begin{lemma}[Characterisation of types~\citeCharTypes]\label{lem:sr:typecharact}
  For any type $T$, there exist $n,m\in\mathbb{N}$, $\alpha_1,\dots,\alpha_n$,
  $\beta_1,\dots,\beta_m\in\Sc$, distinct unit types $U_1,\dots,U_n$ and
  distinct general variables $\vara{X}_1,\dots,\vara{X}_m$ such that \(
  T\equiv\sui{n}\alpha_i\cdot U_i+\suj{m}\beta_j\cdot\vara{X}_j \).
  \conf{\qed}
\end{lemma}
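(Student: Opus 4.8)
The plan is a structural induction on the type $T$, using only the generating equations of Figure~\ref{fig:typeequiv} together with reflexivity, symmetry, transitivity and congruence. A useful preliminary observation is that none of those equations ever relates a unit type to a general variable; hence in a candidate canonical form $\sui{n}\alpha_i\cdot U_i+\suj{m}\beta_j\cdot\vara{X}_j$ the ``unit part'' and the ``variable part'' can be manipulated independently, and ``distinct'' may be read up to $\equiv$. For the base cases: if $T=U$ is a unit type, take $n=1$, $m=0$, $\alpha_1=1$, $U_1=U$, so that $T\equiv 1\cdot U$ by the equation $1\cdot U\equiv U$; if $T=\vara{X}$ is a general variable, take $n=0$, $m=1$, $\beta_1=1$, $\vara{X}_1=\vara{X}$, so that $T\equiv 1\cdot\vara{X}$.

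For $T=\alpha\cdot T'$, apply the induction hypothesis to $T'$ to get $T'\equiv\sui{n}\alpha_i\cdot U_i+\suj{m}\beta_j\cdot\vara{X}_j$ with the $U_i$ (resp.\ the $\vara{X}_j$) distinct. Distributing $\alpha$ inwards with repeated use of $\alpha\cdot(R+S)\equiv\alpha\cdot R+\alpha\cdot S$ and then flattening nested scalars with $\alpha\cdot(\beta\cdot R)\equiv(\alpha\times\beta)\cdot R$ yields $T\equiv\sui{n}(\alpha\times\alpha_i)\cdot U_i+\suj{m}(\alpha\times\beta_j)\cdot\vara{X}_j$; the underlying lists of unit types and of variables are unchanged, hence still distinct. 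For $T=T_1+T_2$, apply the induction hypothesis to $T_1$ and $T_2$ to get canonical forms $C_1,C_2$, so $T\equiv C_1+C_2$; then, using commutativity and associativity of $+$, group together all summands carrying the same unit type (up to $\equiv$) and, likewise, all summands carrying the same general variable, and collapse each resulting block $\alpha\cdot R+\beta\cdot R$ via $\alpha\cdot R+\beta\cdot R\equiv(\alpha+\beta)\cdot R$ (first rewriting $\beta\cdot R'\equiv\beta\cdot R$ by congruence from $R'\equiv R$ when the two occurrences are only equivalent, not identical). Iterating until no unit type and no variable occurs twice yields the required canonical form for $T$.

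The only step needing real care is the $T_1+T_2$ case: one must check that the regrouping uses solely the commutativity/associativity equations, and that the collapsing terminates — which it does, since each application of $\alpha\cdot R+\beta\cdot R\equiv(\alpha+\beta)\cdot R$ strictly decreases the finite number of summands. Note that zero coefficients are kept throughout (there is no $\ve 0$ to absorb them), and that the degenerate outputs — a single summand $\alpha_1\cdot U_1$ or $\beta_1\cdot\vara{X}_1$ — are precisely the cases $n=1,m=0$ and $n=0,m=1$ of the statement. Everything else is an immediate reading off of the generating equations.
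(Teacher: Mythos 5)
Your proof is correct and follows essentially the same route as the paper's: a structural induction on $T$ with the same four cases, distributing scalars in the $\alpha\cdot T'$ case and merging equivalent summands via $\alpha\cdot R+\beta\cdot R\equiv(\alpha+\beta)\cdot R$ in the sum case. The extra remarks on termination of the collapsing and on reading ``distinct'' up to $\equiv$ are welcome but do not change the argument.
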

\arxiv{\begin{proof}
  Structural induction on $T$.
  The full details are given in Appendix~\ref{app:proofsSR}.
\end{proof}}

\arxiv{Our system admits weakening and strengthening, as stated by the following lemma.
\begin{lemma}[Weakening and Strengthening]\label{lem:sr:weakening}
  Let $\ve t$ be such that $x\not\in\FV{\ve t}$. Then $\Gamma\vdash\ve t:T$ is
  derivable if and only if $\Gamma,x:U\vdash\ve t:T$ is derivable.
\end{lemma}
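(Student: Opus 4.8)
The plan is to prove the two implications together, each by a routine induction on the typing derivation; the whole content of the argument is that every typing rule commutes with inserting, resp.\ deleting, a declaration $x:U$ whose variable is fresh for the subject of the sequent. Before starting, I would normalise the given derivation with the usual variable conventions at both levels: $\alpha$-rename the $\lambda$-bound term variables so that none of them equals $x$, and $\alpha$-rename the $\forall$-quantified type variable introduced by each $\forall_I$ step so that it does not occur in $U$. The latter renaming is sound precisely because the side condition of $\forall_I$ guarantees that the introduced variable is absent from the context of that step, hence it may be replaced by any fresh variable without changing the derived type (types being taken, as usual, up to $\alpha$-equivalence of bound type variables, beyond the equivalence $\equiv$ of Figure~\ref{fig:typeequiv}). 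After this normalisation every context occurring in the transformed derivation---with $x:U$ inserted, resp.\ removed---is still well formed.

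For the weakening direction, given a derivation of $\Gamma\vdash\ve t:T$ with $x\notin\FV{\ve t}$, I would proceed by cases on the last rule, putting $x:U$ into every context. An axiom stays an axiom. For $\equiv$, $1_E$, $S$ and $\forall_E$ the subject is unchanged, and for $\to_E$ and $+_I$ each premise has a subterm of $\ve t$ as its subject, so in all these cases the relevant subterm $\ve s$ satisfies $\FV{\ve s}\subseteq\FV{\ve t}$, the induction hypothesis applies to every premise, and---none of these rules constraining the context---the same rule is reapplied verbatim. For $\to_I$ the premise context is $\Gamma,y:V$ with $y\neq x$ by the normalisation, so $x$ is also fresh for the body; since contexts are sets, $\Gamma,y:V,x:U=\Gamma,x:U,y:V$, and the induction hypothesis followed by $\to_I$ closes the case. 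For $\forall_I$ the side condition $X\notin\FV{\Gamma}$ extends to $X\notin\FV{\Gamma,x:U}=\FV{\Gamma}\cup\FV{U}$ exactly because the normalisation gives $X\notin\FV{U}$; so the induction hypothesis on the premise followed by $\forall_I$ suffices.

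The strengthening direction is the mirror image: one deletes $x:U$ from every context, applies the induction hypothesis to the premises, and reapplies each rule. The two things to observe are that in the $ax$ base case $\Gamma,x:U\vdash y:V$ the hypothesis $x\notin\FV{\ve t}=\{y\}$ forces $x\neq y$, whence $y:V\in\Gamma$ and $\Gamma\vdash y:V$ by $ax$; and that the $\forall_I$ side condition is inherited for free, the context only shrinking. I expect the $\forall_I$ case---the potential clash between its bound type variable and the type $U$ one weakens with---to be the single real subtlety, and it is dispatched once and for all by the up-front $\alpha$-renaming; the rest is bookkeeping.
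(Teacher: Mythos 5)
Your proof is correct and follows the same route as the paper, which dispatches this lemma with a one-line ``straightforward induction on the type derivation''; you simply make explicit the bookkeeping that makes the induction go through. The one point you rightly flag as non-trivial---the potential clash between the $\forall_I$-bound type variable and $\FV{U}$, resolved by an up-front $\alpha$-renaming of the derivation---is a genuine subtlety the paper leaves implicit, and your treatment of it is sound.
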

\begin{proof}
  By a straightforward induction on the type derivation.
\end{proof}}

{The following two lemmas present some properties of the equivalence relation.
\begin{lemma}[Equivalence between sums of distinct elements (up to
  $\equiv$)~\citeEquivSumScalars]\label{lem:sr:equivdistinctscalars}
  Let $U_1,\dots,U_n$ be a set of distinct (not equivalent) unit types, and let
  $V_1,\dots,V_m$ be also a set distinct unit types. If $\sui{n}\alpha_i\cdot
  U_i\equiv\suj{m}\beta_j\cdot V_j$, then $m=n$ and there exists a permutation
  $p$ of $m$ such that $\forall i$, $\alpha_i=\beta_{p(i)}$ and $U_i\equiv
  V_{p(i)}$.
  \conf{\qed}
\end{lemma}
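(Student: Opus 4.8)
The plan is to analyse the equivalence $\equiv$ as a congruence generated by the axioms of Figure~\ref{fig:typeequiv}, and to extract a well-defined invariant that assigns, to each general type in ``distinct-unit-type'' form, its multiset of $(\text{scalar},\text{unit type up to }\equiv)$ pairs. First I would observe that among the six axioms, the only ones that can act at the outermost level of a sum $\sui{n}\alpha_i\cdot U_i$ are associativity and commutativity of $+$ (which merely reorder the summands), the rule $1\cdot T\equiv T$ and $\alpha\cdot T+\beta\cdot T\equiv(\alpha+\beta)\cdot T$ and $\alpha\cdot(\beta\cdot T)\equiv(\alpha\times\beta)\cdot T$ (which rewrite the scalar decorations), and $\alpha\cdot T+\alpha\cdot R\equiv\alpha\cdot(T+R)$ (which, read left to right, would create a non-unit summand $T+R$). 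The key point is that when all the $U_i$ are \emph{distinct} unit types, the factorisation rule $\alpha\cdot U_i+\beta\cdot U_i$ never applies (its two sides would have to be equivalent unit types, contradicting distinctness), and the distribution rule cannot apply at top level without leaving the class of sums-of-scalar-times-unit-types — so modulo the bookkeeping rules for scalars, the representation is rigid.

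The cleanest way to make this precise, which is the route I would take, is to define for an arbitrary type $T$ its \emph{canonical multiset} $\mu(T)$, obtained from the characterisation of Lemma~\ref{lem:sr:typecharact}: write $T\equiv\sui{n}\alpha_i\cdot U_i+\suj{m}\beta_j\cdot\vara X_j$ in the canonical form guaranteed there, and set $\mu(T)$ to be the multiset $\{(\alpha_i,[U_i])\}_i\uplus\{(\beta_j,\vara X_j)\}_j$, where $[U_i]$ is the $\equiv$-class of $U_i$. I would then prove that $\mu$ is \emph{invariant under $\equiv$}: it suffices to check that each of the six generating axioms, applied in any subterm context, leaves $\mu$ unchanged. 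This is a finite case analysis; the axiom $\alpha\cdot T+\beta\cdot T\equiv(\alpha+\beta)\cdot T$ collapses $(\alpha,c),(\beta,c)$ into $(\alpha+\beta,c)$ on both readings, associativity/commutativity do nothing to a multiset, and so on. Since $\mu$ is $\equiv$-invariant and the hypothesis gives $\sui{n}\alpha_i\cdot U_i\equiv\suj{m}\beta_j\cdot V_j$ (both already in canonical form with no variable summands), we get the multiset equality $\{(\alpha_i,[U_i])\}_{i=1}^n=\{(\beta_j,[V_j])\}_{j=1}^m$, which is exactly the existence of a permutation $p$ with $n=m$, $\alpha_i=\beta_{p(i)}$ and $U_i\equiv V_{p(i)}$.

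The main obstacle is establishing that $\mu$ is genuinely well defined and $\equiv$-invariant, i.e.\ that two different canonical-form representatives of the same type yield the same multiset; this is really the content of the lemma restated one level up, so care is needed not to argue circularly. I would break the potential circularity by defining $\mu$ directly by structural recursion on the \emph{syntax} of $T$ — $\mu(U)=\{(1,[U])\}$ for $U$ unit, $\mu(\vara X)=\{(1,\vara X)\}$, $\mu(\alpha\cdot T)=\alpha\cdot\mu(T)$ (scaling every scalar component, with the convention $0\cdot$ kept symbolically since $\Sc$ is only a ring), and $\mu(T+R)=\mu(T)\uplus\mu(R)$ followed by merging entries with the same second component by adding scalars — and then proving $\equiv$-invariance by induction on the derivation of $T\equiv R$, checking the six axioms plus the congruence rules. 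A secondary subtlety is that the second component of the pairs lives in unit types \emph{modulo $\equiv$}, so I must check that $[U]$ is stable, but this follows because $\equiv$ restricted to unit types is generated only by congruence closure of the same axioms inside $U\to T$ and $\forall X.U$, never changing the outermost unit-type shape in a way that matters for the merge. Once invariance is in hand the conclusion is immediate.
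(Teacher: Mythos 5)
Your proposal is correct and follows essentially the same route as the paper: the paper defines a ``context vector'' $v_F(\vec U)$, a partial map from $\equiv$-classes of unit types to scalars, defined by recursion on the algebraic structure of the type (with pointwise addition merging entries in the same class and scalar multiplication acting on all entries), proves it invariant under the equivalence by induction on the chain of elementary steps, and reads the permutation off the resulting equality of partial maps. Your canonical multiset $\mu(T)$ with merge-on-union is the same invariant in different clothing, and your handling of the potential circularity (defining $\mu$ by structural recursion on syntax, with $\equiv$-classes only appearing as labels) matches the paper's treatment.
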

\arxiv{\begin{proof}
  The full details are given in Appendix~\ref{app:proofsSR}.
\end{proof}}

\begin{lemma}[Equivalences $\forall$~\citeEquivForall]\label{lem:sr:equivforall}
  Let $U_1,\dots,U_n$ be a set of distinct (not equivalent) unit types and let
  $V_1,\dots,V_n$ be also a set of distinct unit types.
  \begin{enumerate}
  \item\label{ap:it:equivforall1} $\sui{n}\alpha_i\cdot
    U_i\equiv\suj{m}\beta_j\cdot V_j$ iff $\sui{n}\alpha_i\cdot\forall
    X.U_i\equiv\suj{m}\beta_j\cdot\forall X.V_j$.
  \item\label{ap:it:equivforall2} If $\sui{n}\alpha_i\cdot\forall
    X.U_i\equiv\suj{m}\beta_j\cdot V_j$ then $\forall V_j,\exists
    W_j~/~V_j\equiv\forall X.W_j$.
  \item\label{ap:it:equivforall3} If $T\equiv R$ then $T[A/X]\equiv R[A/X]$.
  \end{enumerate}
\end{lemma}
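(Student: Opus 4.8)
The natural approach is to prove the three items by induction on the structure of the equivalence derivation, i.e. on the number of applications of the axioms in Figure~\ref{fig:typeequiv} used to witness the equivalence, in combination with the canonical form provided by Lemma~\ref{lem:sr:typecharact} (Characterisation of types) and the rigidity result of Lemma~\ref{lem:sr:equivdistinctscalars} (Equivalence between sums of distinct elements).

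For item~\ref{ap:it:equivforall1}, the ``only if'' direction is straightforward: applying $\forall X.(-)$ to each unit type and then congruence of $\equiv$ under $+$ and $\cdot$ turns a witnessing chain for $\sui{n}\alpha_i\cdot U_i\equiv\suj{m}\beta_j\cdot V_j$ into one for $\sui{n}\alpha_i\cdot\forall X.U_i\equiv\suj{m}\beta_j\cdot\forall X.V_j$; one only needs that $\forall X.U_i$ and $\forall X.V_j$ are again unit types, which is immediate from the grammar. For the ``if'' direction I would first observe that the types $\forall X.U_1,\dots,\forall X.U_n$ are pairwise distinct (non-equivalent) because the $U_i$ are, and likewise for the $V_j$ — this uses that $\equiv$ on unit types respects the outermost $\forall$-binder, which itself follows from a small analysis of the equivalence axioms (none of them can introduce or remove a leading $\forall$). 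Then $\sui{n}\alpha_i\cdot\forall X.U_i\equiv\suj{m}\beta_j\cdot\forall X.V_j$ is an equivalence between two sums of distinct unit types, so Lemma~\ref{lem:sr:equivdistinctscalars} gives $m=n$, a permutation $p$ with $\alpha_i=\beta_{p(i)}$ and $\forall X.U_i\equiv\forall X.V_{p(i)}$; peeling the binder off the last equivalence (again by the axiom analysis) yields $U_i\equiv V_{p(i)}$, hence $\sui{n}\alpha_i\cdot U_i\equiv\suj{m}\beta_j\cdot V_j$.

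For item~\ref{ap:it:equivforall2}, I would use the canonical form: write $\suj{m}\beta_j\cdot V_j$ in the canonical shape of Lemma~\ref{lem:sr:typecharact} (discarding the general-variable part, since the left-hand side has none, which is itself justified by Lemma~\ref{lem:sr:equivdistinctscalars} applied after grouping). Since $\sui{n}\alpha_i\cdot\forall X.U_i$ is a sum of distinct unit types each of which has a leading $\forall X$, Lemma~\ref{lem:sr:equivdistinctscalars} forces each $V_j$ to be equivalent to some $\forall X.U_i$; taking $W_j := U_i$ (up to $\equiv$) gives $V_j\equiv\forall X.W_j$. The key auxiliary fact here is again that equivalence cannot change the outermost connective of a unit type, so a unit type equivalent to $\forall X.U_i$ must itself be of the form $\forall X.W_j$. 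Item~\ref{ap:it:equivforall3} is the simplest: a routine induction on the derivation of $T\equiv R$, checking that each equivalence axiom in Figure~\ref{fig:typeequiv} is stable under the substitution $[A/X]$ (it is, since substitution commutes with $+$, $\cdot$, and the reordering/associativity rearrangements, and $A$ is substituted uniformly on both sides), together with congruence closure.

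The main obstacle I anticipate is the repeatedly-used lemma that $\equiv$ preserves the outermost structure of a unit type — in particular that no chain of equivalence axioms can turn $\forall X.U$ into a unit type that is not syntactically a $\forall$-type. Formally this requires showing the axioms of Figure~\ref{fig:typeequiv} only ever act on the ``linear-combination skeleton'' and never reach inside a unit type's head constructor; one clean way is to define a normal form for types (canonical form of Lemma~\ref{lem:sr:typecharact}) and argue that two unit types are equivalent iff their canonical forms — which for a unit type is the type itself — coincide up to the congruence generated inside subterms, so the head constructor is an invariant. Everything else is bookkeeping with permutations and congruence.
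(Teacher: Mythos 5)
Your proposal is correct and follows essentially the same route as the paper's proof: both reduce items~\ref{ap:it:equivforall1} and~\ref{ap:it:equivforall2} to Lemma~\ref{lem:sr:equivdistinctscalars} to obtain the permutation matching of scalars and unit types, and prove item~\ref{ap:it:equivforall3} by induction on the derivation of $T\equiv R$. Your explicit identification of the auxiliary fact that the axioms of Figure~\ref{fig:typeequiv} only act on the linear-combination skeleton and so cannot introduce or remove a leading $\forall$ is a point the paper's proof uses silently; making it explicit is a welcome refinement rather than a different approach.
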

\arxiv{\begin{proof}
  The full details are given in Appendix~\ref{app:proofsSR}.
\end{proof}}
}

We follow Barendregt's proof of subject reduction for System
F~\cite{Barendregt92}, with the corrections first presented
at~\cite{stackexchange,ArrighiDiazcaroLMCS12}. First, we introduce a relation
between types, when these types are valid for the same term in the same context.

\begin{definition}\label{def:order} For any types $T, R$, and any context
  $\Gamma$ such that for some term $\ve{t}$, the sequent $\Gamma\vdash\ve t:T$ can be derived from the sequent $\Gamma\vdash\ve t:R$, without extra hypothesis, then
  \begin{enumerate}
  \item If $X\notin\FV{\Gamma}$, write $R\prec_{X,\Gamma} T$ if either:
    \begin{itemize}
    \item $R\equiv\sui{n}\alpha_i\cdot U_i$ and $T\equiv\sui{n}\alpha_i\cdot
      \forall X.U_i$,\quad or
    \item $R\equiv\sui{n}\alpha_i\cdot \forall X.U_i$ and $T\equiv
      \sui{n}\alpha_i\cdot U_i[A/X]$.
    \end{itemize}
  \item If $\V$ is a set of type variables such that
    $\V\cap\FV{\Gamma}=\emptyset$, we define $\preceq_{\V,\Gamma}$ inductively:
    \begin{itemize}
    \item If $R\prec_{X,\Gamma} T$, then $R\preceq_{\V \cup \{X\},\Gamma} T$.
    \item If $\V_1,\V_2\subseteq\V$, $S\preceq_{\V_1,\Gamma} R$ and
      $R\preceq_{\V_2,\Gamma} T$, then $S\preceq_{\V_1\cup\V_2,\Gamma} T$.
    \item If $R \equiv T$, then $R\preceq_{\V,\Gamma} T$.
    \end{itemize}
    Note that these relations only predicate on the types and the context, thus
    they hold for any term $\ve t$.
  \end{enumerate}
\end{definition}

\arxiv{\begin{example}
  Consider the following derivation.
  \[
    \prftree[r]{$\equiv$}
    {
      \prftree[r]{$\forall_{\vara{I}}$}
      {
	\prftree[r]{$\forall_{\varu{E}}$}
	{
	  \prftree[r]{$\forall_{\varu{I}}$}
	  {
	    \prftree[r]{$\equiv$}{\Gamma\vdash\ve t:T}
	    {\prfassumption{T\equiv \sui{n}\alpha_i\cdot U_i}}
	    {\Gamma\vdash\ve t:\sui{n}\alpha_i\cdot U_i}
	  }
	  {\prfassumption{\varu{X}\notin\FV{\Gamma}}}
	  {\Gamma\vdash\ve t:\sui{n}\alpha_i\cdot \forall\varu{X}.U_i}
	}
	{\Gamma\vdash\ve t:\sui{n}\alpha_i\cdot U_i[V/\varu{X}]}
      }
      {\prfassumption{\vara{Y}\notin\FV{\Gamma}}}
      {\Gamma\vdash\ve t:\sui{n}\alpha_i\cdot \forall\vara{Y}.U_i[V/\varu{X}]}
    }
    {\prfassumption{\sui{n}\alpha_i\cdot \forall\vara{Y}.U_i[V/\varu{X}]\equiv R}} 
    {\Gamma \vdash \ve{t}: R}
  \]
  Then $R\ssubt_{\{\varu{X},\vara{Y}\},\Gamma} T$.
\end{example}}
\conf{\begin{example}
  If there exists $\ve t$ such that $\Gamma\vdash\ve t:\sui{n}\alpha_i\cdot U_i$, then
  $$\sui{n}\alpha_i\cdot U_i\quad\ssubt_{\{\varu{X},\vara{Y}\},\Gamma}\quad\sui{n}\alpha_i\cdot \forall\vara{Y}.U_i[V/\varu{X}]$$
\end{example}}


\begin{lemma}\label{lem:sr:sorderhasnofv}
  For any unit type $U \not\equiv \forall X. V$, if $U \ssubt_{\V, \Gamma}
  \forall X. V$, then $X \notin \FV{\Gamma}$.
\end{lemma}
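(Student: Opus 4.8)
The plan is to prove the statement by induction on the derivation of $U\ssubt_{\V,\Gamma}\forall X.V$, following the three clauses that define $\preceq_{\V,\Gamma}$ in Definition~\ref{def:order}(2). It is convenient to strengthen the statement slightly for the induction: for every unit type $U$ and every $W$ with $W\equiv\forall X.V$, if $U\ssubt_{\V,\Gamma}W$ and $U\not\equiv\forall X.V$, then $X\notin\FV\Gamma$. This added flexibility in the target type is exactly what lets the transitivity clause close.

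The clause $R\equiv T$ would give $U\equiv W\equiv\forall X.V$, contradicting the hypothesis $U\not\equiv\forall X.V$, so this case does not occur. For the transitivity clause, say $U\ssubt_{\V_1,\Gamma}R$ and $R\ssubt_{\V_2,\Gamma}W$ with $\V_1,\V_2\subseteq\V$: if $R\not\equiv\forall X.V$, the induction hypothesis applied to $R\ssubt_{\V_2,\Gamma}W$ yields $X\notin\FV\Gamma$; if instead $R\equiv\forall X.V$, then, since $U\not\equiv\forall X.V$, the induction hypothesis applied to $U\ssubt_{\V_1,\Gamma}R$ yields the same conclusion.

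The real work is the base clause $R\prec_{Z,\Gamma}T$, where $U=R$, $W=T$, and $\V=\V_0\cup\{Z\}$; here Definition~\ref{def:order}(1) already provides $Z\notin\FV\Gamma$, and one of the two displayed forms holds. Since $U$ is a unit type, Lemmas~\ref{lem:sr:typecharact} and~\ref{lem:sr:equivdistinctscalars} force the sum $\sui{n}\alpha_i\cdot U_i$ occurring on the $U$-side of each form to reduce to a single summand with scalar $1$. In the first form this yields (via Lemma~\ref{lem:sr:equivforall}) $W\equiv\forall Z.U$, hence $\forall X.V\equiv\forall Z.U$; since $\equiv$ is generated by the module laws of Figure~\ref{fig:typeequiv}, which never rewrite the outermost binder of a unit type, we get $X=Z$, so $X\notin\FV\Gamma$. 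In the second form we have $U\equiv\forall Z.U_1$ and $W\equiv U_1\subst{A}{Z}\equiv\forall X.V$; a case analysis on the shape of the unit type $U_1$ — again using that $\equiv$ preserves the outermost binder, and Lemma~\ref{lem:sr:equivforall} to track the quantifiers across the substitution $\subst{A}{Z}$ — shows that, because $U\not\equiv\forall X.V$, the head binder $\forall X$ must be the eliminated variable $Z$, so once more $X=Z\notin\FV\Gamma$.

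I expect the main obstacle to be precisely this second form of the base clause: one must argue that a $\forall_E$-style step cannot ``expose'' a head binder $\forall X$ on $\forall X.V$ unless that binder is the very variable being eliminated, which requires combining the characterisation of unit types under substitution with the equivalence lemmas for $\forall$ and with the hypothesis $U\not\equiv\forall X.V$. Everything else — the vacuous $\equiv$ case, the transitivity bookkeeping, and the first form of the base clause — is routine.
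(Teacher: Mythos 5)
Your overall strategy (induction on the derivation of $\preceq_{\V,\Gamma}$, strengthened so that the right-hand side need only be equivalent to $\forall X.V$) is reasonable, and the $\equiv$ and transitivity clauses are handled correctly modulo one bookkeeping point: the intermediate type $R$ in the transitivity clause is not a priori a unit type, so to apply the induction hypothesis to $R\ssubt_{\V_2,\Gamma}W$ you must relax ``unit type'' to ``equivalent to a unit type'' (which does hold, since each $\prec$ step preserves the single-summand, scalar-$1$ decomposition of Lemma~\ref{lem:sr:typecharact} by Lemma~\ref{lem:sr:equivdistinctscalars}).

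The genuine problem is the second form of the base clause. Your claim that, because $U\not\equiv\forall X.V$, the head binder of $U_1\subst{A}{Z}$ must be the eliminated variable $Z$ is false. Take $U=\forall\varu{Z}.\forall\varu{X}.(\varu{X}\to\varu{Z})$, so that $U\prec_{\varu{Z},\Gamma}\forall\varu{X}.(\varu{X}\to A)$ by the second form; here $U$ is not equivalent to $\forall\varu{X}.(\varu{X}\to A)$, yet the exposed head binder is $\varu{X}\neq\varu{Z}$, and the definition of $\prec_{\varu{Z},\Gamma}$ only supplies $\varu{Z}\notin\FV{\Gamma}$ --- nothing in it prevents $\varu{X}$ from occurring free in $\Gamma$. (The subcase $U_1=\varu{Z}$ with $A$ itself a quantified type is equally problematic.) So in this clause the conclusion $X\notin\FV{\Gamma}$ cannot be extracted from the side conditions of $\prec$ by the case analysis you describe; it holds only under the standing convention that bound type variables are chosen, or renamed, apart from the free variables of the context. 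That convention is essentially what the paper relies on: its own proof is a one-line appeal to the definition of $\ssubt$, with no case analysis of this kind. As written, your argument for the crucial case does not go through.
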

\arxiv{\begin{proof}
  By definition of $\ssubt$.
\end{proof}}

The following lemma states that if two arrow types are ordered, then they are
equivalent up to some substitution.

\begin{lemma}[Arrows comparison]\label{lem:sr:arrowscomp}
  $V \to R\ssubt_{\V,\Gamma} \forall\vec X.(U\to T) $, then $U\to T\equiv(V\to
  R)[\vec{A}/\vec{Y}]$, with $\vec Y\notin \FV{\Gamma}$.
\end{lemma}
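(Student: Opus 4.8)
The plan is to prove the statement by induction on the derivation of $\ssubt_{\V,\Gamma}$, following the inductive definition in Definition~\ref{def:order}. The goal is: if $V\to R\ssubt_{\V,\Gamma}\forall\vec X.(U\to T)$, then $U\to T\equiv (V\to R)[\vec A/\vec Y]$ for some substitution with $\vec Y\notin\FV{\Gamma}$. The base cases are the two ways $\preceq_{\V,\Gamma}$ can be generated directly: the equivalence case $R'\equiv T'$, and the single-step case $R'\prec_{X,\Gamma} T'$. The equivalence case is easy: if $V\to R\equiv\forall\vec X.(U\to T)$, then since $V\to R$ is an arrow type (hence a unit type that is not a $\forall$), Lemma~\ref{lem:sr:equivforall}(\ref{ap:it:equivforall2}) forces $\vec X$ to be empty, and then $V\to R\equiv U\to T$, which is the claim with an empty substitution.

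For the single-step case $V\to R\prec_{X,\Gamma}\forall\vec X.(U\to T)$, I would unfold the two alternatives of $\prec_{X,\Gamma}$ from Definition~\ref{def:order}(1), using Lemma~\ref{lem:sr:typecharact} to write each side in canonical form (here the sums are trivial since we are comparing unit types, so $n=1$ and $\alpha_1=1$). In the first alternative, $V\to R\equiv U'$ and $\forall\vec X.(U\to T)\equiv\forall X.U'$ for some unit type $U'$; the second equivalence together with Lemma~\ref{lem:sr:equivforall} gives $\vec X=X$ (a single variable) and $U\to T\equiv U'\equiv V\to R$, again the empty-substitution case (here $\vec Y$ is empty, so $\vec Y\notin\FV{\Gamma}$ trivially). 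In the second alternative, $V\to R\equiv\forall X.U'$ for some unit $U'$, and $\forall\vec X.(U\to T)\equiv U'[A/X]$; but $V\to R$ being an arrow cannot be $\equiv$ to a $\forall$-type by Lemma~\ref{lem:sr:equivforall}(\ref{ap:it:equivforall2}) (applied with the list consisting of the single type $V\to R$), so this alternative is vacuous. Lemma~\ref{lem:sr:sorderhasnofv} may be needed to record $X\notin\FV{\Gamma}$.

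The inductive (transitivity) case is the crux: $V\to R\preceq_{\V_1,\Gamma} S'$ and $S'\preceq_{\V_2,\Gamma}\forall\vec X.(U\to T)$. By the induction hypothesis applied to the first sub-derivation, I need $S'$ to be an arrow type in order to conclude $S'\equiv (V\to R)[\vec B/\vec Z]$ with $\vec Z\notin\FV{\Gamma}$; and then applying the induction hypothesis to the second sub-derivation (now with $S'$ in the role of the left arrow) gives $U\to T\equiv S'[\vec C/\vec W]\equiv (V\to R)[\vec B/\vec Z][\vec C/\vec W]$ with $\vec W\notin\FV{\Gamma}$, and composing the substitutions (and using that the $\vec Z$ are disjoint from $\FV{\Gamma}$ so the substitutions behave well, via Lemma~\ref{lem:sr:equivforall}(\ref{ap:it:equivforall3}) to push $\equiv$ through substitution) yields the result. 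The obstacle is precisely showing the intermediate type $S'$ is again an arrow: this should follow from an auxiliary observation that $\prec_{X,\Gamma}$ (and hence $\preceq$) relates an arrow only to an arrow or to a $\forall$-prefixed arrow — i.e., "being an arrow up to stripping $\forall$'s and $\equiv$" is preserved — which one reads off directly from the canonical forms in Definition~\ref{def:order}(1) combined with Lemma~\ref{lem:sr:typecharact} and Lemma~\ref{lem:sr:equivforall}. I would state and prove this shape-preservation fact as a small sub-lemma (or fold it into the induction by strengthening the hypothesis), and then the transitivity case goes through by the chaining described above.
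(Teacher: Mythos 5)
Your proposal is correct and follows essentially the same route as the paper: the paper's proof introduces a map $(\,\cdot\,)^\circ$ that strips $\forall$-quantifiers (and is the identity on arrows) and proves the strengthened statement that $V\ssubt_{\V,\Gamma}\forall\vec X.U$ implies $U^\circ\equiv V^\circ[\vec A/\vec X]$, which is exactly the ``shape-preservation / strengthened induction hypothesis'' you correctly identify as the crux of the transitivity case. The two alternatives of $\prec_{X,\Gamma}$ and the composition of substitutions are then handled just as you describe.
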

\arxiv{\begin{proof}
  Let $(~\cdot~)^\circ$ be a map from types to types defined as follows,
  \begin{align*}
    X^\circ &= X \\
    (U\to T)^\circ &= U\to T \\
    (\forall X.T)^\circ &= T^\circ \\
    (\alpha\cdot T)^\circ &=\alpha\cdot T^\circ\\
    (T+R)^\circ &=T^\circ+R^\circ
  \end{align*}

  First we prove that for any types $V, U$, there exists $\vec A$ such that if
  $V \ssubt_{\V,\Gamma} \forall\vec X.U$, then $U^\circ\equiv V^\circ[\vec
  A/\vec X]$. Therefore, we have $U\to T\equiv(U\to T)^\circ\equiv(V\to
  R)^\circ[\vec A/\vec X]=(V\to R)[\vec A/\vec X]$. The full details of the
  proof are given in the Appendix~\ref{app:proofsSR}.
\end{proof}}

\arxiv{Five generation lemmas are required: two classical ones, for applications
(Lemma~\ref{lem:sr:app}) and abstractions (Lemma~\ref{lem:sr:abs}); and three
new ones for scalars (Lemma~\ref{lem:sr:scalars}), sums
(Lemma~\ref{lem:sr:sums}) and basis terms (Lemma~\ref{lem:sr:basevectors}).

\begin{lemma}[Scalars]\label{lem:sr:scalars}
  For any context $\Gamma$, term $\ve t$, type $T$, if $\pi = \Gamma\vdash
  \alpha\cdot\ve{t}: T$, there exist $R_1, \dots, R_n$, $\alpha_1, \dots,
  \alpha_n$ such that
  \begin{itemize}
  \item $T \equiv \sui{n}\alpha_i \cdot R_i$.
  \item $\pi_i = \Gamma \vdash \ve{t}: R_i$, with $size(\pi) > size(\pi_i)$, for
    $i \in \{1, \dots, n\}$.
  \item $\sui{n} \alpha_i = \alpha$.
  \end{itemize}
\end{lemma}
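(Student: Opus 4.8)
The plan is to proceed by induction on the structure of the derivation $\pi$ of $\Gamma \vdash \alpha\cdot\ve t : T$, examining the last rule applied. Since the subject $\alpha\cdot\ve t$ is a scalar term (not a variable, abstraction, application, or sum syntactically), the last rule cannot be $ax$, $\to_I$, $\to_E$, or $+_I$. The only rules that can introduce a term of the form $\alpha\cdot\ve t$ as the subject are $\equiv$, $\forall_I$, $\forall_E$, $1_E$, and $S$ (the rule $S$ being the one whose conclusion-term $(\sum_i\alpha_i)\cdot\ve t$ literally matches a scalar term). So I would do a case analysis on these five possibilities.

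First I would handle the base case, rule $S$. If $\pi$ ends with $S$, then by definition its premises are $\pi_i = \Gamma \vdash \ve t : T_i$ for $i\in\{1,\dots,n\}$, the conclusion type is $T = \sum_{i=1}^n \alpha_i \cdot T_i$, and $\alpha = \sum_{i=1}^n \alpha_i$. Setting $R_i := T_i$, all three required properties hold immediately, and $size(\pi) > size(\pi_i)$ since $\pi_i$ is a strict subderivation. For the inductive cases $\equiv$, $\forall_I$, $\forall_E$, $1_E$: in each case the immediate subderivation $\pi'$ still has a subject of the form $\alpha'\cdot\ve t$ (for $\equiv$ and $1_E$, with the same term $\alpha\cdot\ve t$, noting $1_E$ takes $1\cdot\ve t$ to $\ve t$ — here one must be slightly careful: if the last rule is $1_E$ then the subject of the conclusion is $\ve t$, not a scalar term, so actually $1_E$ is not a possible last rule for this lemma; rather it is $\forall_I$, $\forall_E$, $\equiv$ that apply with subject $\alpha\cdot\ve t$ unchanged). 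Let me restructure: the genuine inductive cases are $\equiv$, $\forall_I$, $\forall_E$, each preserving the subject. Apply the IH to the premise to get $R'_1,\dots,R'_k$, $\alpha'_1,\dots,\alpha'_k$ with the premise-type $T' \equiv \sum \alpha'_i\cdot R'_i$, subderivations $\Gamma\vdash\ve t:R'_i$, and $\sum\alpha'_i = \alpha$ (the scalar $\alpha$ is untouched by $\equiv$, $\forall_I$, $\forall_E$). Then propagate: for $\equiv$, $T\equiv T'\equiv\sum\alpha'_i\cdot R'_i$, done with the same $R_i$. For $\forall_I$ with variable $X\notin\FV\Gamma$, each $R'_i$ must be compatible with the $\forall_I$ shape; using Lemma~\ref{lem:sr:typecharact} to put each $R'_i$ in canonical form and distributing the $\forall X$ over the summands, one rebuilds derivations $\Gamma\vdash\ve t:\forall X.(\cdot)$ by applying $\forall_I$ (and $\equiv$) to the $\pi'_i$, keeping the scalars. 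The case $\forall_E$ is symmetric using substitution.

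The main obstacle I expect is the $\forall_I$ and $\forall_E$ cases: there the type of the premise is constrained to a specific canonical shape $\sum_i\alpha_i\cdot U_i$ (a sum of \emph{unit} types with no bare general variables), whereas the IH only hands back an \emph{equivalence} $T'\equiv\sum_i\alpha'_i\cdot R'_i$ with the $R'_i$ arbitrary general types. To apply $\forall_I$ to each subderivation $\Gamma\vdash\ve t:R'_i$ I need to first normalise $R'_i$ via Lemma~\ref{lem:sr:typecharact} to $\sum_j\gamma_{ij}\cdot U_{ij}+\sum_k\delta_{ik}\cdot\vara Y_k$, argue (using Lemma~\ref{lem:sr:equivdistinctscalars} and the constraint that the whole sum equals a variable-free canonical form) that the general-variable part is absent, then apply $\forall_I$ summand-by-summand and reassemble with rule $S$ — which is exactly why rule $S$ is needed in the system and makes this bookkeeping go through. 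Handling the interaction of the $\equiv$-closure with the scalar accounting (making sure $\sum\alpha_i=\alpha$ is preserved through all the $\equiv$-rewrites, which follows since the equivalences in Figure~\ref{fig:typeequiv} preserve the total scalar weight of a type) is the remaining delicate point, but it is routine given the characterisation lemmas.
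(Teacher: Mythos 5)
There is a genuine gap: you wrongly exclude the $1_E$ case. The rule $1_E$ is a schema whose conclusion is $\Gamma\vdash\ve t:T$ for an \emph{arbitrary} metavariable $\ve t$; instantiating $\ve t:=\alpha\cdot\ve s$ gives a derivation ending in $1_E$ whose conclusion subject is the scalar term $\alpha\cdot\ve s$, with premise $\Gamma\vdash 1\cdot(\alpha\cdot\ve s):T$. Your remark that ``the subject of the conclusion is $\ve t$, not a scalar term'' confuses the metavariable in the rule with the shape of the actual term, so your case analysis is incomplete. This is not a cosmetic omission: the $1_E$ case is one of the delicate ones. The paper handles it by first applying the induction hypothesis to the premise (a scalar term with scalar $1$), obtaining strictly smaller subderivations $\Gamma\vdash\alpha\cdot\ve t:R_j$ with $\sum_j\beta_j=1$, and then applying the induction hypothesis \emph{again} to each of those, finally recombining the scalars via $\sum_j\sum_i\beta_j\times\alpha_{(j,i)}=\alpha$. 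This double application is exactly why the lemma's statement carries the clause $size(\pi)>size(\pi_i)$ (so the induction is on derivation size rather than on the last rule alone); your proposal neither uses nor explains that clause.

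Apart from this, your treatment of the remaining cases ($S$ as the base case, $\equiv$ by direct propagation, $\forall_I$ and $\forall_E$ via Lemma~\ref{lem:sr:typecharact} and the equivalence lemmas, reassembling with rule $S$) matches the paper's argument. Fixing the proof requires restoring $1_E$ as a possible last rule and carrying out the two-stage induction sketched above.
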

\begin{proof}
  By induction on the typing derivation. Full details are given in
  Appendix~\ref{app:proofsSR}.
\end{proof}

\begin{lemma}[Sums]\label{lem:sr:sums}
  If $\Gamma\vdash\ve t+\ve r:S$, there exist $R$, $T$ such that
  \begin{itemize}
  \item $S \equiv T + R$.
  \item $\Gamma\vdash\ve t: T$.
  \item $\Gamma\vdash\ve r: R$.
  \end{itemize}
\end{lemma}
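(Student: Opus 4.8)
The plan is to prove the Sums generation lemma by induction on the derivation $\pi$ of $\Gamma \vdash \ve t + \ve r : S$, examining which rule is applied last. The term $\ve t + \ve r$ is a sum, so the last rule cannot be $ax$, $\to_I$, $\to_E$, $\forall_I$, $\forall_E$, or $\forall_E$'s companions applied to non-sum terms --- except that the $\forall$ rules, $\equiv$, $1_E$, and $S$ can all in principle be applied to a term of this shape. So the real case analysis is over $+_I$, $\equiv$, $1_E$, $S$, and the $\forall$ rules.

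First I would handle the base case, where the last rule is $+_I$: then $S = T + R$ with $\Gamma \vdash \ve t : T$ and $\Gamma \vdash \ve r : R$ as immediate subderivations, and we are done. Next, the $\equiv$ case: here $\Gamma \vdash \ve t + \ve r : S'$ with $S \equiv S'$, and by the induction hypothesis $S' \equiv T' + R'$ with $\Gamma \vdash \ve t : T'$, $\Gamma \vdash \ve r : R'$; since $S \equiv S' \equiv T' + R'$, transitivity of $\equiv$ closes the case with $T = T'$, $R = R'$. The $1_E$ case is similar: from $\Gamma \vdash 1 \cdot (\ve t + \ve r) : S$ we would first need a small observation --- actually $1_E$ requires the premise $\Gamma \vdash 1\cdot\ve u : T$ to conclude $\Gamma \vdash \ve u : T$, so applied here we already have $\Gamma \vdash \ve t + \ve r : S$ with the subderivation being $\Gamma \vdash 1\cdot(\ve t+\ve r):S$, but that subderivation types a scalar term, not a sum, so $1_E$ does not actually apply with $\ve t+\ve r$ as its conclusion's subject; thus this case is vacuous. (One should double-check against the exact statement of $1_E$.)

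The interesting cases are $S$ and the $\forall$ rules. For rule $S$: the conclusion is $\Gamma \vdash (\sum_{i=1}^k \alpha_i)\cdot \ve u : \sum_{i=1}^k \alpha_i \cdot T_i$, whose subject is a scalar term, again not of the form $\ve t + \ve r$ unless $k$ is such that the scalar is absent --- but the subject is always written with an explicit $\cdot$, so once more this case does not produce a sum as subject and is vacuous. For $\forall_I$ and $\forall_E$: these preserve the subject term, so they genuinely can be the last rule. In the $\forall_I$ case, $S \equiv \sum_i \alpha_i \cdot \forall X.U_i$ comes from $\Gamma \vdash \ve t + \ve r : \sum_i \alpha_i \cdot U_i$ with $X \notin \FV{\Gamma}$; apply the induction hypothesis to get $\sum_i \alpha_i \cdot U_i \equiv T' + R'$ with $\Gamma \vdash \ve t : T'$ and $\Gamma \vdash \ve r : R'$, then reintroduce the $\forall$: here I would use Lemma~\ref{lem:sr:typecharact} to put $T'$ and $R'$ in canonical form, distribute $\forall X$ over each summand via rule $\forall_I$ (legitimate since $X \notin \FV{\Gamma}$), and use Lemma~\ref{lem:sr:equivforall}(\ref{ap:it:equivforall1}) to match the result up to $\equiv$ with $S$. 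The $\forall_E$ case is analogous, using the substitution-compatible direction of $\equiv$ (Lemma~\ref{lem:sr:equivforall}(\ref{ap:it:equivforall3})).

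The main obstacle I expect is the bookkeeping in the $\forall_I$/$\forall_E$ cases: after applying the induction hypothesis one obtains $T'$ and $R'$ that are only equivalent to a sum of the $U_i$'s, not literally equal to it, so one must carefully push the quantifier through the canonical decomposition and reassemble, leaning on the characterisation lemma and the $\forall$-equivalence lemma to keep everything coherent up to $\equiv$. The purely structural cases ($+_I$, $\equiv$, and the vacuous scalar-subject cases) are routine; the quantifier cases are where the stated auxiliary lemmas earn their keep.
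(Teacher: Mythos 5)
Your case analysis is the same as the paper's, and the $+_I$, $\equiv$, and $\forall$ cases are handled in essentially the same way (the paper's $\forall$ case splits the index set of $\sui{n}\alpha_i\cdot U_i$ into the parts coming from $\ve t$ and from $\ve r$ and re-applies the $\forall$-rule to each, which is the bookkeeping you anticipate). You are also right that rule $S$ cannot conclude a sequent whose subject is syntactically a sum. But there is one genuine gap: the $1_E$ case is \emph{not} vacuous, and you have read the rule backwards. Rule $1_E$ derives $\Gamma\vdash\ve u:T$ from the premise $\Gamma\vdash 1\cdot\ve u:T$; taking $\ve u=\ve t+\ve r$, its \emph{conclusion} has the sum $\ve t+\ve r$ as subject (it is the premise whose subject is the scalar term $1\cdot(\ve t+\ve r)$), so a derivation of $\Gamma\vdash\ve t+\ve r:S$ can perfectly well end with $1_E$. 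This case must be treated, and it is the delicate one.

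The paper handles it by applying the Scalars generation lemma (Lemma~\ref{lem:sr:scalars}) to the premise $\Gamma\vdash 1\cdot(\ve t+\ve r):S$, obtaining $S\equiv\suj{m}\beta_j\cdot R_j$ with $\suj{m}\beta_j=1$ and strictly smaller derivations of $\Gamma\vdash\ve t+\ve r:R_j$, to which the induction hypothesis applies; the resulting pieces are reassembled with rules $S$, $1_E$ and $\equiv$. Note that this forces the induction to be on the \emph{size} of the derivation rather than merely on its last rule, since the derivations produced by Lemma~\ref{lem:sr:scalars} are smaller but are not immediate subderivations. Your proposal, as written, both omits this case and does not set up an induction measure that could accommodate it.
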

\begin{proof}
  By induction on the typing derivation. Full details are given in
  Appendix~\ref{app:proofsSR}.
\end{proof}

\begin{lemma}[Application]\label{lem:sr:app}
  If $\Gamma\vdash(\ve t)~\ve r:T$, there exist $R_1, \dots, R_h$, $\mu_1,
  \dots, \mu_h$, $\V_1,\dots,\V_h$ such that $T \equiv \suk{h} \mu_k \cdot R_k$,
  $\suk{h} \mu_k = 1$ and for all $k \in \{1,\dots,h\}$
  \begin{itemize}
  \item $\Gamma\vdash\ve t: \sui{n_k}{\alpha_{(k,i)} \cdot\forall\vec{X}.(U\to
      T_{(k,i)})}$.
  \item $\Gamma\vdash\ve r: \suj{m_k}\beta_{(k,j)}\cdot
    U[\vec{A}_{(k,j)}/\vec{X}]$.
  \item $\sui{n_k}\suj{m_k} \alpha_{(k,i)}\times\beta_{(k,j)}\cdot
    {T_{(k,i)}[\vec{A}_{(k,j)}/\vec{X}]} \ssubt_{\V_k,\Gamma} R_k$.
  \end{itemize}
\end{lemma}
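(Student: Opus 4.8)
The plan is to prove the Application generation lemma by induction on the structure of the typing derivation $\pi = \Gamma \vdash (\ve t)~\ve r : T$, examining which rule is applied last. Since the subject term $(\ve t)~\ve r$ is an application, the last rule cannot be $ax$, $\to_I$, or $+_I$, so the relevant cases are $\to_E$, $\forall_I$, $\forall_E$, $\equiv$, $1_E$, and $S$. The base case is $\to_E$, where the conclusion comes directly from a single application: here $h = 1$, $\mu_1 = 1$, $\V_1 = \emptyset$, and $R_1 = T$, with the three bullet points holding with the ordering relation $\ssubt_{\emptyset,\Gamma}$ being just $\equiv$ (reflexivity); the first two bullets are exactly the premises of $\to_E$ and the third is an equivalence.

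For the inductive cases, I would push the induction hypothesis through each remaining rule. For $\equiv$: if $\pi$ ends with $\Gamma\vdash(\ve t)~\ve r:T$ derived from $\Gamma\vdash(\ve t)~\ve r:T'$ with $T\equiv T'$, apply the IH to $T'$ to get the decomposition $T'\equiv\suk{h}\mu_k\cdot R_k$, and then $T\equiv\suk{h}\mu_k\cdot R_k$ by transitivity of $\equiv$; the per-$k$ bullets are unchanged. For $1_E$: the premise is $\Gamma\vdash 1\cdot(\ve t)~\ve r:T$, which is not itself an application at the top, so this case does not arise directly; rather the subtlety is that $1_E$ applies to $1\cdot\ve s$, not to $(\ve t)~\ve r$, so it simply is not a possible last rule here. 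For $S$: the premise would need the subject to be $(\sui{p}\gamma_p)\cdot\ve s$, again not of the form $(\ve t)~\ve r$, so this case is likewise vacuous — the only rules that can genuinely produce an application subject and be non-trivial in the induction are $\to_E$, $\forall_I$, $\forall_E$, and $\equiv$.

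The heart of the argument is the treatment of $\forall_I$ and $\forall_E$, and this is where the ordering relation $\ssubt_{\V,\Gamma}$ earns its keep. Suppose $\pi$ ends with $\forall_E$: the premise is $\Gamma\vdash(\ve t)~\ve r:\sui{l}\delta_l\cdot\forall X.U_l$ and the conclusion $\Gamma\vdash(\ve t)~\ve r:\sui{l}\delta_l\cdot U_l[A/X]$. Apply the IH to the premise type $S'=\sui{l}\delta_l\cdot\forall X.U_l$, obtaining $S'\equiv\suk{h}\mu_k\cdot R_k$ with the three bullets, in particular $\ldots\ssubt_{\V_k,\Gamma}R_k$ for each $k$. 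Now I need the decomposition for $S'[A/X]$. Using Lemma~\ref{lem:sr:equivforall} (parts \ref{ap:it:equivforall1} and \ref{ap:it:equivforall3}), $S'[A/X]\equiv(\suk{h}\mu_k\cdot R_k)[A/X]=\suk{h}\mu_k\cdot R_k[A/X]$; and from $\ldots\ssubt_{\V_k,\Gamma}R_k$ together with $X\notin\FV{\Gamma}$ (which is available because $X$ was quantified, hence fresh for $\Gamma$ in the premise derivation), one extends the chain by a $\prec_{X,\Gamma}$ step to get $\ldots\ssubt_{\V_k\cup\{X\},\Gamma}R_k[A/X]$, keeping the same application-shaped types on the left (substitution commutes with the inner structure). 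The $\forall_I$ case is symmetric, prepending the other clause of $\prec_{X,\Gamma}$. I expect the main obstacle to be exactly this bookkeeping: verifying that after substitution (or generalisation) the left-hand side of the third bullet still has the required canonical shape $\sui{n_k}\suj{m_k}\alpha_{(k,i)}\times\beta_{(k,j)}\cdot T_{(k,i)}[\ldots]$ and that the freshness side-conditions $X\notin\FV{\Gamma}$ propagate correctly, so that the $\ssubt_{\V_k,\Gamma}$ relation can legitimately be extended. The weakening/strengthening lemma (Lemma~\ref{lem:sr:weakening}) and the characterisation of types (Lemma~\ref{lem:sr:typecharact}) are used to normalise the types into canonical form at the points where $\equiv$ is invoked, and Lemma~\ref{lem:sr:sorderhasnofv} guarantees the freshness needed to chain the $\prec$ steps.
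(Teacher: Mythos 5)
Your overall strategy --- induction on the typing derivation, with $\to_E$ as the trivial base case ($h=1$, $\mu_1=1$, $\ssubt$ collapsing to $\equiv$), a transparent $\equiv$ case, and $\forall_I$/$\forall_E$ handled by extending the $\ssubt_{\V_k,\Gamma}$ chain with a $\prec_{X,\Gamma}$ step via Lemmas~\ref{lem:sr:typecharact} and~\ref{lem:sr:equivforall} --- matches the paper's proof. You are also right that $S$ cannot be the last rule, since its conclusion types a term of the form $\left(\sui{n}\alpha_i\right)\cdot\ve s$.

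However, there is a genuine gap: you have misread the direction of the $1_E$ rule and wrongly dismissed that case as vacuous. The rule $1_E$ has premise $\Gamma\vdash 1\cdot\ve s:T$ and \emph{conclusion} $\Gamma\vdash\ve s:T$; taking $\ve s=(\ve t)~\ve r$, it is perfectly possible for $1_E$ to be the last rule of a derivation whose subject is an application, and this case must be handled. It is not handled for free: the premise $\Gamma\vdash 1\cdot((\ve t)~\ve r):T$ is not an application judgement, so the induction hypothesis does not apply to it directly. The paper's proof first invokes the Scalars generation lemma (Lemma~\ref{lem:sr:scalars}) on the premise to obtain $T\equiv\suk{h}\mu_k\cdot R_k$ with $\suk{h}\mu_k=1$ and strictly smaller subderivations $\pi_k=\Gamma\vdash(\ve t)~\ve r:R_k$, and only then applies the induction hypothesis to each $\pi_k$, recombining via $\suk{h}\sug{q}{p_k}(\mu_k\times\eta_{(k,q)})=1$. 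This forces the induction to be measured by the size of the derivation (which is exactly why Lemma~\ref{lem:sr:scalars} records $size(\pi)>size(\pi_k)$) rather than by ``last rule applied to an immediate subderivation''. Your proof as written omits this case entirely, so the induction is incomplete; adding it requires both the Scalars lemma and the size-based strengthening of the induction.
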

\begin{proof}
  By induction on the typing derivation. Full details are given in
  Appendix~\ref{app:proofsSR}.
\end{proof}

\begin{lemma}[Abstractions]\label{lem:sr:abs}
  If $\Gamma\vdash\lambda x.\ve t:T$, then there exist $T_1,\dots,T_n$,
  $R_1,\dots,R_n$, $U_1,\dots,U_n$, $\alpha_1,\dots,\alpha_n$, $\V_1,\dots,\V_n$
  such that $T \equiv \sui{n} \alpha_i \cdot T_i$, $\sui{n} \alpha_i = 1$ and
  for all $i \in \{1,\dots,n\}$,
  \begin{itemize}
  \item $\Gamma,x:U_i\vdash\ve t:R_i$.
  \item $U_i \to R_i \ssubt_{\V_i,\Gamma} T_i$.
  \end{itemize}
\end{lemma}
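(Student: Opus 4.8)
The plan is to proceed by induction on the structure of the typing derivation $\pi$ of $\Gamma \vdash \lambda x.\ve t : T$, examining the last rule applied. Since $\lambda x.\ve t$ is an abstraction, the last rule can only be $\to_I$, $\equiv$, $\forall_I$, $\forall_E$, $1_E$, or $S$ (the rules $ax$, $\to_E$, and $+_I$ cannot conclude with an abstraction on the left of the turnstile). The base case is $\to_I$: here $T = U \to T'$ with $\Gamma, x:U \vdash \ve t : T'$, and we take $n=1$, $\alpha_1 = 1$, $U_1 = U$, $R_1 = T'$, $T_1 = U \to T'$, $\V_1 = \emptyset$; the comparison $U \to R_1 \ssubt_{\emptyset,\Gamma} T_1$ holds by reflexivity of $\equiv$, and $\sum \alpha_i = 1$.

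First I would handle the $\equiv$ case: if $\pi$ ends with $\Gamma \vdash \lambda x.\ve t : T$ derived from $\Gamma \vdash \lambda x.\ve t : T'$ with $T \equiv T'$, apply the induction hypothesis to the premise to get the data $T_1,\dots,T_n$, etc., with $T' \equiv \sui{n}\alpha_i\cdot T_i$; then $T \equiv T'\equiv \sui{n}\alpha_i\cdot T_i$ and the rest carries over unchanged. Next, the $1_E$ case: here $T' = 1\cdot T$ in the premise $\Gamma \vdash 1\cdot\lambda x.\ve t : 1\cdot T$ — actually the conclusion is $\Gamma\vdash\lambda x.\ve t:T$ from $\Gamma\vdash 1\cdot\lambda x.\ve t:T$, but $1\cdot\lambda x.\ve t$ is not an abstraction, so this case does not arise for an abstraction term; I would note this. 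The $S$ case with $n=1$ degenerates to a scalar $1$ in front, again not syntactically an abstraction unless the scalar list is handled via $\equiv$; I expect $S$ likewise not to apply directly since its conclusion has the shape $(\sum\alpha_i)\cdot\ve t$.

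The substantive cases are $\forall_I$ and $\forall_E$. For $\forall_I$: the premise is $\Gamma \vdash \lambda x.\ve t : \sui{n}\alpha_i\cdot U_i'$ with $X\notin\FV{\Gamma}$ and $T = \sui{n}\alpha_i\cdot\forall X.U_i'$. Apply the induction hypothesis to the premise, obtaining $T_k, U_k, R_k, \alpha_k', \V_k$ with $\sui{n'}\alpha_k'\cdot T_k \equiv \sui{n}\alpha_i\cdot U_i'$ and $U_k\to R_k \ssubt_{\V_k,\Gamma} T_k$. By Lemma~\ref{lem:sr:equivdistinctscalars} (after putting both sides in canonical form via Lemma~\ref{lem:sr:typecharact}) the multisets of scalar-weighted unit types match up; I then replace each $T_k$ by $\forall X.T_k$ (legitimate because each $U_i'$ is a unit type and the sum is over unit types, so $T_k$ is unit up to $\equiv$ — here I invoke Lemma~\ref{lem:sr:equivforall}.\ref{ap:it:equivforall2} if needed to see the $\forall X$ distributes), and extend each $\V_k$ to $\V_k\cup\{X\}$, using that $\prec_{X,\Gamma}$ is available since $X\notin\FV{\Gamma}$ to get $U_k\to R_k \ssubt_{\V_k\cup\{X\},\Gamma}\forall X.T_k$ by composing the old comparison with a $\prec_{X,\Gamma}$ step. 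For $\forall_E$: symmetric, using the instantiation branch of $\prec_{X,\Gamma}$ (replacing $T_k$ by $T_k[A/X]$ and appealing to Lemma~\ref{lem:sr:equivforall}.\ref{ap:it:equivforall3} to keep equivalences under substitution), and Lemma~\ref{lem:sr:sorderhasnofv} to ensure the freshness side-condition on the variable being eliminated.

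The main obstacle I anticipate is the bookkeeping in the $\forall_I$ and $\forall_E$ cases: reconciling the canonical-form decomposition of $T$ (Lemma~\ref{lem:sr:typecharact}) with the list $T_1,\dots,T_n$ produced by the induction hypothesis, since the $T_k$ need not themselves be unit types or in canonical form, and the scalars may need regrouping via $\equiv$ before Lemma~\ref{lem:sr:equivdistinctscalars} applies. Threading the $\ssubt_{\V_k,\Gamma}$ relation through the $\forall$-introduction/elimination — i.e.\ checking that prepending or instantiating a $\forall X$ on both $T_k$ and on the comparison target is exactly one $\prec_{X,\Gamma}$ step composable (by the transitivity clause of Definition~\ref{def:order}) with the comparison from the induction hypothesis — is the delicate point, but it is precisely what Definition~\ref{def:order} was set up to make routine. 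The constraint $\sui{n}\alpha_i = 1$ is preserved throughout because none of these rules change the scalar sum (for $\forall_I$, $\forall_E$, $\equiv$ it is literally the same list of scalars; the only rules that could change it, $S$ and $1_E$, do not apply to abstraction terms).
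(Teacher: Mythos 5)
Your overall strategy---induction on the typing derivation with base case $\to_I$, the $\equiv$ case passing through unchanged, and the $\forall_I$/$\forall_E$ cases handled by composing the inherited comparison with a fresh $\prec_{X,\Gamma}$ step and enlarging $\V_i$ to $\V_i\cup\{X\}$---is exactly the paper's. But there is a genuine gap: you dismiss the $1_E$ case, and it does arise. The rule $1_E$ concludes $\Gamma\vdash\ve t:T$ \emph{from} the premise $\Gamma\vdash 1\cdot\ve t:T$, so a derivation of $\Gamma\vdash\lambda x.\ve t:T$ can perfectly well end with $1_E$, its premise being $\Gamma\vdash 1\cdot(\lambda x.\ve t):T$. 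The fact that the premise's subject is $1\cdot(\lambda x.\ve t)$ rather than an abstraction is precisely why the case is nontrivial: the structural induction hypothesis for abstractions does not apply to the premise. (You are right, by contrast, that $S$ cannot be the last rule, since its conclusion's subject has the form $(\sum_i\alpha_i)\cdot\ve t$.)

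The paper closes this case by first invoking the Scalars generation lemma (Lemma~\ref{lem:sr:scalars}) on $\Gamma\vdash 1\cdot(\lambda x.\ve t):T$, obtaining $T\equiv\suj{m}\beta_j\cdot R_j$ with $\suj{m}\beta_j=1$ and subderivations $\pi_j=\Gamma\vdash\lambda x.\ve t:R_j$ of strictly smaller size, and only then applying the induction hypothesis to each $\pi_j$ and regrouping the scalars (the products $\beta_j\times\eta_{(j,i)}$ still sum to $1$). Note this forces the induction to be on the \emph{size} of the derivation rather than plain structural induction on the last rule, which is why Lemma~\ref{lem:sr:scalars} is stated with the explicit $size(\pi)>size(\pi_j)$ bound. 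Your closing remark that ``$S$ and $1_E$ do not apply to abstraction terms'' is therefore half wrong, and without the $1_E$ case the lemma is unproved for derivations that detour through $1\cdot(\lambda x.\ve t)$.
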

\begin{proof}
  By induction on the typing derivation. Full details are given in
  Appendix~\ref{app:proofsSR}.
\end{proof}

\begin{lemma}[Basis terms]\label{lem:sr:basevectors}
  For any context $\Gamma$, type $T$ and basis term $\ve{b}$, if
  $\Gamma\vdash\ve{b}: T$ there exist $U_1, \dots, U_n$, $\alpha_1, \dots,
  \alpha_n$ such that
  \begin{itemize}
  \item $T \equiv \sui{n} \alpha_i \cdot U_i$.
  \item $\Gamma\vdash\ve{b}: U_i$, for $i \in \{1,\dots,n\}$.
  \item $\sui{n} \alpha_i = 1$.
  \end{itemize}
\end{lemma}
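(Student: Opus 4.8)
The plan is to proceed by induction on the typing derivation $\pi$ of $\Gamma \vdash \ve b : T$, where $\ve b$ is a basis term (either a variable or an abstraction $\lambda x.\ve s$), and carefully analyse which typing rules could have produced the conclusion. The key observation is that many rules are simply inapplicable or can be absorbed: since $\ve b$ is not a sum, not a scalar-product, and not an application, the only rules whose \emph{syntactic} conclusion can match $\Gamma \vdash \ve b : T$ are $ax$, $\to_I$, $\equiv$, $\forall_I$, $\forall_E$, $1_E$, and $S$ (the last three because their conclusions $1\cdot\ve t$, $\left(\sum\alpha_i\right)\cdot\ve t$, etc.\ are not quite right --- actually $S$ and $1_E$ do not apply here since their subjects are literally $\left(\sum_i\alpha_i\right)\cdot\ve t$ and $1\cdot\ve t$, which are not basis terms; so in fact only $ax$, $\to_I$, $\equiv$, $\forall_I$, $\forall_E$ remain). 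This already restricts the shape of the last rule considerably.

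First I would handle the base cases. If the last rule is $ax$, then $T = U$ is a single unit type, and we take $n=1$, $\alpha_1 = 1$, $U_1 = U$; all three conditions hold trivially. If the last rule is $\to_I$, then $T = U \to R$ is itself a unit type, and again $n=1$, $\alpha_1=1$, $U_1 = U\to R$ works, using $\Gamma\vdash\lambda x.\ve t : U\to R$ which is exactly the conclusion. Next, the inductive cases. For $\equiv$: the premise gives $\Gamma \vdash \ve b : T'$ with $T' \equiv T$, and by the induction hypothesis $T' \equiv \sum_{i=1}^n \alpha_i\cdot U_i$ with the $U_i$ derivable for $\ve b$ and $\sum\alpha_i = 1$; since $\equiv$ is transitive we get $T \equiv \sum_{i=1}^n\alpha_i\cdot U_i$ and are done (the derivability of each $U_i$ and the scalar condition are untouched). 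For $\forall_I$: the premise is $\Gamma\vdash\ve b : \sum_{i=1}^n\alpha_i\cdot U_i$ with $X\notin\FV\Gamma$ and conclusion $T = \sum_{i=1}^n\alpha_i\cdot\forall X.U_i$. Here I would invoke the induction hypothesis on the premise to refine the $\sum\alpha_i\cdot U_i$ into a canonical sum over \emph{distinct} unit types with the right scalar sum and each summand derivable, then reapply $\forall_I$ to each derivation $\Gamma\vdash\ve b : U_i$ to obtain $\Gamma\vdash\ve b:\forall X.U_i$, noting $\forall X.U_i$ is again a unit type; the scalar sum is preserved. The case $\forall_E$ is symmetric, using $\forall_E$ on each premise derivation and Lemma~\ref{lem:sr:equivforall}(\ref{ap:it:equivforall3}) or the $\equiv$ rule as needed to match types up to equivalence.

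The main subtlety --- and the step I expect to require the most care --- is making the induction go through cleanly when the induction hypothesis yields a sum over unit types that are not literally the $U_i$ appearing in the current rule's premise, but only equivalent to some rearrangement; here Lemma~\ref{lem:sr:equivdistinctscalars} (equivalence between sums of distinct elements) is the right tool to reconcile the two presentations, and one must be careful that ``$\Gamma\vdash\ve b : U_i$ is derivable'' is stable under replacing $U_i$ by an equivalent unit type (which follows from the $\equiv$ rule). A second point to watch: strictly speaking $1_E$ and $S$ could appear if one reads the subject loosely, but since a basis term is syntactically a variable or abstraction and never of the form $1\cdot\ve t$ or $\left(\sum_i\alpha_i\right)\cdot\ve t$, these rules are genuinely excluded, which is what keeps the scalar-sum invariant $\sum\alpha_i = 1$ true throughout --- this is the one quantitative fact that would fail for a general term and is the reason the lemma is stated for basis terms only. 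Once all five (really three inductive) cases are discharged, the three bullet conclusions follow, and the proof is complete.
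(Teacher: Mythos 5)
There is a genuine gap: you have excluded the rule $1_E$ on the grounds that its subject is ``literally $1\cdot\ve t$'', but that is the subject of its \emph{premise}, not of its conclusion. The rule $1_E$ derives $\Gamma\vdash\ve t:T$ from $\Gamma\vdash 1\cdot\ve t:T$, so its conclusion can perfectly well be $\Gamma\vdash\ve b:T$ for a basis term $\ve b$, and this case must be handled. (Your exclusion of $S$ is correct, since its conclusion's subject is $\left(\sui{n}\alpha_i\right)\cdot\ve t$, which is never a variable or an abstraction.) The $1_E$ case is not a formality: it is precisely the case where the invariant $\sui{n}\alpha_i=1$ requires an argument rather than being preserved for free, so the sentence in which you claim that excluding $1_E$ and $S$ ``is what keeps the scalar-sum invariant true throughout'' is doubly mistaken.

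The missing case also forces a change in the induction scheme. In the $1_E$ case the premise types $1\cdot\ve b$, which is not a basis term, so the induction hypothesis of this lemma cannot be applied to the premise directly. The paper's proof instead invokes the generation lemma for scalars (Lemma~\ref{lem:sr:scalars}) on the premise $\Gamma\vdash 1\cdot\ve b:T$, obtaining $T\equiv\suj{m}\beta_j\cdot R_j$ with $\suj{m}\beta_j=1$ together with strictly smaller derivations $\pi_j=\Gamma\vdash\ve b:R_j$; it then applies the induction hypothesis to each $\pi_j$ (which is why Lemma~\ref{lem:sr:scalars} is stated with the clause $size(\pi)>size(\pi_j)$, and why the induction must be on derivation size rather than on the immediate premises of the last rule) and multiplies out the double sum to recover $\suj{m}\sui{n_j}(\beta_j\times\alpha_{(j,i)})=1$. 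Your remaining cases ($ax$, $\to_I$, $\equiv$, $\forall_I$, $\forall_E$) follow the paper's route and are fine, including the use of Lemma~\ref{lem:sr:equivdistinctscalars} to reconcile the two sum presentations, but without the $1_E$ case and the accompanying strengthening of the induction measure the proof is incomplete.
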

\begin{proof}
  By induction on the typing derivation. Full details are given in
  Appendix~\ref{app:proofsSR}.
\end{proof}
}
\conf{The relation between types just defined is taken into account for the generation lemmas. We left the technical details for the arXiv'ed long version~\cite{paper:arxiv}.}

Substitution lemma is standard.

\begin{lemma}[Substitution lemma]\label{lem:sr:substitution}
  For any term ${\ve t}$, basis term $\ve b$, term variable $x$, context
  $\Gamma$, types $T$, $U$, type variable $X$ and type $A$, where $A$ is a unit
  type if $X$ is a unit variable, otherwise $A$ is a general type, we have,
  \begin{enumerate}
  \item \label{ap:it:substitutionTypes} if $\Gamma\vdash\ve{t}: T$, then
    $\Gamma[A/X]\vdash\ve{t}: T[A/X]$;
  \item \label{ap:it:substitutionTerms} if $\Gamma,x:U\vdash\ve t:T$ and
    $\Gamma\vdash\ve b:U$, then $\Gamma\vdash\ve t[\ve b/x]: T$.
  \end{enumerate}
\end{lemma}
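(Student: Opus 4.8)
The plan is to prove both items simultaneously by induction on the structure of the typing derivation $\pi$ of $\Gamma\vdash\ve t:T$ (for item~\ref{ap:it:substitutionTypes}) or of $\Gamma,x:U\vdash\ve t:T$ (for item~\ref{ap:it:substitutionTerms}), doing a case analysis on the last rule applied. For item~\ref{ap:it:substitutionTypes}, the only interesting cases are those that manipulate type variables or the equivalence. In the $ax$ case, $\Gamma[A/X]$ still contains $x:U[A/X]$, and that concludes. The rules $+_I$, $1_E$, $S$, $\to_I$ are all compatible with the substitution $[A/X]$, since substitution distributes over $+$, over scalars, and over arrows (being careful that in $\to_I$ the bound variable $x$ is untouched by a \emph{type} substitution). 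The $\equiv$ case uses Lemma~\ref{lem:sr:equivforall}(\ref{ap:it:equivforall3}): if $T\equiv R$ then $T[A/X]\equiv R[A/X]$. The rules $\forall_I$ and $\forall_E$ require the usual renaming convention to avoid variable capture: if the introduced/eliminated variable happens to clash with $X$ or with the free variables of $A$, we first $\alpha$-rename it. Note that $\forall_I$ carries a side condition $X'\notin\FV{\Gamma}$; after substitution we need $X'\notin\FV{\Gamma[A/X]}$, which holds once $X'$ is chosen fresh (in particular distinct from $X$ and not free in $A$). The $\to_E$ case is the most verbose but routine: one has to check that $\left(\sui{n}\suj{m}\alpha_i\times\beta_j\cdot T_i[\vec A_j/\vec X]\right)[A/X]$ equals $\sui{n}\suj{m}\alpha_i\times\beta_j\cdot (T_i[A/X])[\vec A_j/\vec X]$ after the appropriate renaming of $\vec X$ so that the two substitutions commute, and similarly for the premises.

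For item~\ref{ap:it:substitutionTerms}, I would again induct on the derivation of $\Gamma,x:U\vdash\ve t:T$. The base case is $ax$: either $\ve t=x$, in which case $T\equiv U$ (using $\equiv$ if the axiom's type differs up to equivalence — actually the axiom gives exactly $U$), and $\ve t[\ve b/x]=\ve b$, so we conclude from the hypothesis $\Gamma\vdash\ve b:U$ (invoking $\equiv$ if needed); or $\ve t=y\neq x$, and $\ve t[\ve b/x]=y$ with $y:U'\in\Gamma$, so the axiom applies directly. The inductive cases for $+_I$, $1_E$, $S$, $\forall_I$, $\forall_E$, $\equiv$ are immediate by applying the induction hypothesis to each premise and reassembling, since substitution of a term for a term variable commutes with all the type-level structure and since $\ve b$ is a basis term (hence closed enough that no capture issues arise with the $\forall_I$ side condition, which only concerns $\FV{\Gamma}$). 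The $\to_I$ case, $\ve t=\lambda y.\ve s$ with $\Gamma,x:U,y:V\vdash\ve s:R$, uses the induction hypothesis after noting $y\neq x$ and (by the variable convention) $y\notin\FV{\ve b}$, together with the weakening Lemma~\ref{lem:sr:weakening} to move $\ve b$'s typing into the extended context; then $(\lambda y.\ve s)[\ve b/x]=\lambda y.(\ve s[\ve b/x])$. The $\to_E$ case splits the application and applies the induction hypothesis to both sides, using that substitution distributes over application.

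The main obstacle, as usual in these proofs, is \emph{bookkeeping of variable capture and the interaction of two substitutions}. Specifically, in item~\ref{ap:it:substitutionTypes}, the $\to_E$ case forces us to reconcile the ``external'' substitution $[A/X]$ with the ``internal'' substitutions $[\vec A_j/\vec X]$ coming from the instantiation of the $\forall\vec X$; making these commute requires that the bound $\vec X$ be chosen disjoint from $X$ and from $\FV{A}$, which is legitimate by $\alpha$-conversion but must be stated carefully. Likewise in $\forall_I/\forall_E$. I do not expect any genuinely deep difficulty: all the algebraic identities needed (substitution distributing over $+$, $\cdot$, $\to$, and over $\sum$) are definitional, and the one nontrivial external input — that $\equiv$ is preserved by substitution — is exactly Lemma~\ref{lem:sr:equivforall}(\ref{ap:it:equivforall3}). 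The full details, which amount to spelling out each rule, are deferred to Appendix~\ref{app:proofsSR}.
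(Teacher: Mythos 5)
Your proposal is correct and follows essentially the same route as the paper: induction on the typing derivation for each item, with weakening invoked in the $\to_I$ case of item~\ref{ap:it:substitutionTerms}, Lemma~\ref{lem:sr:equivforall}(\ref{ap:it:equivforall3}) for the $\equiv$ case, and $\alpha$-renaming of bound type variables to commute the substitutions in the $\to_E$, $\forall_I$ and $\forall_E$ cases of item~\ref{ap:it:substitutionTypes}. The only cosmetic difference is that in the $ax$ subcase $\ve t=y\neq x$ of item~\ref{ap:it:substitutionTerms} you re-derive the sequent by the axiom directly where the paper cites its strengthening lemma; both are fine.
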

\arxiv{\begin{proof}
  Both items are proven by induction on the typing derivation. Full details are
  given in Appendix~\ref{app:proofsSR}.
\end{proof}}

We extend the equivalence between types as an equivalence between contexts in a
natural way: The equivalence between contexts $\Gamma \equiv \Delta$ is defined
by $x:U \in \Gamma$ if and only if there exists $x:V \in \Delta$ such that $U
\equiv V$.

\begin{theorem}[Subject Reduction]\label{thm:sr}
  For any terms $\ve{t}, \ve{t}'$, any context $\Gamma$ and any type $T$, if
  $\ve{t} \to \ve{t}'$ and $\Gamma \vdash \ve{t}: T$, then $\Gamma \vdash
  \ve{t}': T$.
\end{theorem}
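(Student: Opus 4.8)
The plan is to proceed by induction on the derivation of $\ve t \to \ve t'$, following the structure of the reduction relation as given in Figure~\ref{fig:vecrev:terms}. The contextual rules are handled first, since they reduce directly to the inductive hypothesis once the appropriate generation lemma has been applied: for instance, if $\ve u + \ve t \to \ve u + \ve t'$ and $\Gamma \vdash \ve u + \ve t : S$, then by Lemma~\ref{lem:sr:sums} we get $S \equiv T + R$ with $\Gamma \vdash \ve u : T$ and $\Gamma \vdash \ve t : R$; the inductive hypothesis gives $\Gamma \vdash \ve t' : R$, and $+_I$ followed by $\equiv$ re-derives $\Gamma \vdash \ve u + \ve t' : S$. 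The cases for $\alpha\cdot\ve t$, $(\ve u)\,\ve t$, $(\ve t)\,\ve u$ and $\lambda x.\ve t$ are analogous, using Lemmas~\ref{lem:sr:scalars}, \ref{lem:sr:app} and \ref{lem:sr:abs} respectively, together with the $S$, $1_E$, $\to_E$ and $\to_I$ rules to rebuild the derivation.

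Next I would handle the base rewriting rules group by group. For Group B (beta-reduction) $(\lambda x.\ve t)\,\ve b \to \ve t[\ve b/x]$: from $\Gamma \vdash (\lambda x.\ve t)\,\ve b : T$, apply Lemma~\ref{lem:sr:app} to obtain, for each summand, a typing $\Gamma \vdash \lambda x.\ve t : \sum_i \alpha_{(k,i)}\cdot\forall\vec X.(U\to T_{(k,i)})$ and $\Gamma \vdash \ve b : \sum_j \beta_{(k,j)}\cdot U[\vec A_{(k,j)}/\vec X]$; then use Lemma~\ref{lem:sr:abs} on the abstraction and Lemma~\ref{lem:sr:basevectors} on the basis term $\ve b$, the substitution lemma (Lemma~\ref{lem:sr:substitution}, both items) to push type substitutions through, and Lemma~\ref{lem:sr:arrowscomp} to reconcile the arrow types up to substitution. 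Reassembling with $S$, $1_E$, $+_I$ and $\equiv$ yields $\Gamma \vdash \ve t[\ve b/x] : T$. For Group A rules, the key observations are that $\to_E$ already distributes over sums and scalars on both sides, so e.g.\ $(\ve t + \ve r)\,\ve u \to (\ve t)\,\ve u + (\ve r)\,\ve u$ is matched by applying Lemma~\ref{lem:sr:app}, Lemma~\ref{lem:sr:sums} to split the typing of $\ve t+\ve r$, re-deriving each application separately, and combining with $+_I$; similarly $(\alpha\cdot\ve t)\,\ve r \to \alpha\cdot(\ve t)\,\ve r$ uses Lemma~\ref{lem:sr:scalars} on the scalar subterm.

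The delicate cases, as the paper itself emphasises, are the Group F rules $\alpha\cdot\ve t + \beta\cdot\ve t \to (\alpha+\beta)\cdot\ve t$, $\alpha\cdot\ve t + \ve t \to (\alpha+1)\cdot\ve t$ and $\ve t + \ve t \to (1+1)\cdot\ve t$ — these are exactly where the original \lvec\ failed. Here the plan is: from $\Gamma \vdash \alpha\cdot\ve t + \beta\cdot\ve t : S$, use Lemma~\ref{lem:sr:sums} to get $S \equiv T + R$ with $\Gamma \vdash \alpha\cdot\ve t : T$ and $\Gamma \vdash \beta\cdot\ve t : R$; then Lemma~\ref{lem:sr:scalars} gives $T \equiv \sum_i \alpha_i\cdot T_i$ with each $\Gamma \vdash \ve t : T_i$ and $\sum_i \alpha_i = \alpha$, and similarly $R \equiv \sum_j \beta_j \cdot R_j$ with each $\Gamma \vdash \ve t : R_j$ and $\sum_j \beta_j = \beta$. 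Now the new rule $S$ is precisely what is needed: since $\ve t$ is typable by each $T_i$ and each $R_j$, and the scalars sum to $\alpha + \beta$, rule $S$ yields $\Gamma \vdash (\alpha+\beta)\cdot\ve t : \sum_i \alpha_i\cdot T_i + \sum_j \beta_j\cdot R_j \equiv T + R \equiv S$, with rule $1_E$ covering the subcase $\alpha+\beta = 1$. The other two Group F rules are the special cases $\beta = 1$ and $\alpha = \beta = 1$. The main obstacle throughout is bookkeeping: ensuring the side conditions of the generation lemmas (the $\ssubt_{\V,\Gamma}$ relations, the freshness conditions $X \notin \FV{\Gamma}$ via Lemma~\ref{lem:sr:sorderhasnofv}, and the substitution commutations) line up correctly when the derivations are reassembled, especially in the beta case where abstraction, application and substitution interact.
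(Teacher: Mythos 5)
Your proposal is correct and follows essentially the same route as the paper: induction on the reduction relation, using the generation lemmas (sums, scalars, application, abstraction, basis terms), the substitution and arrows-comparison lemmas for the beta case, and crucially the new rule $S$ (with $1_E$ and $\equiv$) to reassemble the typing of $(\alpha+\beta)\cdot\ve t$ in the Group~F cases, which is exactly where the original system failed. The only differences are minor bookkeeping ones (e.g.\ the paper also invokes Lemma~\ref{lem:sr:basevectors} on the abstraction itself, not only on the argument, before applying Lemma~\ref{lem:sr:abs}), which you correctly flag as the main source of technical effort.
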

\arxiv{\begin{proof}
  By induction on the rewrite relation. Full details are given in
  Appendix~\ref{app:proofsSR}.
\end{proof}}

\section{Other properties}\label{ch:other-properties}
In this section we present additional properties that are satisfied by $\lvecr$:
progress, strong normalisation, and a characterisation property showing that the sum of all the
components of a vector, which we call weight, of a type is the weight of the
value obtained after reduction.
\conf{We give enough details for reproducing all the proofs. The full detailed proofs are given in the 51-pages long arXiv'ed version at~\cite{paper:arxiv}.}
\arxiv{The proofs are given in the Appendix~\ref{app:proofsOP}.}

Let $\mathbb{V} = \left\{\sui{n} \alpha_i \cdot \lambda x_i.\ve{t}_i +
  \sum^{m}_{j=n+1} \lambda x_j.\ve{t}_j \mid \forall i, j, \lambda x_i.\ve{t}_i
  \neq \lambda x_j.\ve{t}_j\right\}$ be the set of values in our calculus, and
we write $\mathsf{NF}$ as the set of terms in normal form (that is, terms that
cannot be reduced any further). The following theorem relates those two sets.

\begin{theorem}[Progress]\label{thm:progress}
  If $\vdash \ve{t}: T$ and $\ve{t} \in \mathsf{NF}$, then $\ve{t} \in
  \mathbb{V}$.
\end{theorem}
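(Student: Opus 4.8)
The plan is to prove the contrapositive-flavoured statement directly: assuming $\vdash\ve t:T$ and $\ve t\in\mathsf{NF}$, show $\ve t\in\mathbb V$, by structural induction on $\ve t$. The key tool is that each syntactic shape of $\ve t$ must be compatible with being typable in the empty context and being a normal form, and the generation lemmas (Lemmas~\ref{lem:sr:app}, \ref{lem:sr:abs}, \ref{lem:sr:scalars}, \ref{lem:sr:sums}, \ref{lem:sr:basevectors}) pin down what such a term must look like.

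First I would dispatch the base cases. The case $\ve t=x$ is impossible since the context is empty (no sequent $\vdash x:T$ is derivable, e.g.\ by Lemma~\ref{lem:sr:basevectors} together with the axiom rule requiring $x$ in the context). The case $\ve t=\lambda x.\ve s$ is immediate: an abstraction is already in $\mathbb V$ (taking $n=0$, $m=1$). For $\ve t=(\ve u)\,\ve r$: by the induction hypothesis $\ve u$ and $\ve r$ are in $\mathbb V$, hence each is a linear combination of abstractions (a non-empty one, since $\mathbb V$-terms are). If $\ve u$ is a genuine sum $\ve u_1+\ve u_2$ or a scaled term $\alpha\cdot\ve u'$, the Group A rules $(\ve t+\ve r)\ve u\to\dots$ or $(\alpha\cdot\ve t)\ve r\to\dots$ apply, contradicting $\ve t\in\mathsf{NF}$; similarly if $\ve r$ is a sum or a scaling, the rules $(\ve t)(\ve r+\ve u)\to\dots$ or $(\ve t)(\alpha\cdot\ve r)\to\dots$ apply. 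So $\ve u$ must be a single abstraction $\lambda x.\ve s$ and $\ve r$ a single abstraction, hence a basis term; but then the beta-rule $(\lambda x.\ve s)\,\ve b\to\ve s[\ve b/x]$ fires, again contradicting normality. Thus the application case cannot occur.

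For $\ve t=\alpha\cdot\ve s$: by Lemma~\ref{lem:sr:scalars}, $\ve s$ is typable in $\Gamma$, so by the induction hypothesis $\ve s\in\mathbb V$. If $\alpha=1$ then rule $1\cdot\ve s\to\ve s$ (Group E) applies, contradiction, so $\alpha\neq1$. If $\ve s$ is itself of the form $\beta\cdot\ve s'$ or $\ve s_1+\ve s_2$, then one of the Group E rules $\alpha\cdot(\beta\cdot\ve s')\to\dots$ or $\alpha\cdot(\ve s_1+\ve s_2)\to\dots$ applies. Hence $\ve s$ is a single abstraction $\lambda x.\ve r$ with $\ve r\in\mathsf{NF}$, and then $\alpha\cdot\lambda x.\ve r\in\mathbb V$ (with $n=1$, $m=1$, and the scalar $\alpha_1=\alpha$). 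The remaining case is $\ve t=\ve t_1+\ve t_2$: by Lemma~\ref{lem:sr:sums} both summands are typable, so by induction each lies in $\mathbb V$. Writing each $\ve t_k$ in the normal form guaranteed by $\mathbb V$, the sum is a combination of abstractions; the Group F rules ($\alpha\cdot\ve b+\beta\cdot\ve b\to\dots$, $\alpha\cdot\ve b+\ve b\to\dots$, $\ve b+\ve b\to\dots$, applied under the associativity/commutativity of $+$) would fire whenever two of the underlying abstractions coincide, so normality forces all the abstractions appearing to be pairwise distinct, which is exactly the side condition defining $\mathbb V$; hence $\ve t\in\mathbb V$.

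The main obstacle is the sum case: one has to argue carefully that normality with respect to the Group F rules, read up to the structural and equivalence handling of $+$ (associativity, commutativity), really does force the distinctness condition in the definition of $\mathbb V$ — in particular that $\mathbb V$ is closed under forming sums of its own elements only when no abstraction is repeated, and that any repetition would be exposed as a redex after rearranging the sum. I would make this precise by viewing a normal sum of abstractions as a multiset and observing that a repeated element yields, after commuting it next to its copy, one of the three Group F redexes; this also needs the observation (from the shape of $\mathbb V$) that the only normal forms among linear combinations of abstractions are exactly the elements of $\mathbb V$. Everything else is a routine case analysis driven by the reduction rules of Figure~\ref{fig:vecrev:terms} and the generation lemmas.
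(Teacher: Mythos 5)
Your proof is correct and follows essentially the same route as the paper's: a structural induction on $\ve t$, ruling out applications and ill-formed combinations by exhibiting a Group A/B/E/F redex, with the generation lemmas supplying typability of the subterms. The only differences are cosmetic — you treat the bare-variable case and the appeal to the generation lemmas more explicitly than the paper does.
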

\arxiv{\begin{proof}
  By induction on $\ve{t}$. Full details are given in
  Appendix~\ref{app:proofsOP}.
\end{proof}}

\begin{theorem}[Strong Normalisation]\label{thm:sn}
If $\Gamma \vdash \ve{t}: T$ is a valid sequent, then $\ve{t}$ is strongly normalising.
\end{theorem}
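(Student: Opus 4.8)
The plan is to prove Strong Normalisation for $\lvecr$ by adapting the reducibility-candidates (saturated-sets) argument that was used for the original $\lvec$ in~\cite{vectorial}, since the term calculus and reduction rules of $\llinr$ are essentially those of $\llin$ minus the $\ve 0$ term and the rules simplifying it. The first step is to observe that strong normalisation of the \emph{untyped} elementary and factorisation rules (Groups E and F) holds on their own: these rules only manipulate scalars and the algebraic structure of linear combinations, they strictly decrease a simple measure (e.g.\ a multiset of sizes of summands together with the number of $+$ and $\cdot$ nodes not yet in canonical position), so no infinite reduction can use only E/F steps. This lets us treat the ``interesting'' reductions as the $\beta$-rule of Group B together with the application-distribution rules of Group A, modulo the module-equivalence generated by E/F.

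Next I would set up the interpretation of types. For each unit type $U$ we define a set $\llbracket U\rrbracket$ of strongly normalising terms, by induction on $U$: $\llbracket\varu X\rrbracket$ (and the parametric clause for $\vara X$) is taken to be a fixed reducibility candidate; $\llbracket U\to T\rrbracket$ is the set of terms $\ve t$ such that for every basis term $\ve b\in\llbracket U\rrbracket$ the application $(\ve t)\,\ve b$ lies in $\llbracket T\rrbracket$; and the $\forall$ cases quantify over all candidate substitutions for the bound variable, in the two flavours ($\varu X$ ranging over unit-type candidates, $\vara X$ over general-type candidates). For a general type $T\equiv\sui n\alpha_i\cdot U_i+\suj m\beta_j\cdot\vara X_j$ (using Lemma~\ref{lem:sr:typecharact}) we let $\llbracket T\rrbracket$ be the set of terms that reduce, modulo E/F-equivalence, to linear combinations $\sui n\alpha_i\cdot\ve r_i+\cdots$ with each $\ve r_i\in\llbracket U_i\rrbracket$; the point of keeping the scalars $0\cdot\ve t$ around (rather than a global $\ve 0$) makes this clause well-defined, exactly as emphasised in Section~\ref{ch:vecrev}. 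One then checks the three standard candidate (CR1--CR3) conditions for every $\llbracket T\rrbracket$: every element is strongly normalising; the set is closed under reduction; and it is closed under ``expansion at a neutral term'' (if all one-step reducts of a neutral term are in the set, so is the term). The equivalence $\equiv$ on types must be shown to preserve the interpretation, i.e.\ $T\equiv R$ implies $\llbracket T\rrbracket=\llbracket R\rrbracket$, which reduces to checking the six generating equations of Figure~\ref{fig:typeequiv}.

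The core of the argument is the \emph{adequacy} lemma: if $\Gamma\vdash\ve t:T$ with $\Gamma=x_1{:}U_1,\dots,x_k{:}U_k$, and $\rho$ is a candidate assignment for the free type variables together with terms $\ve b_\ell\in\llbracket U_\ell\rho\rrbracket$ (basis terms, by call-by-basis) substituted for the $x_\ell$, then $\ve t[\vec{\ve b}/\vec x]\rho\in\llbracket T\rho\rrbracket$. This is proved by induction on the typing derivation of Figure~\ref{fig:vecrev:types}, with one case per rule: $ax$ is immediate; $\to_I$ uses the candidate definition of arrow types and the expansion property to absorb the $\beta$-redex (here the call-by-basis restriction is what makes the argument go through, since the argument substituted is a basis term in the candidate); $\to_E$ combines the arrow-elimination clause with the sum clause and distributes over the linear combinations using the Group A rules; $\forall_I$ and $\forall_E$ use that the interpretation is stable under candidate substitution (a substitution lemma for $\llbracket\cdot\rrbracket$, matching Lemma~\ref{lem:sr:substitution}); and $+_I$, $1_E$, $S$, and $\equiv$ follow directly from the sum clause of the interpretation and the fact that $\equiv$ preserves it. Taking $\rho$ trivial and $\ve b_\ell=x_\ell$ (the variables, which lie in every candidate by CR3 applied to a neutral term with no reducts) yields $\ve t\in\llbracket T\rrbracket$, hence $\ve t$ is strongly normalising by CR1, proving the theorem.

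The main obstacle I anticipate is the interaction between the module/vectorial structure and the candidate conditions, rather than the $\lambda$-calculus skeleton, which is routine. Concretely: proving that $\llbracket T\rrbracket$ is closed under expansion is delicate because a neutral term like $(\ve t)\,\ve r$ can reduce \emph{both} by a head step and by Group~A distribution, so one must show that considering all these reducts still suffices — this is the usual ``neutral terms don't create redexes with their context'' argument but complicated by the fact that Group~A genuinely rewrites the head. Equally delicate is ensuring that the sum clause of the interpretation is closed under the Group~F rules: when $\alpha\cdot\ve t+\beta\cdot\ve t\to(\alpha+\beta)\cdot\ve t$ fires, a term that was a member ``witnessed'' by two separate summands must still be a member witnessed by one combined summand, which is precisely the subtlety that broke \emph{subject reduction} in the original $\lvec$ and is handled here by the rule $S$; the SN interpretation must be designed so that membership is insensitive to exactly this regrouping, i.e.\ defined modulo the type-equivalence of Figure~\ref{fig:typeequiv} from the start. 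Getting those two closure properties to line up — closure of each $\llbracket T\rrbracket$ under all of Groups A, B, E, F, simultaneously with the CR3 expansion property — is where the real work lies; once they are in place the adequacy induction is mechanical.
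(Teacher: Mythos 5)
Your plan is a legitimate route to the theorem, but it is a genuinely different --- and vastly heavier --- one than the paper's. The paper's proof is a short embedding argument: since the only change to the type system is the replacement of $\alpha_I$ by $S$ and $1_E$, every derivation of $\Gamma\vdash\ve t:T$ in \lvecr\ can be turned into a derivation of $\Gamma\vdash\ve t:T'$ in \lvec\ (each $S$-inference becomes an $\alpha_I$-inference on one of its premises, and $1_E$ is absorbed), and since \llinr-reduction on $\ve 0$-free terms is contained in \llin-reduction, strong normalisation of \lvecr\ is inherited directly from the strong normalisation theorem for \lvec~\cite[Thm.~5.7]{vectorial}. You propose instead to rebuild the reducibility-candidates machinery from scratch for \lvecr; but that machinery is precisely how \cite{vectorial} proves the theorem the paper imports, so your plan amounts to re-executing that long development with $S$, $1_E$ and the removal of $\ve 0$ folded in. The two delicate points you correctly single out --- closure of the interpretation under Group~F regrouping modulo $\equiv$, and the CR3/expansion property in the presence of Group~A head rewriting --- are exactly the hard lemmas of that development, and your proposal names them without discharging them; also, your opening claim that Groups E and F terminate by a ``simple measure'' needs care, since $\alpha\cdot(\ve t+\ve r)\to\alpha\cdot\ve t+\alpha\cdot\ve r$ duplicates the scalar and grows the term. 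What your approach buys is a self-contained proof independent of the original system; what the paper's approach buys is a proof that is essentially free given the literature. Both are sound in principle, but if the goal is this paper's theorem, the short embedding is the intended argument.
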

\begin{proof}
 The proof is by showing that every typed term in \lvecr\ is also typed in \lvec. The full details are given in the Appendix~\ref{app:proofsOP}.
\end{proof}

As previously discussed, the objective of the system is to be able to model
vector spaces (or, more precisely, weak modules). In this context, we know that
the basis terms represent base vectors, while general terms represent any
vector. From here, it follows that if $\ve{v} = \alpha \cdot \ve{b}_1 + \beta
\cdot \ve{b}_2$, then $\ve{b}_1$ represents the vector
$\left(\begin{smallmatrix}1\\0\end{smallmatrix}\right)$, $\ve{b}_2$ represents
the vector $\left(\begin{smallmatrix}0\\1\end{smallmatrix}\right)$, and $\ve{v}$
represents the vector
$\left(\begin{smallmatrix}\alpha\\\beta\end{smallmatrix}\right) = \alpha\cdot
\left(\begin{smallmatrix}1\\0\end{smallmatrix}\right) + \beta \cdot
\left(\begin{smallmatrix}0\\1\end{smallmatrix}\right)$. Therefore, the weight of
$\ve{v}$ should be $\alpha + \beta$, since that is effectively the weight of
$\left(\begin{smallmatrix}\alpha\\\beta\end{smallmatrix}\right)$.

This is analogous for types: the unit types represent base vectors (which is why
they type basis terms), and the general types represent any vector.

We proceed then to formalise the concept of weight of types and terms. 
First we define the weight of types (Definition~\ref{def:wp:weighttypes}), then the weight of values (Definition~\ref{def:wp:weightterms}), and, finally,  
we can define the weight of a term as the weight of its type, after proving that if a typed term reduces to a value, then the weight of the value and of the type coincides (Theorem~\ref{thm:wp:weightpreserv}).

\begin{definition}[Weight of types]\label{def:wp:weighttypes}
  We define the relation $\tnorm{\bullet}: \text{Type} \to \text{Scalar}$
  inductively as follows:
  \[
    \begin{array}{r@{\,}l@{\hspace{1cm}}r@{\,}l@{\hspace{1cm}}r@{\,}l}
      \tnorm{U} &= 1
      &
      \tnorm{\alpha \cdot T} &= \alpha \cdot \tnorm{T}
      &
      \tnorm{T + R} &= \tnorm{T} + \tnorm{R}
    \end{array}
  \]
\end{definition}

\begin{example}
  Consider the type $\sui{n} \alpha_i \cdot U_i$, then
  \[
    \tnorm{\sui{n} \alpha_i \cdot U_i} = \sui{n} \alpha_i \cdot \tnorm{U_i}=
    \sui{n} \alpha_i
  \]
\end{example}
\begin{definition}[Weight of values]\label{def:wp:weightterms}
  We define the relation $\tnorm{\bullet}: \text{Term} \to \text{Scalar}$
  inductively as follows:
  \[
    \begin{array}{r@{\,}l@{\hspace{1cm}}r@{\,}l@{\hspace{1cm}}r@{\,}l}
      \tnorm{\ve{b}} &= 1
      &
      \tnorm{\alpha \cdot \ve{t}} &= \alpha \cdot \tnorm{\ve{t}}
      &
      \tnorm{\ve{t} + \ve{r}} &= \tnorm{\ve{t}} + \tnorm{\ve{r}}
    \end{array}
  \]
\end{definition}
\begin{example}
  Consider the term $\sui{n} \alpha_i \cdot \lambda x_i.\ve{t}_i$, then
  \[
    \tnorm{\sui{n} \alpha_i \cdot \lambda x_i.\ve{t}_i} = \sui{n} \alpha_i \cdot
    \tnorm{\lambda x_i.\ve{t}_i}= \sui{n} \alpha_i
  \]
\end{example}

\begin{lemma}\label{lem:wp:weightequiv}
  If $T \equiv R$, then $\tnorm{T} = \tnorm{R}$.
\end{lemma}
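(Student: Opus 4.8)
The plan is to prove Lemma~\ref{lem:wp:weightequiv} by induction on the derivation of the equivalence $T \equiv R$, where $\equiv$ is the congruence generated by the six axioms in Figure~\ref{fig:typeequiv}. Since $\tnorm{\bullet}$ is defined by structural recursion on types, and the axioms are equations between types, it suffices to check that both sides of each axiom have the same weight, and then to propagate through reflexivity, symmetry, transitivity, and the congruence (contextual) closure rules.

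First I would treat the six base cases, one per axiom. For $1\cdot T \equiv T$: $\tnorm{1\cdot T} = 1\cdot\tnorm{T} = \tnorm{T}$. For $\alpha\cdot(\beta\cdot T)\equiv(\alpha\times\beta)\cdot T$: $\tnorm{\alpha\cdot(\beta\cdot T)} = \alpha\cdot(\beta\cdot\tnorm{T}) = (\alpha\times\beta)\cdot\tnorm{T} = \tnorm{(\alpha\times\beta)\cdot T}$, using associativity of $\times$ in the ring $\Sc$. For $\alpha\cdot T + \alpha\cdot R \equiv \alpha\cdot(T+R)$: both sides give $\alpha\cdot\tnorm{T} + \alpha\cdot\tnorm{R}$, using distributivity in $\Sc$. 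For $\alpha\cdot T + \beta\cdot T \equiv (\alpha+\beta)\cdot T$: both sides equal $(\alpha+\beta)\cdot\tnorm{T}$, again by distributivity. For $T+R\equiv R+T$ and $T+(R+S)\equiv(T+R)+S$: these follow from commutativity and associativity of $+$ in $\Sc$. Then the inductive step handles the closure rules: reflexivity and symmetry are immediate; for transitivity, if $\tnorm{T}=\tnorm{R}$ and $\tnorm{R}=\tnorm{S}$ then $\tnorm{T}=\tnorm{S}$; and for each congruence rule (e.g.\ from $T\equiv T'$ conclude $\alpha\cdot T\equiv\alpha\cdot T'$, or $T+R\equiv T'+R$, etc.), the recursive clause defining $\tnorm{\bullet}$ on that constructor together with the induction hypothesis gives the result.

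This is an entirely routine structural induction with no real obstacle; the only point requiring a moment of care is making sure that every equation used at the level of scalars is indeed an axiom of the commutative ring $(\Sc,+,\times)$ — associativity and commutativity of both operations, distributivity, and the unit law $1\times\alpha=\alpha$ — all of which hold by assumption. I would therefore keep the proof short, noting that it proceeds by induction on the derivation of $T\equiv R$, checking the six axioms as base cases and observing that the closure rules are handled directly by the defining clauses of $\tnorm{\bullet}$ together with the induction hypothesis; the full details would be deferred to the appendix in the same style as the surrounding lemmas.
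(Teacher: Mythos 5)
Your proposal is correct and matches the paper's proof: both verify that $\tnorm{\cdot}$ sends each of the six equivalence axioms of Figure~\ref{fig:typeequiv} to a valid identity of the commutative ring $(\Sc,+,\times)$, with the closure under reflexivity, symmetry, transitivity and congruence being routine. Your version is if anything slightly more explicit about the closure rules than the paper's, but the argument is the same.
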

\arxiv{\begin{proof}
  We prove the lemma holds for every definition of $\equiv$. Full details are
  given in Appendix~\ref{app:proofsOP}.
\end{proof}}

\begin{lemma}\label{lem:wp:weightofvalues}
  If $\ve{v}\in \mathbb{V}$, and $\vdash \ve{v}: T$, then $\tnorm{T}
  \equiv \tnorm{\ve{v}}$.
\end{lemma}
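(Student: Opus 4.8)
The plan is to prove the statement by induction on the number $m$ of abstractions occurring in the value $\ve v$, using the generation lemmas to peel off one term constructor at a time and Lemma~\ref{lem:wp:weightequiv} to transport weights across type equivalences. Recall that $\ve v\in\mathbb V$ means $\ve v=\sui{n}\alpha_i\cdot\lambda x_i.\ve t_i+\sum_{j=n+1}^m\lambda x_j.\ve t_j$ with all the abstractions pairwise distinct; the key structural observation is that any subsum of such a term is again an element of $\mathbb V$, so decomposing $\ve v$ at its outermost $+$ stays within the class of values.

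First I would establish the auxiliary fact that \emph{every type of a basis term has weight $1$}: if $\vdash\ve b:R$, then by the Basis terms generation lemma (Lemma~\ref{lem:sr:basevectors}) we have $R\equiv\sui{k}\gamma_i\cdot U_i$ with $\sui{k}\gamma_i=1$, hence $\tnorm R=\tnorm{\sui{k}\gamma_i\cdot U_i}=\sui{k}\gamma_i\cdot\tnorm{U_i}=\sui{k}\gamma_i=1$, invoking Lemma~\ref{lem:wp:weightequiv} for the first equality and $\tnorm{U_i}=1$ for the unit types. This fact drives both base cases. When $m=1$, either $\ve v$ is a single abstraction, i.e.\ a basis term $\ve b$, and then $\tnorm T=1=\tnorm{\ve b}$ directly; or $\ve v=\alpha\cdot\ve b$, and the Scalars generation lemma (Lemma~\ref{lem:sr:scalars}) gives $T\equiv\sui{k}\alpha_i\cdot R_i$ with $\vdash\ve b:R_i$ and $\sui{k}\alpha_i=\alpha$, so $\tnorm T=\sui{k}\alpha_i\cdot\tnorm{R_i}=\sui{k}\alpha_i=\alpha=\alpha\cdot\tnorm{\ve b}=\tnorm{\ve v}$, using the auxiliary fact and Lemma~\ref{lem:wp:weightequiv}.

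For the inductive step $m\ge 2$, the outermost constructor of $\ve v$ is a sum, $\ve v=\ve v_1+\ve v_2$, where $\ve v_1,\ve v_2\in\mathbb V$ each have strictly fewer than $m$ abstractions. The Sums generation lemma (Lemma~\ref{lem:sr:sums}) yields $T\equiv T_1+T_2$ with $\vdash\ve v_1:T_1$ and $\vdash\ve v_2:T_2$; the induction hypothesis gives $\tnorm{T_1}=\tnorm{\ve v_1}$ and $\tnorm{T_2}=\tnorm{\ve v_2}$, and therefore $\tnorm T=\tnorm{T_1+T_2}=\tnorm{T_1}+\tnorm{T_2}=\tnorm{\ve v_1}+\tnorm{\ve v_2}=\tnorm{\ve v}$, again by Lemma~\ref{lem:wp:weightequiv} together with Definitions~\ref{def:wp:weighttypes} and~\ref{def:wp:weightterms}.

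The argument is essentially bookkeeping, so I do not expect a genuine obstacle; the two points that need a little care are checking that splitting a value at its top-level $+$ really produces values (which relies on the pairwise-distinctness condition being inherited by subsums), and organising the base case so that the auxiliary weight-$1$ fact about basis terms is already available when it is used in the $\alpha\cdot\ve b$ case.
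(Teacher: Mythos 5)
Your proposal is correct and follows essentially the same route as the paper's proof: induction on the number of top-level summands, with the base case split into $\ve b$ versus $\alpha\cdot\ve b$ handled via the Basis-terms and Scalars generation lemmas, and the inductive step via the Sums generation lemma together with Lemma~\ref{lem:wp:weightequiv}. Your only (harmless) streamlining is to isolate the ``every type of a basis term has weight $1$'' observation as an explicit auxiliary fact, where the paper instead re-derives it inline by nesting Lemma~\ref{lem:sr:basevectors} inside the scalar case.
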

\arxiv{\begin{proof}
  By induction on $n$. Full details are given in Appendix~\ref{app:proofsOP}.
\end{proof}}

Finally, the weight of an arbitrary term can be defined as the weight of its
type, thanks to the following theorem.

\begin{theorem}[Weight Preservation]\label{thm:wp:weightpreserv}
  If $\vdash \ve{t}: T$ and $\ve{t} \to^{*} \ve{v}$, then $\tnorm{T} =
  \tnorm{\ve{v}}$.
\end{theorem}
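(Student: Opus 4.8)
The plan is to chain together three facts already available in the paper. First, by the Subject Reduction theorem (Theorem~\ref{thm:sr}), applied repeatedly along the reduction sequence $\ve t\to^*\ve v$, from $\vdash\ve t:T$ we obtain $\vdash\ve v:T$ --- the type $T$ is preserved under every single reduction step and hence under their transitive closure. Second, since by hypothesis $\ve v\in\mathbb V$ is a value, Lemma~\ref{lem:wp:weightofvalues} applies and gives $\tnorm T\equiv\tnorm{\ve v}$. Finally, since $\tnorm{\bullet}$ takes values in the ring of scalars $\Sc$ and there the relevant equivalence on scalars is plain equality, we conclude $\tnorm T=\tnorm{\ve v}$.

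So the proof is essentially a one-line composition: $\tnorm T = \tnorm{\ve v}$ by (SR) then (Lemma~\ref{lem:wp:weightofvalues}). Let me spell out the only point that needs a word of care. Subject Reduction is stated for a single step $\ve t\to\ve t'$; to cover $\ve t\to^*\ve v$ I would note that $\to^*$ is by definition the reflexive-transitive closure of $\to$, and do a trivial induction on the length of the reduction: the base case is $\ve t=\ve v$ (nothing to prove, $T$ already types $\ve v$), and the inductive step uses Theorem~\ref{thm:sr} once to push the typing across the first step, then the induction hypothesis on the remaining shorter reduction. This gives $\vdash\ve v:T$.

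The only genuinely subtle ingredient --- and it is already discharged by the lemmas the paper cites --- is that $\tnorm{\bullet}$ is well-defined on types modulo $\equiv$; that is exactly Lemma~\ref{lem:wp:weightequiv}, which is needed implicitly the moment one writes $\tnorm T$ for a type only known up to equivalence, and also to make sense of the $\equiv$ appearing in the conclusion of Lemma~\ref{lem:wp:weightofvalues}. Since weights of terms and of types live in $\Sc$, where $\equiv$ is equality, the final "$=$" in the statement is literally the same statement as Lemma~\ref{lem:wp:weightofvalues} once $\vdash\ve v:T$ is known.

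I expect no real obstacle here: all the hard work --- Subject Reduction, and the weight-of-values computation (which itself rests on the generation lemma for basis terms, Lemma~\ref{lem:sr:basevectors}, forcing the scalars of the type of a basis term to sum to $1$) --- has been done in the preceding sections. The theorem is the capstone that legitimises \emph{defining} the weight $\tnorm{\ve t}$ of an arbitrary typed term as $\tnorm T$ for any type $T$ of $\ve t$: the definition does not depend on the choice of $T$ (any two such types have the same weight, since they both equal $\tnorm{\ve v}$ for the normal form $\ve v$, which exists by Strong Normalisation, Theorem~\ref{thm:sn}), nor on the reduction path taken.
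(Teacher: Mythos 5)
Your proposal is correct and follows essentially the same route as the paper: Subject Reduction, iterated along the reduction sequence, gives $\vdash\ve v:T$, and Lemma~\ref{lem:wp:weightofvalues} then yields $\tnorm{T}=\tnorm{\ve v}$. The only (cosmetic) difference is that you read $\ve v\in\mathbb{V}$ directly off the statement, whereas the paper discharges that hypothesis of Lemma~\ref{lem:wp:weightofvalues} by applying the Progress theorem (Theorem~\ref{thm:progress}) to the normal form $\ve v$.
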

\begin{proof}
  Since $\ve{t} \to^{*} \ve{v}$, by Theorem~\ref{thm:progress}, $\ve{v} = \sui{n}
  \alpha_i \cdot \lambda x_i.\ve{t}_i + \sum^{m}_{j=n+1} \lambda x_j.\ve{t}_j$,
  where $\lambda x_i.\ve{t}_i \neq \lambda x_j.\ve{t}_j$ for all $i \in \{1,
  \dots, n\}$, $j \in \{n+1, \dots m\}$. Also, by Theorem~\ref{thm:sr}, we know then
  that $\vdash \ve{v}: T$. Finally, by Lemma~\ref{lem:wp:weightofvalues}, we know
  that $\tnorm{T} = \tnorm{\ve{v}}$.
\end{proof}

\section{Conclusion}\label{ch:conclusion}
We have revisited $\lvec$ redefining it in a careful way, proving that the
modified version satisfies the standard formulation of the Subject Reduction
property (Theorem~\ref{thm:sr}).
It is worth mentioning that the design choices we made
are not necessarily the only possibility. Indeed, one of the first approaches we
considered involved keeping most of the typing rules as in the original system,
and adding subtyping.
In the end, we realized that the property could be satisfied in a simpler and more elegant way by modifying
the typing rules. The summary of the changes made to the original system is:
\begin{itemize}
\item We added the $S$ rule, that deals with superposition of types of a single
  term.
\item We added the $1_E$ rule, to allow the removal of the scalar if said scalar
  is equal to 1.
\item We removed the term $\ve{0}$, which proved to be
  undesirable~\cite{DiazcaroGuillermoMiquelValironLICS19}.
\end{itemize}

In addition, we showed that the obtained calculus is still strongly normalising
(Theorem~\ref{thm:sn}), by proving that the typable terms in the modified
version, are typable in the original system (which has been proved to be
strongly normalising as well~\cite{vectorial}). We also provided a proof of the
progress property (Theorem~\ref{thm:progress}), which allowed us to characterise
the terms that cannot be reduced any further. This enabled us to formalize the
concept of weight of types and terms, and to prove that terms had the same
weight as their types (Theorem~\ref{thm:wp:weightpreserv}).

We stand by this modified version of \lvec, which we think provides the right framework in the quest for the
quantum-controlled lambda calculus.

\bibliographystyle{splncs04}
\bibliography{bibliography}

\arxiv{\newpage}
\appendix

\arxiv{
\section{Omitted proofs in Section~\ref{ch:sr}}\label{app:proofsSR}
\arxiv{\xrecap{Lemma}{Characterisation of types~\citeCharTypes}{lem:sr:typecharact} {
  For any type $T$, there exist $n,m\in\mathbb{N}$, $\alpha_1,\dots,\alpha_n$,
  $\beta_1,\dots,\beta_m\in\Sc$, distinct unit types $U_1,\dots,U_n$ and
  distinct general variables $\vara{X}_1,\dots,\vara{X}_m$ such that \(
  T\equiv\sui{n}\alpha_i\cdot U_i+\suj{m}\beta_j\cdot\vara{X}_j \). }
\begin{proof}
  Structural induction on $T$.
  \begin{itemize}
  \item Let $T=U$, then take $\alpha=\beta=1$, $n=1$ and $m=0$, and so
    $T\equiv\sui{1}1\cdot U=1\cdot U$.
  \item Let $T=\alpha\cdot T'$, then by the induction hypothesis
    $T'\equiv\sui{n}\alpha_i\cdot U_i+\suj{m}\beta_j\cdot\vara{X}_j$, so
    $T=\alpha\cdot T'\equiv\alpha\cdot (\sui{n}\alpha_i\cdot
    U_i+\suj{m}\beta_j\cdot\vara{X}_j)\equiv\sui{n}(\alpha\times\alpha_i)\cdot
    U_i+\suj{m}(\alpha\times\beta_j)\cdot\vara{X}_j$.
  \item Let $T=R+S$, then by the induction hypothesis
    $R\equiv\sui{n}\alpha_i\cdot U_i+\suj{m}\beta_j\cdot\vara{X}_j$ and
    $S\equiv\sui{n'}\alpha'_i\cdot U'_i+\suj{m'}\beta'_j\cdot\vara{X'}_j$, so
    $T=R+S\equiv\sui{n}\alpha_i\cdot U_i+\sui{n'}\alpha'_i\cdot
    U'_i+\suj{m}\beta_j\cdot\vara{X}_j+\suj{m'}\beta'_j\cdot\vara{X'}_j$. If the
    $U_i$ and the $U'_i$ are all different each other, we have finished, in
    other case, if $U_k=U'_h$, notice that $\alpha_k\cdot U_k+\alpha'_h\cdot
    U'_h\equiv (\alpha_k+\alpha'_h)\cdot U_k$.
  \item Let $T=\vara X$, then take $\alpha=\beta=1$, $m=1$ and $n=0$, and so
    $T\equiv\suj{1} 1\cdot\vara{X}\equiv 1\cdot\vara X$. \qed
  \end{itemize}
\end{proof}}

\begin{definition}
  Let $F$ be an algebraic context with $n$ holes. Let $\vec U = U_1,\ldots,U_n$
  be a list of $n$ unit types. If $U$ is a unit type, we write $\bar U$ for the
  set of unit types equivalent to $U$:
  \[
    \bar U := \{ V ~|~ V \textrm{ is unit and } V \equiv U \}.
  \]
  The {\em context vector} $v_F(\vec U)$ associated with the context $F$ and the
  unit types $\vec U$ is partial map from the set $\mathcal S = \{ \bar U \}$ to
  scalars. It is inductively defined as follows: $v_{\alpha\cdot F}(\vec U) :=
  \alpha v_F(\vec U)$, $v_{F + G}(\vec U) := v_F(\vec U) + v_G(\vec U)$, and
  finally $v_{[-_i]}(\vec U) := \{\bar U_i \mapsto 1\}$. The sum is defined on
  these partial map as follows:
  \[
    (f + g)(\vec U) = \left\{
      \begin{array}{ll}
        f(\vec U) + g(\vec U)& \textrm{if both are defined}
        \\
        f(\vec U)& \textrm{if $f(\vec U)$ is defined but not $g(\vec U)$}
        \\
        g(\vec U)& \textrm{if $g(\vec U)$ is defined but not $f(\vec U)$}
        \\
        \textrm{is not defined} & \textrm{if neither $f(\vec U)$ nor
                                  $g(\vec U)$ is defined.}
      \end{array}
    \right.
  \]
  Scalar multiplication is defined as follows:
  \[
    (\alpha f)(\vec U) = \left\{
      \begin{array}{ll}
        \alpha (f(\vec U))& \textrm{if $f(\vec U)$ is defined}
        \\
        \textrm{is not defined} & \textrm{if $f(\vec U)$ is not defined.}
      \end{array}
    \right.
  \]
\end{definition}

\begin{lemma}\label{lem:equivdistinctscalarsaux}
  Let $F$ and $G$ be two algebraic contexts with respectively $n$ and $m$ holes.
  Let $\vec U$ be a list of $n$ unit types, and $\vec V$ be a list of $m$ unit
  types. Then $F(\vec U) \equiv G(\vec V)$ implies $v_F(\vec U) = v_G(\vec V)$.
\end{lemma}

\begin{proof}
  The derivation of $F(\vec U) \equiv F(\vec V)$ essentially consists
  in a sequence of the elementary rules (or congruence thereof) in
  Figure~\ref{fig:typeequiv} composed with transitivity:
  \[
    F(\vec U) = W_1\equiv W_2 \equiv \cdots \equiv W_k = G(\vec V).
  \]
  We prove the result by induction on $k$.
  \begin{itemize}
  \item Case $k=1$. Then $F(\vec U)$ is syntactically equal to
    $G(\vec V)$: we are done.
  \item Suppose that the result is true for sequences of size $k$, and
    let
    \[
      F(\vec U) = W_1\equiv W_2 \equiv \cdots \equiv W_k
      \equiv W_{k+1} = G(\vec V).
    \]
    Let us concentrate on the first step $F(\vec U) \equiv W_2$: it is an
    elementary step from Figure~\ref{fig:typeequiv}. By structural induction on
    the proof of $F(\vec U) \equiv W_2$ (which only uses congruence and
    elementary steps, and not transitivity), we can show that $W_2$ is of the
    form $F'(\vec U')$ where $v_F(\vec U) = v_{F'}(\vec U')$. We are now in
    power of applying the induction hypothesis, because the sequence of
    elementary rewrites from $F'(\vec U')$ to $G(\vec V)$ is of size $k$.
    Therefore $v_{F'}(\vec U') = v_G(\vec V)$. We can then conclude that
    $v_F(\vec U) = v_G(\vec V)$.
  \end{itemize}
  This conclude the proof of the lemma.
\end{proof}

\xrecap{Lemma}{Equivalence between sums of distinct elements (up to
  $\equiv$)}{lem:sr:equivdistinctscalars} Let $U_1,\dots,U_n$ be a set of
distinct (not equivalent) unit types, and let $V_1,\dots,V_m$ be also a set
distinct unit types. If $\sui{n}\alpha_i\cdot U_i\equiv\suj{m}\beta_j\cdot V_j$,
then $m=n$ and there exists a permutation $p$ of $n$ such that $\forall i$,
$\alpha_i=\beta_{p(i)}$ and $U_i\equiv V_{p(i)}$.

\begin{proof}
  Let $S = \sui{n}\alpha_i\cdot U_i$ and $T = \suj{m}\beta_j\cdot V_j$. Both $S$
  and $T$ can be respectively written as $F(\vec U)$ and $G(\vec V)$. Using
  Lemma~\ref{lem:equivdistinctscalarsaux}, we conclude that $v_F(\vec U) =
  v_G(\vec V)$. Since all $U_i$'s are pairwise non-equivalent, the $\bar U_i$'s
  are pairwise distinct.
  \[
    v_F(\vec U) = \{ \bar U_i \mapsto \alpha_i~|~i=1\ldots n\}.
  \]
  Similarly, the $\bar V_j$'s are pairwise disjoint, and
  \[
    v_G(\vec G) = \{ \bar V_j \mapsto \beta_j~|~i=1\ldots m\}.
  \]
  We obtain the desired result because these two partial maps are supposed to be
  equal. Indeed, this implies:
  \begin{itemize}
  \item $m=n$ because the domains are equal (so they should have the same size)
  \item Again using the fact that the domains are equal, the sets $\{\bar U_i\}$
    and $\{\bar V_j\}$ are equal: this means there exists a permutation $p$ of
    $n$ such that $\forall i$, $\bar U_i= \bar V_{p(i)}$, meaning $U_i\equiv
    V_{p(i)}$.
  \item Because the partial maps are equal, the images of a given element $\bar
    U_i = \bar V_{p(i)}$ under $v_F$ and $v_G$ are in fact the same: we
    therefore have $\alpha_i=\beta_{p(i)}$.
  \end{itemize}
  And this closes the proof of the lemma.
\end{proof}

\arxiv{\xrecap{Lemma}{Equivalences $\forall$~\citeEquivForall}{lem:sr:equivforall} {
  Let $U_1,\dots,U_n$ be a set of distinct (not equivalent) unit types and let
  $V_1,\dots,V_n$ be also a set of distinct unit types.
  \begin{enumerate}
  \item $\sui{n}\alpha_i\cdot
    U_i\equiv\suj{m}\beta_j\cdot V_j$ iff $\sui{n}\alpha_i\cdot\forall
    X.U_i\equiv\suj{m}\beta_j\cdot\forall X.V_j$.
  \item If $\sui{n}\alpha_i\cdot\forall
    X.U_i\equiv\suj{m}\beta_j\cdot V_j$ then $\forall V_j,\exists
    W_j~/~V_j\equiv\forall X.W_j$.
  \item If $T\equiv R$ then $T[A/X]\equiv R[A/X]$.
  \end{enumerate}
}
\begin{proof}
  Item (1) From Lemma~\ref{lem:sr:equivdistinctscalars}, $m=n$, and without loss
  of generality, for all $i$, $\alpha_i=\beta_i$ and $U_i=V_i$ in the
  left-to-right direction, $\forall X.U_i=\forall X.V_i$ in the right-to-left
  direction. In both cases we easily conclude.

  \medskip
  \noindent
  Item (2) is similar.

  \medskip
  \noindent
  Item (3) is a straightforward induction on the equivalence $T\equiv R$.
\end{proof}
}

\xrecap{Lemma}{Arrows comparison}{lem:sr:arrowscomp} { $V \to
  R\ssubt_{\V,\Gamma} \forall\vec X.(U\to T) $, then $U\to T\equiv(V\to
  R)[\vec{A}/\vec{Y}]$, with $\vec Y\notin \FV{\Gamma}$. }
\begin{proof}
  Let $(~\cdot~)^\circ$ be a map from types to types defined as follows,
  \begin{align*}
    X^\circ &= X \\
    (U\to T)^\circ &= U\to T \\
    (\forall X.T)^\circ &= T^\circ \\
    (\alpha\cdot T)^\circ &=\alpha\cdot T^\circ\\
    (T+R)^\circ &=T^\circ+R^\circ
  \end{align*}

  We need three intermediate results:
  \begin{enumerate}
  \item If $T\equiv R$, then $T^\circ\equiv R^\circ$.
  \item For any types $U, A$, there exists $B$ such that
    $(U[A/X])^\circ=U^\circ[B/X]$.
  \item For any types $V, U$, there exists $\vec A$ such that if $V
    \ssubt_{\V,\Gamma} \forall\vec X.U$, then $U^\circ\equiv V^\circ[\vec A/\vec
    X]$.
  \end{enumerate}
  {\textit{Proofs.}}
  \begin{enumerate}
  \item Induction on the equivalence rules. We only give the basic cases since
    the inductive step, given by the context where the equivalence is applied,
    is trivial.
    \begin{itemize}
    \item $(1\cdot T)^\circ=1\cdot T^\circ\equiv T^\circ$.
    \item $(\alpha\cdot(\beta\cdot T))^\circ=\alpha\cdot(\beta\cdot
      T^\circ)\equiv(\alpha\times\beta)\cdot T^\circ=((\alpha\times\beta)\cdot
      T)^\circ$.
    \item $(\alpha\cdot T+\alpha\cdot R)^\circ=\alpha\cdot T^\circ+\alpha\cdot
      R^\circ\equiv\alpha\cdot(T^\circ+R^\circ)=(\alpha\cdot(T+R))^\circ$.
    \item $(\alpha\cdot T+\beta\cdot T)^\circ=\alpha\cdot T^\circ+\beta\cdot
      T^\circ\equiv(\alpha+\beta)\cdot T^\circ=((\alpha+\beta)\cdot T)^\circ$.
    \item $(T+R)^\circ=T^\circ+R^\circ\equiv R^\circ+T^\circ=(R+T)^\circ$.
    \item $(T+(R+S))^\circ=T^\circ+(R^\circ+S^\circ)\equiv
      (T^\circ+R^\circ)+S^\circ=((T+R)+S)^\circ$.
    \end{itemize}
  \item Structural induction on $U$.
    \begin{itemize}
    \item $U=\varu X$. Then $(\varu X[V/\varu X])^\circ=V^\circ=\varu
      X[V^\circ/\varu X]=\varu X^\circ[V^\circ/\varu X]$.
    \item $U=\varu Y$. Then $(\varu Y[A/X])^\circ=\varu Y=\varu Y^\circ[A/X]$.
    \item $U=V\to T$. Then $((V\to T)[A/X])^\circ=(V[A/X]\to
      T[A/X])^\circ=V[A/X]\to T[A/X]=(V\to T)[A/X]=(V\to T)^\circ[A/X]$.
    \item $U=\forall Y.V$. Then $((\forall Y.V)[A/X])^\circ=(\forall
      Y.V[A/X])^\circ=(V[A/X])^\circ$, which by the induction hypothesis is
      equivalent to $V^\circ[B/X]=(\forall Y.V)^\circ[B/X]$.
    \end{itemize}
  \item It suffices to show this for $V \prec_{X,\Gamma} \forall\vec X.U$.
    Cases:
    \begin{itemize}
    \item $\forall\vec X.U\equiv\forall Y.V$, then notice that $(\forall\vec
      X.U)^\circ \equiv_{(1)}(\forall Y.V)^\circ=V^\circ$.
    \item $V\equiv\forall Y.W$ and $\forall\vec X.U\equiv W[A/X]$, then

      $(\forall\vec X.U)^\circ\equiv_{(1)}(W[A/X])^\circ\equiv_{(2)}
      W^\circ[B/X]=(\forall Y.W)^\circ[B/X]\equiv_{(1)}V^\circ[B/X]$.
    \end{itemize}
  \end{enumerate}
  Proof of the lemma. $U\to T\equiv(U\to T)^\circ$, by the intermediate result
  3, this is equivalent to $(V\to R)^\circ[\vec A/\vec X]=(V\to R)[\vec A/\vec
  X]$.
\end{proof}

\xrecap{Lemma}{Scalars}{lem:sr:scalars}{ For any context $\Gamma$, term $\ve t$,
  type $T$, if $\pi = \Gamma\vdash \alpha\cdot\ve{t}: T$, there exist $R_1,
  \dots, R_n$, $\alpha_1, \dots, \alpha_n$ such that
  \begin{itemize}
  \item $T \equiv \sui{n}\alpha_i \cdot R_i$.
  \item $\pi_i = \Gamma \vdash \ve{t}: R_i$, with $size(\pi) > size(\pi_i)$, for
    $i \in \{1, \dots, n\}$.
  \item $\sui{n} \alpha_i = \alpha$.
  \end{itemize}
}
\begin{proof}
  By induction on the typing derivation. \inductioncase{Case $S$}
  \[
    \prftree[r]{$S$} {\Gamma \vdash \ve{t}: T_i} {\prfassumption{\forall i \in
        \{1,\dots,n\}}}
    {\Gamma \vdash \left(\sui{n} \alpha_i\right) \cdot \ve{t}: \sui{n} \alpha_i
      \cdot T_i}
  \]
  Trivial case. \inductioncase{Case $\equiv$}
  \[
    \prftree[r]{$\equiv$} {\pi' = \Gamma \vdash \alpha \cdot \ve{t}: T} {T
      \equiv R} {\pi = \Gamma \vdash \alpha \cdot \ve{t}: R}
  \]
  By the induction hypothesis there exist $S_1,\dots,S_n$,
  $\alpha_1,\dots,\alpha_n$ such that
  \begin{itemize}
  \item $T \equiv R \equiv \sui{n} \alpha_i \cdot S_i$.
  \item $\pi_i = \Gamma \vdash \ve{t}: S_i$, with $size(\pi') > size(\pi_i)$,
    for $i \in \{1,\dots,n\}$.
  \item $\sui{n} \alpha_i = \alpha$.
  \end{itemize}
  It is easy to see that $size(\pi) > size(\pi')$, so the lemma holds.
  \inductioncase{Case $1_E$}
  \[
    \prftree[r]{$1_E$} {\pi = \Gamma \vdash 1\cdot (\alpha \cdot \ve{t}): T}
    {\Gamma \vdash \alpha \cdot \ve{t}: T}
  \]
  By induction hypothesis, there exist $R_1,\dots,R_m$, $\beta_1,\dots,\beta_m$
  such that
  \begin{itemize}
  \item ${T \equiv \suj{m} \beta_j \cdot R_j}$.
  \item $\pi_j = \Gamma \vdash \alpha \cdot \ve{t}: R_j$ with $size(\pi) >
    size(\pi_j)$ for $j = \{1, \dots, m\}$.
  \item $\suj{m} \beta_j = 1$.
  \end{itemize}
  Since $size(\pi) > size(\pi_j)$, then by applying the induction hypothesis
  again for all $j = \{1, \dots, m\}$, we have that there exist
  $S_{(j,1)},\dots,S_{(j,n_j)}$, $\alpha_{(j,1)},\dots,\alpha_{(j,n_j)}$ such
  that
  \begin{itemize}
  \item $R_j \equiv \sui{n_j} \alpha_{(j,i)} \cdot S_{(j,i)}$.
  \item $\pi_{(j,i)} = \Gamma \vdash \ve{t}: S_{(j,i)}$ with $size(\pi_j) >
    size(\pi_{(j,i)})$ for $i \in \{1, \dots, n_j\}$.
  \item $\sui{n_j} \alpha_{(j,i)} = \alpha$.
  \end{itemize}
  Given that $\Gamma \vdash \alpha \cdot \ve{t}: T$, then
  \[
    T \equiv \suj{m} \beta_j \cdot R_j \equiv \suj{m} \beta_j \cdot \sui{n}
    \alpha_{(j,i)} \cdot S_{(j,i)} \equiv \suj{m}\sui{n} (\beta_j \times
    \alpha_{(j,i)}) \cdot S_{(j,i)}
  \]
  Finally, we must prove that $\suj{m}\sui{n} (\beta_j \times \alpha_{(j,i)}) =
  \alpha$,
  \[
    \suj{m}\sui{n} (\beta_j \times \alpha_{(j,i)}) = \suj{m} \beta_j \cdot
    \underbrace{\sui{n} \alpha_{(j,i)}}_{=~\alpha} = \suj{m} \beta_j \cdot
    \alpha = \alpha \cdot \underbrace{\suj{m} \beta_j}_{=~1} = \alpha
  \]
  \inductioncase{Case $\forall_I$}
  \[
    \prftree[r]{$\forall_I$} {\pi = \Gamma \vdash \alpha \cdot \ve{t}:
      \sui{n}\alpha_i \cdot U_i} {X \notin \FV{\Gamma}} {\pi' = \Gamma \vdash
      \alpha \cdot \ve{t}: \sui{n}\alpha_i \cdot \forall X. U_i}
  \]
  By the induction hypothesis there exist $R_1,\dots,R_m$, $\mu_1,\dots,\mu_m$
  such that
  \begin{itemize}
  \item $\sui{n}\alpha_i\cdot U_i\equiv \suj{m} \mu_j \cdot R_j$.
  \item $\pi_j = \Gamma \vdash \ve{t}: R_j$, with $size(\pi) > size(\pi_j)$, for
    $j \in \{1, \dots, m\}$.
  \item $\suj{m} \mu_j = \alpha$.
  \end{itemize}
  By applying Lemma~\ref{lem:sr:typecharact} for all $j \in \{1, \dots, m\}$,
  and since $\sui{n}\alpha_i\cdot U_i$ does not have any general variable
  $\vara{X}$,
  then $R_j \equiv \suk{h_j}\beta_{(j,k)}\cdot V_{(j,k)}$.\\
  Hence $\sui{n}\alpha_i\cdot U_i\equiv \suj{m}\mu_j \cdot \suk{h_j}\beta_{(j,k)}\cdot V_{(j,k)}$.\\
  Without loss of generality, assuming all unit types are distinct (not
  equivalent), then by Lemma~\ref{lem:sr:equivforall},
  \[
    \sui{n-1}\alpha_i\cdot \forall{X}.U_n \equiv \suj{m}\mu_j \cdot
    \underbrace{\suk{h_j}\beta_{(j,k)}\cdot \forall X.V_{(j,k)}}_{\equiv~R'_j}
  \]
  We must prove that for all $j \in \{1,\dots,m\}$, $\pi'_j = \Gamma \vdash
  \ve{t}: R'_j$ and that $size(\pi') > size(\pi'_j)$. By applying the
  $\forall_I$ rule, we have
  \[
    \prftree[r]{$\forall_I$} {\Gamma \vdash \ve{t}: R_j} {\prfassumption{X
        \notin \FV{\Gamma}}} {\pi'_j = \Gamma \vdash \ve{t}: R'_j}
  \]
  And notice that using the $S$ rule, obtain
  \[
    \prftree[r]{$\equiv$} {\prftree[r]{$S$} {\pi'_j = \Gamma \vdash \ve{t}:
        R'_j} {\forall j \in \{1,\dots,m\}} {\Gamma \vdash \alpha \cdot \ve{t}:
        \suj{m} \mu_j \cdot R'_j}} {\prfassumption{\sui{n}\alpha_i \cdot \forall
        X. U_i \equiv \suj{m} \mu_j \cdot R'_j}} {\pi' = \Gamma \vdash \alpha
      \cdot \ve{t}: \sui{n}\alpha_i \cdot \forall X. U_i}
  \]
  So for all $j \in \{1,\dots,m\}$, $size(\pi') > size(\pi'_j)$.
  \inductioncase{Case $\forall_E$}
  \[
    \prftree[r]{$\forall_E$} {\pi = \Gamma \vdash \alpha \cdot \ve{t}:
      \sui{n}\alpha_i \cdot \forall X. U_i} {\pi' = \Gamma \vdash \alpha \cdot
      \ve{t}: \sui{n}\alpha_i \cdot U_i[A/X]}
  \]
  By the induction hypothesis there exist $R_1, \dots, R_m$, $\mu_1, \dots,
  \mu_m$ such that
  \begin{itemize}
  \item $\sui{n} \alpha_i \cdot \forall X. U_i \equiv \suj{m} \mu_j \cdot R_j$.
  \item $\pi_j = \Gamma \vdash \ve{t}: R_j$, with $size(\pi) > size(\pi_j)$, for
    $j \in \{1, \dots, m\}$.
  \item $\suj{m} \mu_j = \alpha$.
  \end{itemize}
  By applying Lemma~\ref{lem:sr:typecharact} for all $j \in \{1, \dots, m\}$,
  and since $\sui{n}\alpha_i \cdot \forall X. U_i$
  does not have any general variable $\vara{X}$, then $R_j \equiv \suk{h_j}\beta_{(j,k)}\cdot V_{(j,k)}$.\\
  Hence $\sui{n}\alpha_i \cdot \forall X. U_i \equiv \suj{m}\mu_j \cdot \suk{h_j}\beta_{(j,k)}\cdot V_{(j,k)}$.\\
  Without loss of generality, we assume that all unit types present at both
  sides of the equivalence are distinct, then by
  Lemma~\ref{lem:sr:equivdistinctscalars}, for all $j \in \{1, \dots, m\}$, $k
  \in \{1,\dots,h_j\}$, there exists $V'_{(j,k)}$ such that $V_{(j,k)} \equiv
  \forall X.V'_{(j,k)}$. Then,
  \[
    \sui{n}\alpha_i\cdot \forall{X}.U_i \equiv \suj{m}\mu_j \cdot
    \underbrace{\suk{h_j}\beta_{(j,k)}\cdot \forall X.V'_{(j,k)}}_{\equiv~R_j}
  \]
  By the same lemma, we have that
  \[
    \sui{n}\alpha_i\cdot U_i[A/X] \equiv \suj{m}\mu_j \cdot
    \underbrace{\suk{h_j}\beta_{(j,k)}\cdot V'_{(j,k)}[A/X]}_{\equiv~R'_j}
  \]
  We must prove that for all $j \in \{1,\dots,m\}$, $\pi'_j = \Gamma \vdash
  \ve{t}: R'_j$ and that $size(\pi') > size(\pi'_j)$. By applying the
  $\forall_E$ rule, we have
  \[
    \prftree[r]{$\forall_E$} {\Gamma \vdash \ve{t}: R_j} {\pi'_j = \Gamma \vdash
      \ve{t}: R'_j}
  \]
  And notice that using the $S$ rule, obtain
  \[
    \prftree[r]{$\equiv$} {\prftree[r]{$S$} {\pi'_j = \Gamma \vdash \ve{t}:
        R'_j} {\forall j \in \{1,\dots,m\}} {\Gamma \vdash \alpha \cdot \ve{t}:
        \suj{m} \mu_j \cdot R'_j}} {\prfassumption{\sui{n}\alpha_i \cdot
        U_i[A/X] \equiv \suj{m} \mu_j \cdot R'_j}} {\pi' = \Gamma \vdash \alpha
      \cdot \ve{t}: \sui{n}\alpha_i \cdot U_i[A/X]}
  \]
  So for all $j \in \{1,\dots,m\}$, $size(\pi') > size(\pi'_j)$.
\end{proof}

\xrecap{Lemma}{Sums}{lem:sr:sums}{ If $\Gamma\vdash\ve t+\ve r:S$, there exist
  $R$, $T$ such that
  \begin{itemize}
  \item $S \equiv T + R$.
  \item $\Gamma\vdash\ve t: T$.
  \item $\Gamma\vdash\ve r: R$.
  \end{itemize}
}
\begin{proof}
  By induction on the typing derivation. \inductioncase{Case $+_I$}
  \[
    \prftree[r]{$+_I$} {\Gamma \vdash \ve{t}: T} {\Gamma \vdash \ve{r}: R}
    {\Gamma \vdash \ve{t} + \ve{r}: T + R}
  \]
  Trivial. \inductioncase{Case $\equiv$}
  \[
    \prftree[r]{$\equiv$} {\Gamma \vdash \ve{t} + \ve{r}: P} {S \equiv P}
    {\Gamma \vdash \ve{t} + \ve{r}: S}
  \]
  By the induction hypothesis, ${S \equiv P \equiv T + R}$. \inductioncase{Case
    $1_E$}
  \[
    \prftree[r]{$1_E$} {\pi = \Gamma \vdash 1\cdot(\ve{t} + \ve{r}): T} {\Gamma
      \vdash \ve{t} + \ve{r}: T}
  \]
  By Lemma~\ref{lem:sr:scalars}, there exist $R_1, \dots, R_m$, $\beta_1, \dots,
  \beta_m$ such that
  \begin{itemize}
  \item ${T \equiv \suj{m} \beta_j \cdot R_j}$.
  \item $\pi_j = \Gamma \vdash \ve{t} + \ve{r}: R_j$ with $size(\pi) >
    size(\pi_j)$ for $j \in \{1, \dots, m\}$.
  \item $\suj{m} \beta_j = 1$
  \end{itemize}
  Since $size(\pi) > size(\pi_j)$, by applying the induction hypothesis for all
  $j \in \{1, \dots, m\}$,
  \begin{itemize}
  \item $R_j \equiv S_{(j,1)} + S_{(j,2)}$.
  \item $\Gamma\vdash\ve t: S_{(j,1)}$.
  \item $\Gamma\vdash\ve r: S_{(j,2)}$.
  \end{itemize}
  Then,
  \[
    T \equiv \suj{m} \beta_j \cdot R_j \equiv \suj{m} \beta_j \cdot (S_{(j,1)} +
    S_{(j,2)}) \equiv \suj{m} \beta_j \cdot S_{(j,1)} + \suj{m} \beta_j \cdot
    S_{(j,2)}
  \]
  We can rewrite $T$ as follows:
  \[
    P_1 = \suj{m} \beta_j \cdot S_{(j,1)}\qquad P_2 = \suj{m} \beta_j \cdot
    S_{(j,2)}\qquad T \equiv P_1 + P_2
  \]
  Finally, we must prove that $\Gamma \vdash \ve{t}: P_1$ and $\Gamma \vdash \ve{r}: P_2$.\\
  Since $\Gamma\vdash\ve t: S_{(j,1)}$ and $\Gamma\vdash\ve r: S_{(j,2)}$ for
  all $j \in \{1,\dots,m\}$, applying the $S$ rule in both cases we
  have 
  \[
    \prftree[r]{$S$} {\Gamma\vdash\ve t: S_{(j,1)}~\forall j \in \{1,\dots,m\}}
    {\Gamma \vdash 1\cdot \ve{t}: P_1} \qquad \prftree[r]{$S$} {\Gamma\vdash\ve
      t: S_{(j,2)}~\forall j \in \{1,\dots,m\}}
    {\Gamma \vdash 1\cdot \ve{r}: P_2}
  \]
  Applying the $1_E$ rule to both sequents, we have
  \[
    \Gamma\vdash \ve t: P_1 \qquad \Gamma\vdash \ve r: P_2
  \]
  Finally, by $\equiv$ rule, $\Gamma \vdash \ve{t} + \ve{r}: T$.
  \inductioncase{Case $\forall$}
  \[
    \prftree[r]{$\forall$} {\Gamma \vdash \ve{t} + \ve{r}: \sui{n}\alpha_i\cdot
      U_i} {\Gamma \vdash \ve{t} + \ve{r}: \sui{n}\alpha_i\cdot V_i}
  \]
  Rules $\forall_I$ and $\forall_E$ both have the same structure as shown above.
  In any case, by the induction hypothesis $\Gamma\vdash\ve t:T$ and
  $\Gamma\vdash\ve r:R$ with
  $T+R\equiv\sui{n}\alpha_i\cdot U_i$.\\
  Then, there exist $N,M\subseteq\{1,\dots,n\}$ with $N\cup M=\{1,\dots,n\}$
  such that
  \begin{align*}
    T\equiv\sum_{i\in N\setminus M}\alpha_i\cdot U_i+\sum_{i\in N\cap M}\alpha_i'\cdot U_i&\qquad\textrm{and}\qquad
                                                                                            R\equiv\sum_{i\in M\setminus N}\alpha_i\cdot U_i+\sum_{i\in N\cap M}\alpha_i''\cdot U_i&
  \end{align*}
  where $\forall i\in N\cap M$, $\alpha_i'+\alpha_i''=\alpha_i$.\\
  Therefore, using $\equiv$ (if needed) and the same $\forall$-rule,
  \begin{align*}
    T\equiv\sum_{i\in N\setminus M}\alpha_i\cdot V_i+\sum_{i\in N\cap M}\alpha_i'\cdot V_i&\qquad\textrm{and}\qquad
                                                                                            R\equiv\sum_{i\in M\setminus N}\alpha_i\cdot V_i+\sum_{i\in N\cap M}\alpha_i''\cdot V_i&
  \end{align*}
\end{proof}

\xrecap{Lemma}{Application}{lem:sr:app}{ If $\Gamma\vdash(\ve t)~\ve r:T$, there
  exist $R_1, \dots, R_h$, $\mu_1, \dots, \mu_h$, $\V_1,\dots,\V_h$ such that $T
  \equiv \suk{h} \mu_k \cdot R_k$, $\suk{h} \mu_k = 1$ and for all $k \in
  \{1,\dots,h\}$
  \begin{itemize}
	\item $\Gamma\vdash\ve t: \sui{n_k}{\alpha_{(k,i)} \cdot\forall\vec{X}.(U\to
      T_{(k,i)})}$.
	\item $\Gamma\vdash\ve r: \suj{m_k}\beta_{(k,j)}\cdot
    U[\vec{A}_{(k,j)}/\vec{X}]$.
  \item $\sui{n_k}\suj{m_k} \alpha_{(k,i)}\times\beta_{(k,j)}\cdot
    {T_{(k,i)}[\vec{A}_{(k,j)}/\vec{X}]} \ssubt_{\V_k,\Gamma} R_k$.
  \end{itemize}
}
\begin{proof}
  By induction on the typing derivation.
  \inductioncase{Case $\to_E$}
  \[
    \prftree[r]{$\to_E$}
    {\Gamma \vdash \ve{t}: \sui{n}{\alpha_i \cdot\forall\vec{X}.(U\to T_i)}}
    {\Gamma \vdash \ve{r}: \suj{m}\beta_j\cdot U[\vec{A}_j/\vec{X}]}
    {\Gamma \vdash (\ve t)~\ve r: \sui{n}\suj{m} \alpha_i\times\beta_j\cdot {T_i[\vec{A}_j/\vec{X}]}}
  \]
  Take $\mu_1, \dots, \mu_h$ such that $\suk{h} \mu_k = 1$, then
  \[
    \sui{n}\suj{m} \alpha_i\times\beta_j\cdot T_i[\vec{A}_j/\vec{X}] \equiv \suk{h} \mu_k \cdot \sui{n}\suj{m} \alpha_i\times\beta_j\cdot T_i[\vec{A}_j/\vec{X}]
  \]
  So this is the trivial case.
  \inductioncase{Case $\equiv$}
  \[
    \prftree[r]{$\equiv$}
    {\Gamma \vdash (\ve t)~\ve r: P}
    {S \equiv P}
    {\Gamma \vdash (\ve t)~\ve r: S}
  \]
  By the induction hypothesis, there exist $R_1, \dots, R_h$, $\mu_1, \dots, \mu_h$, $\V_1,\dots,\V_h$ such that $P \equiv S \equiv \suk{h} \mu_k \cdot R_k$, $\suk{h} \mu_k = 1$
  and for all $k \in \{1,\dots,h\}$,
  \begin{itemize}
  \item $\Gamma\vdash\ve t: \sui{n_k}{\alpha_{(k,i)} \cdot\forall\vec{X}.(U\to T_{(k,i)})}$.
  \item $\Gamma\vdash\ve r: \suj{m_k}\beta_{(k,j)}\cdot U[\vec{A}_{(k,j)}/\vec{X}]$.
  \item $\sui{n_k}\suj{m_k} \alpha_{(k,i)}\times\beta_{(k,j)}\cdot {T_{(k,i)}[\vec{A}_{(k,j)}/\vec{X}]} \ssubt_{\V_k,\Gamma} R_k$.
  \end{itemize}
  So the lemma holds.
  \inductioncase{Case $1_E$}
  \[
    \prftree[r]{$1_E$}
    {\pi = \Gamma \vdash 1\cdot(\ve t)~\ve r: T}
    {\Gamma \vdash (\ve t)~\ve r: T}
  \]
  By Lemma~\ref{lem:sr:scalars}, there exist $R_1, \dots, R_h$, $\mu_1, \dots, \mu_h$ such that
  \begin{itemize}
  \item $T \equiv \suk{h} \mu_k \cdot R_k$.
  \item $\pi_k = \Gamma \vdash (\ve t)~\ve r: R_k$, with $size(\pi) > size(\pi_k)$, for $k \in \{1, \dots, h\}$..
  \item $\suk{h} \mu_k = 1$.
  \end{itemize}
  Since $size(\pi) > size(\pi_k)$, we apply the inductive hypothesis for all $k \in \{1, \dots, h\}$ (and omiting the $k$ index for readability),
  so there exist $S_{1}, \dots, S_{p}$, $\eta_{1}, \dots, \eta_{p}$, $\V_{1},\dots,\V_{p}$  such that
  $R \equiv \sug{q}{p} \eta_{q} \cdot S_{q}$, $\sug{q}{p} \eta_{q} = 1$ and for all $q \in \{1,\dots,p\}$,
  \begin{itemize}
  \item $\Gamma\vdash\ve t: \sui{n_{q}}{\alpha_{(q,i)} \cdot\forall\vec{X}.(U\to T_{(q,i)})}$.
  \item $\Gamma\vdash\ve r: \suj{m_{q}}\beta_{(q,j)}\cdot U[\vec{A}_{(q,j)}/\vec{X}]$.
  \item $\sui{n_{q}}\suj{m_{q}} \alpha_{(q,i)}\times\beta_{(q,j)}\cdot {T_{(q,i)}[\vec{A}_{(q,j)}/\vec{X}]} \ssubt_{\V_{q},\Gamma} S_{q}$.
  \end{itemize}
  Then
  \[
    T \equiv \suk{h} \mu_k \cdot R_k \equiv \suk{h} \mu_k \cdot \sug{q}{p_k} \eta_{(k,q)} \cdot S_{(k,q)} \equiv \suk{h}\sug{q}{p_k} (\mu_k \times  \eta_{(k,q)}) \cdot S_{(k,q)}
  \]
  Finally, we must prove that $\suk{h}\sug{q}{p_k} (\mu_k \times  \eta_{(k,q)}) = 1$,
  \[
    \suk{h}\sug{q}{p_k} (\mu_k \times  \eta_{(k,q)}) = \suk{h} \mu_k \cdot \underbrace{\sug{q}{p_k} \eta_{(k,q)}}_{=~1} = \suk{h} \mu_k = 1
  \]

  \inductioncase{Case $\forall_I$}
  \[
    \prftree[r]{$\forall_I$}
    {\pi' = \Gamma \vdash (\ve{t})~\ve{r}: \sug{a}{b}\sigma_a \cdot V_a}
    {X \notin \FV{\Gamma}}
    {\Gamma \vdash (\ve{t})~\ve{r}: \sug{a}{b}\sigma_a \cdot \forall X. V_a}
  \]
  By the induction hypothesis there exist $R_1,\dots,R_h$, $\mu_1,\dots,\mu_h$, $\V_1,\dots,\V_h$ such that
  $\sug{a}{b}\sigma_{a} \cdot V_a \equiv \suk{h} \mu_k \cdot R_k$, $\suk{h} \mu_k = 1$ and for all $k \in \{1,\dots,h\}$,
  \begin{itemize}
  \item $\Gamma\vdash\ve t: \sui{n_k}{\alpha_{(k,i)} \cdot\forall\vec{X}.(U\to T_{(k,i)})}$.
  \item $\Gamma\vdash\ve r: \suj{m_k}\beta_{(k,j)}\cdot U[\vec{A}_{(k,j)}/\vec{X}]$.
  \item $\sui{n_k}\suj{m_k} \alpha_{(k,i)}\times\beta_{(k,j)}\cdot {T_{(k,i)}[\vec{A}_{(k,j)}/\vec{X}]} \ssubt_{\V_k,\Gamma} R_k$.
  \end{itemize}
  By Lemma~\ref{lem:sr:typecharact}, and since $\sug{a}{b}\sigma_{a} \cdot V_a$ does not have any general variable,
  then for all $k \in \{1,\dots,h\}$, $R_k \equiv \sug{c}{d_k}\eta_{(k,c)}\cdot W_{(k,c)}$.\\
  Hence $\sug{a}{b}\sigma_{a} \cdot V_a \equiv \suk{h}\mu_h \cdot \sug{c}{d_k}\eta_{(k,c)}\cdot W_{(k,c)}$.\\
  Without loss of generality, assuming all unit types are distinct (not equivalent),
  then by Lemma~\ref{lem:sr:equivforall},
  \[
    \sug{a}{b}\sigma_a \cdot \forall X. V_a \equiv
    \suk{h}\mu_k \cdot \underbrace{\sug{c}{d_k}\eta_{(k,c)}\cdot \forall X.W_{(k,c)}}_{R'_k}
  \]
  Finally, for all $k \in \{1,\dots,h\}$ we must prove that
  $\sui{n_k}\suj{m_k} \alpha_{(k,i)}\times\beta_{(k,j)}\cdot {T_{(k,i)}[\vec{A}_{(k,j)}/\vec{X}]} \ssubt_{\V'_k,\Gamma} R'_k$.\\
  Notice that $R_k \ssubt_{\V_k \cup \{X\},\Gamma} R'_k$, then by definition of $\ssubt$, taking $\V'_k = \V_k \cup \{X\}$,\\
  $\sui{n_k}\suj{m_k} \alpha_{(k,i)}\times\beta_{(k,j)}\cdot {T_{(k,i)}[\vec{A}_{(k,j)}/\vec{X}]} \ssubt_{\V'_k,\Gamma} R'_k$.

  \inductioncase{Case $\forall_E$}
  \[
    \prftree[r]{$\forall_E$}
    {\Gamma \vdash (\ve{t})~\ve{r}: \sug{a}{b}\sigma_a \cdot \forall X.V_a}
    {\Gamma \vdash (\ve{t})~\ve{r}: \sug{a}{b-1}\sigma_a \cdot V_a[A/X]}
  \]
  By the induction hypothesis there exist $R_1,\dots,R_h$, $\mu_1,\dots,\mu_h$, $\V_1,\dots,\V_h$ such that
  $\sug{a}{b}\sigma_a \cdot \forall X.V_a \equiv \suk{h} \mu_k \cdot R_k$,
  $\suk{h} \mu_k = 1$ and for all $k \in \{1,\dots,h\}$,
  \begin{itemize}
  \item $\Gamma\vdash\ve t: \sui{n_k}{\alpha_{(k,i)} \cdot\forall\vec{X}.(U\to T_{(k,i)})}$.
  \item $\Gamma\vdash\ve r: \suj{m_k}\beta_{(k,j)}\cdot U[\vec{A}_{(k,j)}/\vec{X}]$.
  \item $\sui{n_k}\suj{m_k} \alpha_{(k,i)}\times\beta_{(k,j)}\cdot {T_{(k,i)}[\vec{A}_{(k,j)}/\vec{X}]} \ssubt_{\V_k,\Gamma} R_k$.
  \end{itemize}
  By Lemma~\ref{lem:sr:typecharact}, and since $\sug{a}{b}\sigma_a \cdot \forall X.V_a$ does not have any general variable,
  $R_k \equiv \sug{c}{d_k}\eta_{(k,c)}\cdot W_{(k,c)}$.\\
  Hence $\sug{a}{b}\sigma_a \cdot \forall X.V_a \equiv \suk{h} \mu_k \cdot \sug{c}{d_k}\eta_{(k,c)}\cdot W_{(k,c)}$.\\
  Without loss of generality, we assume that all unit types present at both sides of the equivalence are distinct,
  then by Lemma~\ref{lem:sr:equivforall}, for all $k \in \{1,\dots,h\}, c \in \{1,\dots,d_k\}$, there exists $W'_{(k,c)}$
  such that $W_{(k,c)} \equiv \forall X.W'_{(k,c)}$, so we have
  \[
    \sug{a}{b}\sigma_a \cdot \forall X. V_a \equiv
    \suk{h}\mu_k \cdot \underbrace{\sug{c}{d_k}\eta_{(k,c)}\cdot \forall X.W'_{(k,c)}}_{R_k}
  \]
  By the same lemma, we have that
  \[
    \sug{a}{b}\sigma_a \cdot V_a[A/X] \equiv
    \suk{h}\mu_k \cdot \underbrace{\sug{c}{d_k}\eta_{(k,c)}\cdot W'_{(k,c)}[A/X]}_{R'_k}
  \]
  Finally, for all $k \in \{1,\dots,h\}$ we must prove that
  $\sui{n_k}\suj{m_k} \alpha_{(k,i)}\times\beta_{(k,j)}\cdot {T_{(k,i)}[\vec{A}_{(k,j)}/\vec{X}]} \ssubt_{\V'_k,\Gamma} R'_k$.\\
  Notice that $R_k \ssubt_{\V_k \cup \{X\},\Gamma} R'_k$, then by definition of $\ssubt$, taking $\V'_k = \V_k \cup \{X\}$,\\
  $\sui{n_k}\suj{m_k} \alpha_{(k,i)}\times\beta_{(k,j)}\cdot {T_{(k,i)}[\vec{A}_{(k,j)}/\vec{X}]} \ssubt_{\V'_k,\Gamma} R'_k$.
\end{proof}

\xrecap{Lemma}{Abstractions}{lem:sr:abs}{
  If $\Gamma\vdash\lambda x.\ve t:T$, then there exist $T_1,\dots,T_n$, $R_1,\dots,R_n$, $U_1,\dots,U_n$,
  $\alpha_1,\dots,\alpha_n$, $\V_1,\dots,\V_n$ such
  that $T \equiv \sui{n} \alpha_i \cdot T_i$, $\sui{n} \alpha_i = 1$ and for all $i \in \{1,\dots,n\}$,
  \begin{itemize}
  \item $\Gamma,x:U_i\vdash\ve t:R_i$.
  \item $U_i \to R_i \ssubt_{\V_i,\Gamma} T_i$.
  \end{itemize}
}
\begin{proof}
  By induction on the typing derivation
  \inductioncase{Case $\to_I$}
  \[
    \prftree[r]{$\to_I$}
    {\Gamma,x:U\vdash\ve t:R}
    {\Gamma\vdash\lambda x.\ve t: U \to R}
  \]
  Trivial.
  \inductioncase{Case $\equiv$}
  \[
    \prftree[r]{$\equiv$}
    {\Gamma\vdash\lambda x.\ve t: R}
    {R \equiv T}
    {\Gamma \vdash \lambda x.\ve t: T}
  \]
  By the induction hypothesis, there exist $T_1,\dots,T_n$, $R_1,\dots,R_n$, $U_1,\dots,U_n$,
  $\alpha_1,\dots,\alpha_n$, $\V_1,\dots,\V_n$ such
  that $T \equiv R \equiv \sui{n} \alpha_i \cdot T_i$, $\sui{n} \alpha_i = 1$ and for all $i \in \{1,\dots,n\}$,
  \begin{itemize}
  \item $\Gamma,x:U_i\vdash\ve t:R_i$.
  \item $U_i \to R_i \ssubt_{\V_i,\Gamma} T_i$.
  \end{itemize}
  So the lemma holds.
  \inductioncase{Case $1_E$}
  \[
    \prftree[r]{$1_E$}
    {\pi = \Gamma\vdash 1\cdot(\lambda x.\ve t): T}
    {\Gamma\vdash \lambda x.\ve t: T}
  \]
  By Lemma~\ref{lem:sr:scalars}, there exist $R_1, \dots, R_m$, $\beta_1, \dots, \beta_m$ such that
  \begin{itemize}
  \item $T \equiv \suj{m}\beta_i \cdot R_j$.
  \item $\pi_i = \Gamma \vdash \ve{t}: R_j$, with $size(\pi) > size(\pi_j)$, for $j \in \{1, \dots, n\}$.
  \item $\suj{n} \beta_i = 1$.
  \end{itemize}
  Since $size(\pi) > size(\pi_j)$, by induction hypothesis, for all $j \in \{1, \dots, n\}$
  there exist $S_{(j,1)},\dots,S_{(j,n_j)}$, $P_{(j,1)},\dots,P_{(j,n_j)}$, $U_{(j,1)},\dots,U_{(j,n_j)}$,
  $\eta_{(j,1)},\dots,\eta_{(j,n_j)}$, $\V_{(j,1)},\dots,\V_{(j,n_j)}$ such
  that $R_j \equiv \sui{n_j} \eta_{(j,i)} \cdot S_{(j,i)}$, $\sui{n_j} \eta_{(j,i)} = 1$ and for all $i \in \{1,\dots,n_j\}$,
  \begin{itemize}
  \item $\Gamma,x:U_{(j,i)}\vdash\ve t:P_{(j,i)}$.
  \item $U_{(j,i)} \to P_{(j,i)} \ssubt_{\V_{(j,i)},\Gamma} S_{(j,i)}$.
  \end{itemize}
  Then we have
  \[
    T \equiv \suj{m} \beta_j \cdot R_j \equiv \suj{m} \beta_j \cdot \sui{n_j} \eta_{(j,i)} \cdot S_{(j,i)} \equiv \suj{m}\sui{n_j} (\beta_j \times  \eta_{(j,i)}) \cdot S_{(j,i)}
  \]
  Finally, we must prove that $\suj{m}\sui{n_j} (\beta_j \times  \eta_{(j,i)}) = 1$:
  \[
    \suj{m}\sui{n_j} (\beta_j \times  \eta_{(j,i)}) = \suj{m} \beta_j \cdot \underbrace{\sui{n_j} \eta_{(j,i)}}_{=~1} = \suj{m} \beta_j = 1
  \]
  \inductioncase{Case $\forall_I$}
  \[
    \prftree[r]{$\forall_I$}
    {\Gamma \vdash \lambda x.\ve t: \sui{n} \alpha_i \cdot U_i}
    {\prfassumption{X \notin \FV{\Gamma}}}
    {\Gamma \vdash \lambda x.\ve t: \sui{n} \alpha_i \cdot \forall X.U_i}
  \]
  By the induction hypothesis, there exist $T_1,\dots,T_m$, $R_1,\dots,R_m$, $V_1,\dots,V_m$,
  $\alpha_1,\dots,\alpha_m$, $\V_1,\dots,\V_m$ such
  that $\sui{n} \alpha_i \cdot U_i \equiv \suj{m} \mu_j \cdot T_j$, $\suj{m} \mu_i = 1$ and for all $j \in \{1,\dots,m\}$,
  \begin{itemize}
  \item $\Gamma,x:V_j\vdash\ve t:R_j$.
  \item $V_j \to R_j \ssubt_{\V_j,\Gamma} T_j$.
  \end{itemize}
  By Lemma~\ref{lem:sr:typecharact}, and since $\sui{n} \alpha_i \cdot U_i$ does not have any general variable $\vara{X}$, then
  $T_i \equiv \sug{k}{h_j} \beta_{(j,k)} \cdot W_{(j,k)}$.
  Hence $\sui{n} \alpha_i \cdot U_i \equiv \suj{m} \mu_i \cdot \sug{k}{h_j} \beta_{(j,k)} \cdot W_{(j,k)}$.
  Without loss of generality, assuming all unit types are distinct (not equivalent),
  then by Lemma~\ref{lem:sr:equivforall},
  \[
    \sui{n} \alpha_i \cdot U_i \equiv
    \suj{m} \mu_i \cdot \underbrace{\sug{k}{h_j} \beta_{(j,k)} \cdot \forall X.W_{(j,k)}}_{T'_j}
  \]
  Finally, we must prove that $V_j \to R_j \ssubt_{\V'_j,\Gamma} T'_j$ for some $\V'_j$.
  Since $V_j \to R_j \ssubt_{\V_j,\Gamma} T_j$ and $T_j \ssubt_{\V''_j,\Gamma} T'_j$, then by $\ssubt$ and using $\V'_j = \V_j \cup \V''_j$,
  we conclude that $V_j \to R_j \ssubt_{\V'_j,\Gamma} T'_j$.
  \inductioncase{Case $\forall_E$}
  \[
    \prftree[r]{$\forall_E$}
    {\Gamma \vdash \lambda x.\ve t: \sui{n} \alpha_i \cdot \forall X.U_i}
    {\Gamma \vdash \lambda x.\ve t: \sui{n} \alpha_i \cdot U_i[A/X]}
  \]
  By the induction hypothesis, there exist $T_1,\dots,T_m$, $R_1,\dots,R_m$, $V_1,\dots,V_m$,
  $\alpha_1,\dots,\alpha_m$, $\V_1,\dots,\V_m$ such
  that $\sui{n} \alpha_i \cdot \forall X.U_i \equiv \suj{m} \mu_j \cdot T_j$, $\suj{m} \mu_i = 1$ and for all $j \in \{1,\dots,m\}$,
  \begin{itemize}
  \item $\Gamma,x:V_j\vdash\ve t:R_j$.
  \item $V_j \to R_j \ssubt_{\V_j,\Gamma} T_j$.
  \end{itemize}
  By Lemma~\ref{lem:sr:typecharact}, and since $\sui{n} \alpha_i \cdot U_i$ does not have any general variable $\vara{X}$, then
  $T_i \equiv \sug{k}{h_j} \beta_{(j,k)} \cdot W_{(j,k)}$.
  Hence $\sui{n} \alpha_i \cdot U_i \equiv \suj{m} \mu_i \cdot \sug{k}{h_j} \beta_{(j,k)} \cdot W_{(j,k)}$.
  Without loss of generality, assuming all unit types are distinct (not equivalent),
  then by Lemma~\ref{lem:sr:equivforall}, for all $j \in \{1, \dots, m\}$, $k \in \{1,\dots,h_j\}$,
  there exists $W'_{(j,k)}$ such that $W_{(j,k)} \equiv \forall X.W'_{(j,k)}$.
  Then,
  \[
    \sui{n} \alpha_i \cdot \forall X.U_i \equiv
    \suj{m} \mu_i \cdot \underbrace{\sug{k}{h_j} \beta_{(j,k)} \cdot \forall X.W'_{(j,k)}}_{T_j}
  \]
  By the same lemma, we have that
  \[
    \sui{n} \alpha_i \cdot U_i[A/X] \equiv
    \suj{m} \mu_i \cdot \underbrace{\sug{k}{h_j} \beta_{(j,k)} \cdot W'_{(j,k)}[A/X]}_{T'_j}
  \]
  Finally, we must prove that $V_j \to R_j \ssubt_{\V'_j,\Gamma} T'_j$ for some $\V'_j$.
  Since $V_j \to R_j \ssubt_{\V_j,\Gamma} T_j$ and $T_j \ssubt_{\V''_j,\Gamma} T'_j$, then by $\ssubt$ and using $\V'_j = \V_j \cup \V''_j$,
  we conclude that $V_j \to R_j \ssubt_{\V'_j,\Gamma} T'_j$.
\end{proof}

\xrecap{Lemma}{Basis terms}{lem:sr:basevectors}{
  For any context $\Gamma$, type $T$ and basis term $\ve{b}$, if
  $\Gamma\vdash\ve{b}: T$ there exist $U_1, \dots, U_n$, $\alpha_1, \dots, \alpha_n$ such that
  \begin{itemize}
  \item $T \equiv \sui{n} \alpha_i \cdot U_i$.
  \item $\Gamma\vdash\ve{b}: U_i$, for $i \in \{1,\dots,n\}$.
  \item $\sui{n} \alpha_i = 1$.
  \end{itemize}
}
\begin{proof}
  By induction on the typing derivation.
  \inductioncase{Case $ax$}
  \[
    \prftree[r]{$ax$}
    {}
    {\Gamma, x:{U}\vdash x:{U}}
    \raisebox{9pt}{\qquad\text{and}\qquad}
    \prftree[r]{$\to_I$}
    {\Gamma,x:U\vdash\ve t:T}
    {\Gamma\vdash\lambda x.\ve t: U \to T}
  \]
  Trivial cases.
  \inductioncase{Case $\equiv$}
  \[
    \prftree[r]{$\equiv$}
    {\Gamma\vdash\ve{b}: R}
    {R \equiv T}
    {\Gamma \vdash \ve{b}: T}
  \]
  By the induction hypothesis, there exist $U_1, \dots, U_n$, $\alpha_1, \dots, \alpha_n$ such that
  \begin{itemize}
  \item $T \equiv R \equiv \sui{n} \alpha_i \cdot U_i$.
  \item $\Gamma\vdash\ve{b}: U_i$, for $i \in \{1,\dots,n\}$.
  \item $\sui{n} \alpha_i = 1$.
  \end{itemize}
  So the lemma holds.
  \inductioncase{Case $1_E$}
  \[
    \prftree[r]{$1_E$}
    {\pi = \Gamma\vdash 1\cdot\ve{b}: T}
    {\Gamma\vdash \ve{b}: T}
  \]
  By Lemma~\ref{lem:sr:scalars}, there exist $R_1, \dots, R_m$, $\beta_1, \dots, \beta_m$ such that
  \begin{itemize}
  \item $T \equiv \suj{m}\beta_j \cdot R_j$.
  \item $\suj{m} \beta_j = 1$, and $\pi_j = \Gamma \vdash \ve{b}: R_j$ with $size(\pi) > size(\pi_j)$ for $j = \{1, \dots, m\}$.
  \item $\suj{m} \beta_j = 1$.
  \end{itemize}
  Since $size(\pi) > size(\pi_j)$, by induction hypothesis, for all $j = \{1, \dots, m\}$ there exist
  $U_{(j,1)},\dots,U_{(j,n_j)}$, $\alpha_{(j,1)},\dots,\alpha_{(j,n_j)}$ such that
  \begin{itemize}
  \item $R_j \equiv \sui{n_j} \alpha_{(j,i)} \cdot U_{(j,i)}$.
  \item $\Gamma\vdash\ve{b}: U_{(j,i)}$, for $i \in \{1,\dots,n_j\}$.
  \item $\sui{n_j} \alpha_{(j,i)} = 1$.
  \end{itemize}
  Then
  \[
    T \equiv \suj{m} \beta_j \cdot R_j \equiv \suj{m} \beta_j \cdot \sui{n_j} \alpha_{(j,i)} \cdot U_{(j,i)} \equiv \suj{m}\sui{n_j} (\beta_j \times \alpha_{(j,i)}) \cdot U_{(j,i)}
  \]
  Finally, we must prove that $\suj{m}\sui{n_j} (\beta_j \times \alpha_{(j,i)}) = 1$:
  \[
    \suj{m}\sui{n_j} (\beta_j \times \alpha_{(j,i)}) = \suj{m} \beta_j \cdot \underbrace{\sui{n_j} \alpha_{(j,i)}}_{=~1} = \suj{m} \beta_j = 1
  \]
  \inductioncase{Case $\forall$}
  \[
    \prftree[r]{$\forall$}
    {\Gamma \vdash \ve{b}: \sui{n} \alpha_i \cdot U_i}
    {\Gamma \vdash \ve{b}: \sui{n} \alpha_i \cdot V_i}
  \]
  $\forall$-rules ($\forall_I$ and $\forall_E$) both have the same structure as shown above.\\
  In both cases, by the induction hypothesis, there exist $W_1, \dots, W_m$, $\beta_1, \dots, \beta_m$ such that
  \begin{itemize}
  \item $\sui{n} \alpha_i \cdot U_i \equiv \suj{m} \beta_j \cdot W_j$.
  \item $\Gamma\vdash\ve{b}: W_j$, for $j \in \{1,\dots,m\}$.
  \item $\suj{m} \beta_j = 1$.
  \end{itemize}
  Without loss of generality, we assume that all unit types present at both sides of the equivalence are distinct,
  so by Lemma~\ref{lem:sr:equivdistinctscalars},
  then $m = n$ and there exists a permutation $p$ of $m$ such that
  for all $i \in \{1,\dots,n\}$, then $U_i = W_{p(i)}$ and $\alpha_i = \beta_{p(i)}$,
  which means that $\sui{n} \alpha_i = 1$.
  Finally, by applying the corresponding $\forall$ rule for all $i \in \{1,\dots,n\}$, we have
  \[
    \prftree[r]{$\forall$}
    {\Gamma \vdash \ve{b}: U_i}
    {\Gamma \vdash \ve{b}: V_i}
  \]
\end{proof}

\xrecap{Lemma}{Substitution lemma}{lem:sr:substitution}{
  For any term ${\ve t}$, basis term $\ve b$, term variable $x$, context $\Gamma$, types $T$, $U$, type variable $X$ and type $A$, where $A$ is a unit type if $X$ is a unit variable, otherwise $A$ is a general type, we have,
  \begin{enumerate}
  \item If $\Gamma\vdash\ve{t}: T$, then $\Gamma[A/X]\vdash\ve{t}: T[A/X]$;
  \item If $\Gamma,x:U\vdash\ve t:T$ and $\Gamma\vdash\ve b:U$, then $\Gamma\vdash\ve t[\ve b/x]: T$.
  \end{enumerate}
}
\begin{proof}~
  \textleadbydots{Item (1)}
  Induction on the typing derivation.
  \inductioncase{Case $ax$}
  \[
    \prftree[r]{$ax$}
    {}
    {\Gamma, x: {U}: \vdash x: U}
  \]
  Notice that ${\Gamma[A/X],x:U[A/X]\vdash x:U[A/X]}$ can also be derived with the same rule.
  \inductioncase{Case $\to_I$}
  \[
    \prftree[r]{$\to_I$}
    {\Gamma,x:U\vdash\ve t:T}
    {\Gamma\vdash\lambda x.\ve t:U\to T}
  \]
  By the induction hypothesis $\Gamma[A/X],x:U[A/X]\vdash\ve t:T[A/X]$, so by rule $\to_I$, $\Gamma[A/X]\vdash\lambda x.\ve t:U[A/X]\to T[A/X]=(U\to T)[A/X]$.
  \inductioncase{Case $\to_E$}
  \[
    \prftree[r]{$\to_E$}
    {\Gamma\vdash\ve t:\sui{n}\alpha_i\cdot\forall\vec Y.(U\to T_i)}
    {\Gamma\vdash\ve r:\suj{m}\beta_j\cdot U[\vec B_j/\vec Y]}
    {\Gamma\vdash(\ve t)~\ve r:\sui{n}\suj{m}\alpha_i\times\beta_j\cdot T_i[\vec B_j/\vec Y]}
  \]
  By the induction hypothesis
  $\Gamma[A/X]\vdash\ve t:(\sui{n}\alpha_i\cdot\forall\vec Y.(U\to T_i))[A/X]$ and this type is equal to $\sui{n}\alpha_i\cdot\forall\vec Y.(U[A/X]\to T_i[A/X])$.
  Also $\Gamma[A/X]\vdash\ve r:(\suj{m}\beta_j\cdot U[\vec B_j/\vec Y])[A/X]=
  \suj{m}\beta_j\cdot U[\vec B_j/\vec Y][A/X]$.
  Since $\vec Y$ is bound, we can consider $\vec{Y} \notin \FV{A}$.
  Hence $U[\vec B_j/\vec Y][A/X]=U[A/X][\vec B_j[A/X]/\vec Y]$, and so, by rule $\to_E$,
  \begin{align*}
    \Gamma[A/X]\vdash(\ve t)~\ve r&:\sui{n}\suj{m}\alpha_i\times\beta_j\cdot T_i[A/X][\vec B_j[A/X]/\vec Y]\\
                                  &=\left(\sui{n}\suj{m}\alpha_i\times\beta_j\cdot T_i[\vec B_j/\vec Y]\right)[A/X]
  \end{align*}
  \inductioncase{Case $\forall_I$}
  \[
    \prftree[r]{$\forall_I$}
    {\Gamma\vdash\ve t:\sui{n}\alpha_i\cdot U_i}
    {Y\notin\FV{\Gamma}}
    {\Gamma\vdash\ve{t}:\sui{n}\alpha_i\cdot \forall Y.U_i }
  \]
  By the induction hypothesis,
  $\Gamma[A/X]\vdash\ve t:(\sui{n}\alpha_i\cdot
  U_i)[A/X]=\sui{n}\alpha_i\cdot U_i[A/X]$.
  Then, by
  rule $\forall_I$, $\Gamma[A/X]\vdash\ve
  t:\sui{n}\alpha_i\cdot \forall Y.U_i[A/X]=(\sui{n}\alpha_i\cdot \forall Y.U_i)[A/X]$.
  Since $Y$ is bound, we can consider $Y \notin \FV{A}$.
  \inductioncase{Case $\forall_E$}
  \[
    \prftree[r]{$\forall_E$}
    {\Gamma\vdash\ve t:\sui{n}\alpha_i\cdot \forall Y.U_i}
    {\Gamma\vdash\ve t:\sui{n}\alpha_i\cdot U_i[B/Y]}
  \]
  By the induction
  hypothesis $\Gamma[A/X]\vdash\ve
  t:(\sui{n}\alpha_i\cdot \forall Y.U_i)[A/X]=\sui{n}\alpha_i\cdot \forall Y.U_i[A/X]$.
  Since $Y$ is bound, we can
  consider $Y \notin \FV{A}$.
  Then by rule $\forall_E$,
  ${\Gamma[A/X]\vdash\ve t:\sui{n}\alpha_i\cdot U_i[A/X][B/Y]}$.
  We can consider $X\notin\FV{B}$ (in
  other case, just take $B[A/X]$ in the $\forall$-elimination), hence
  \[
    \sui{n}\alpha_i\cdot U_i[A/X][B/Y] =\sui{n}\alpha_i\cdot U_i[B/Y][A/X] =\left(\sui{n}\alpha_i\cdot U_i[B/Y]\right)[A/X]
  \]
  \inductioncase{Case $S$}
  \[
    \prftree[r]{$S$}
    {\Gamma\vdash\ve t:T_i~\forall i \in \{1,\dots,n\}}
    {\Gamma\vdash \left(\sui{n} \alpha_i\right) \cdot\ve t: \sui{n} \alpha_i \cdot T_i}
  \]
  By the induction hypothesis, for all $i \in \{1, \dots, n\}$,  $\Gamma[A/X]\vdash\ve t:T_i[A/X]$,
  so by rule $S$,
  $\Gamma[A/X]\vdash \left(\sui{n} \alpha_i\right) \cdot\ve t:\sui{n} \alpha_i \cdot T_i[A/X]={(\sui{n} \alpha_i \cdot T_i)[A/X]}$.
  \inductioncase{Case $+_I$}
  \[
    \prftree[r]{$+_I$}
    {\Gamma\vdash\ve t:T}
    {\Gamma\vdash\ve r:R}
    {\Gamma\vdash\ve t+\ve r:T+R}
  \]
  By the induction hypothesis $\Gamma[A/X]\vdash\ve t:T[A/X]$ and $\Gamma[A/X]\vdash\ve r:R[A/X]$, so by rule $+_I$, ${\Gamma[A/X]\vdash\ve t+\ve r:T[A/X]+R[A/X]=(T+R)[A/X]}$.
  \inductioncase{Case $\equiv$}
  \[
    \prftree[r]{$\equiv$}
    {\Gamma\vdash\ve t:T}
    {T\equiv R}
    {\Gamma\vdash\ve t:R}
  \]
  By the induction hypothesis $\Gamma[A/X]\vdash\ve t:T[A/X]$, and since $T\equiv R$, then $T[A/X]\equiv R[A/X]$, so by rule $\equiv$, $\Gamma[A/X]\vdash\ve t:R[A/X]$.
  \inductioncase{Case $1_E$}
  \[
    \prftree[r]{$1_E$}
    {\Gamma\vdash1\cdot\ve t:T}
    {\Gamma\vdash\ve t: T}
  \]
  By the induction hypothesis $\Gamma[A/X]\vdash1\cdot\ve t:T[A/X]$.
  By rule $1_E$, $\Gamma[A/X]\vdash\ve t:T[A/X]$.

  \medskip
  \textleadbydots{Item (2)}
  We proceed by induction on the typing derivation of $\Gamma,x:U\vdash\ve t:T$.
  \inductioncase{Case $ax$}
  \[
    \prftree[r]{$ax$}
    {\Gamma,x:U\vdash\ve t:T}
  \]
  Cases:
  \begin{itemize}
  \item $\ve t=x$, then $T=U$, and so $\Gamma\vdash\ve t[\ve b/x]:T$
    and $\Gamma\vdash\ve b:U$ are the same sequent.
  \item $\ve t=y$.
    Notice that $y[\ve b/x]=y$.
   \arxiv{By Lemma~\ref{lem:sr:weakening}}\conf{By weakening} $\Gamma,x:U\vdash y:T$ implies
    $\Gamma\vdash y:T$.
  \end{itemize}
  \inductioncase{Case $\to_I$}
  \[
    \prftree[r]{$\to_I$}
    {\Gamma,x:U,y:V\vdash \ve{r}:R}
    {\Gamma,x:U\vdash\lambda x.\lambda y.\ve{r}:V\to R}
  \]
  Since our system admits weakening \arxiv{(Lemma~\ref{lem:sr:weakening})}, the sequent $\Gamma,y:V\vdash\ve b:U$
  is derivable.
  Then by the induction hypothesis,
  $\Gamma,y:V\vdash\ve r[\ve b/x]:R$, from where, by
  rule $\to_I$, we obtain $\Gamma\vdash\lambda y.\ve r[\ve
  b/x]:V\to R$.
  We conclude, since
  $\lambda y.\ve r[\ve b/x]=(\lambda y.\ve r)[\ve b/x]$.
  \inductioncase{Case $\to_E$}
  \[
    \prftree[r]{$\to_E$}
    {\Gamma,x:U\vdash\ve r:\sui{n}\alpha_i\cdot\forall\vec Y.(V\to T_i)}
    {\Gamma,x:U\vdash\ve u:\suj{m}\beta_j\cdot V[\vec B/\vec Y]}
    {\Gamma,x:U\vdash(\ve r)~\ve u:\sui{n}\suj{m}\alpha_i\times\beta_j\cdot R_i[\vec B/\vec Y]}
  \]
  By the induction hypothesis,
  $\Gamma\vdash\ve r[\ve b/x]:\sui{n}\alpha_i\cdot\forall\vec Y.(V\to R_i)$
  and
  $\Gamma\vdash\ve u[\ve b/x]:\suj{m}\beta_j\cdot V[\vec B/\vec Y]$.
  Then, by rule $\to_E$,
  $\Gamma
  \vdash \ve r[\ve b/x]~\ve u[\ve b/x]:
  \sui{n}\suj{m}\alpha_i\times\beta_j\cdot R_i[\vec B/\vec Y]$.
  \inductioncase{Case $\forall_I$}
  \[
    \prftree[r]{$\forall_I$}
    {\Gamma,x:U\vdash\ve t:\sui{n}\alpha_i\cdot V_i}
    {Y\notin\FV{\Gamma}\cup\FV{U}}
    {\Gamma,x:U\vdash\ve{t}:\sui{n-1}\alpha_i\cdot V_i + \alpha_n\cdot\forall Y.V_n}
  \]
  By the induction hypothesis,
  $\Gamma\vdash\ve t[\ve b/x]:\sui{n}\alpha_i\cdot V_i$.
  Then by rule
  $\forall_I$, $\Gamma\vdash\ve t[\ve
  b/x]:\sui{n-1}\alpha_i\cdot V_i + \alpha_n\cdot\forall Y.V_n$.
  \inductioncase{Case $\forall_E$}
  \[
    \prftree[r]{$\forall_E$}
    {\Gamma,x:U\vdash\ve t:\sui{n-1}\alpha_i\cdot V_i + \alpha_n \cdot \forall Y.V_n}
    {\Gamma,x:U\vdash\ve t:\sui{n-1}\alpha_i\cdot U_i + \alpha_n\cdot U_n[B/Y]}
  \]
  By the induction hypothesis, $\Gamma\vdash\ve t[\ve
  b/x]:\sui{n-1}\alpha_i\cdot V_i + \alpha_n \cdot \forall Y.V_n$.
  By rule $\forall_E$, $\Gamma\vdash\ve t[\ve
  b/x]:\sui{n-1}\alpha_i\cdot V_i + \alpha_n\cdot V_n[B/Y]$.
  \inductioncase{Case $S$}
  \[
    \prftree[r]{$S$}
    {\Gamma,x:U\vdash\ve{t}:T_i~\forall i \in \{1,\dots,n\}}
    {\Gamma,x:U\vdash \left(\sui{n} \alpha_i\right) \cdot \ve{t}: \sui{n} \alpha_i \cdot T_i}
  \]
  By the
  induction hypothesis, for all $i \in \{1, \dots, n\}$, $\Gamma\vdash\ve t[\ve b/x]:T_i$.
  Then by
  rule $S$, $\Gamma\vdash \left(\sui{n} \alpha_i\right) \cdot \ve{t}[\ve b/x]: \sui{n} \alpha_i \cdot T_i$.
  Notice that $\left(\sui{n} \alpha_i\right) \cdot \ve{t}[\ve b/x]=(\left(\sui{n} \alpha_i\right) \cdot \ve{t})[\ve b/x]$.
  \inductioncase{Case $+_I$}
  \[
    \prftree[r]{$+_I$}
    {\Gamma,x:U\vdash\ve r:R}
    {\Gamma,x:U\vdash\ve u:S}
    {\Gamma,x:U\vdash\ve r+\ve u:R+S}
  \]
  By the
  induction hypothesis, $\Gamma\vdash\ve r[\ve b/x]:R$
  and $\Gamma\vdash\ve u[\ve b/x]:S$.
  Then by
  rule $+_I$, $\Gamma\vdash\ve r[\ve b/x]+\ve u[\ve b/x]:R+S$.
  Notice that $\ve r[\ve b/x]+\ve
  u[\ve b/x]=(\ve r+\ve u)[\ve b/x]$.
  \inductioncase{Case $\equiv$}
  \[
    \prftree[r]{$\equiv$}
    {\Gamma,x:U\vdash\ve t:T}
    {T\equiv R}
    {\Gamma,x:U\vdash\ve t:R}
  \]
  By the induction hypothesis, $\Gamma\vdash\ve t[\ve b/x]:R$.
  Hence, by rule $\equiv$, $\Gamma\vdash\ve t[\ve b/x]:T$.
  \inductioncase{Case $1_E$}
  \[
    \prftree[r]{$1_E$}
    {\Gamma,x:U\vdash1\cdot\ve t:T}
    {\Gamma,x:U\vdash\ve t: T}
  \]
  By the induction hypothesis, $\Gamma\vdash 1 \cdot \ve t[\ve b/x]:R$.
  Hence, by rule $1_E$, $\Gamma\vdash\ve t[\ve b/x]:T$.
\end{proof}

\xrecap{Theorem}{Subject Reduction}{thm:sr}{
  For any terms $\ve{t}, \ve{t}'$, any context $\Gamma$ and any type $T$, if $\ve{t} \to \ve{t}'$ and $\Gamma \vdash \ve{t}: T$, then $\Gamma \vdash \ve{t}': T$.
}
\begin{proof}
  Let $\ve{t} \to \ve{t}'$ and $\Gamma \vdash \ve{t}: T$, we proceed by induction on the rewrite relation:
  \inductioncase{Group E}
  \inductioncase{Case $1\cdot \ve{t}\to \ve{t}$}
  Consider $\Gamma \vdash 1\cdot \ve{t}: T$, then by $1_E$ rule, then $\Gamma \vdash \ve{t}: T$.
  \inductioncase{Case $\alpha\cdot(\beta\cdot\ve{t})\to (\alpha\times\beta)\cdot\ve{t}$}
  Consider $\pi = \Gamma \vdash \alpha\cdot(\beta\cdot\ve{t}): T$,
  then by applying Lemma~\ref{lem:sr:scalars}, there exist $R_1, \dots, R_n$, $\alpha_1, \dots, \alpha_n$ such that
  \begin{itemize}
  \item $T \equiv \sui{n}\alpha_i \cdot R_i$.
  \item $\pi_i = \Gamma \vdash \beta\cdot\ve{t}: R_i$, with $size(\pi) > size(\pi_i)$, for $i \in \{1, \dots, n\}$.
  \item $\sui{n} \alpha_i = \alpha$.
  \end{itemize}
  By applying Lemma~\ref{lem:sr:scalars} for all $i \in \{1,\dots,n\}$,
  there exist $S_{(i,1)}, \dots, S_{(i,m_i)}$, $\beta_{(i,1)}, \dots, \beta_{(i,m_i)}$ such that
  \begin{itemize}
  \item $R_i \equiv \suj{m_i}\beta_{(i,j)} \cdot S_{(i,j)}$.
  \item $\pi_{(i,j)} = \Gamma \vdash \ve{t}: S_{(i,j)}$, with $size(\pi_i) > size(\pi_{(i,j)})$, for $j \in \{1, \dots, m_i\}$.
  \item $\suj{m_i} \beta_{(i,j)} = \beta$.
  \end{itemize}
  Notice that
  \[
    \sui{n}\alpha_i \cdot \underbrace{\suj{m_i}\beta_{(i,j)}}_{\beta} = \sui{n}\alpha_i \cdot \beta\\
    = \beta \cdot \underbrace{\sui{n}\alpha_i}_{\alpha} = \beta \times \alpha = \alpha \times \beta
  \]
  Then applying the $S$ rule, 
  \[
    \prftree[r]{$S$}
    {\Gamma \vdash \ve{t}: S_{(i,j)}~\forall i \in \{1,\dots,n\},~\forall j \in \{1,\dots,m_i\}}
    {\Gamma \vdash (\alpha \times \beta) \cdot \ve{t}: \sui{n}\alpha_i \cdot \suj{m_i}\beta_{(i,j)} \cdot S_{(i,j)}}
  \]
  Since for all $i \in \{1,\dots,n\}$, $\suj{m_i}\beta_{(i,j)} \cdot S_{(i,j)} \equiv R_i$, and since
  $\sui{n}\alpha_i \cdot R_i \equiv T$, then by $\equiv$ rule, we conclude that $\Gamma \vdash (\alpha\times\beta)\cdot\ve{t}:T$.
  \inductioncase{Case $\alpha\cdot(\ve{t} + \ve{r})\to \alpha\cdot\ve{t} + \alpha\cdot\ve{r}$}
  Consider $\Gamma \vdash \alpha\cdot(\ve{t} + \ve{r}): T$,
  then by Lemma~\ref{lem:sr:scalars} there exist $R_1, \dots, R_n$, $\alpha_1, \dots, \alpha_n$ such that
  \begin{itemize}
  \item $T \equiv \sui{n}\alpha_i \cdot R_i$.
  \item $\pi_i = \Gamma \vdash \ve{t} + \ve{r}: R_i$, with $size(\pi) > size(\pi_i)$, for $i \in \{1, \dots, n\}$.
  \item $\sui{n} \alpha_i = \alpha$.
  \end{itemize}
  Since $size(\pi) > size(\pi_i)$, then by Lemma~\ref{lem:sr:sums}, for all $i \in \{1,\dots,n\}$,
  there exist $S_{i,1}, S_{i,2}$ such that
  \begin{itemize}
  \item $\Gamma \vdash \ve{t}: S_{(i,1)}$.
  \item $\Gamma \vdash \ve{r}: S_{(i,2)}$.
  \item $S_{(i,1)} + S_{(i,2)} \equiv R_i$.
  \end{itemize}
  Then applying the $S$ rule, 
  \begin{align*}
    \prftree[r]{$S$}
    {\Gamma \vdash \ve{t}: S_{(i,1)}~\forall i \in \{1,\dots,n\}}
    {\Gamma \vdash \alpha \cdot  \ve{t}: \sui{n} \alpha_i \cdot S_{(i,1)}}
    \qquad
    \prftree[r]{$S$}
    {\Gamma \vdash \ve{r}: S_{(i,2)}~\forall i \in \{1,\dots,n\}}
    {\Gamma \vdash \alpha \cdot \ve{r}: \sui{n} \alpha_i \cdot S_{(i,2)}}
  \end{align*}
  By applying the $+_I$ rule,
  \[
    \prftree[r]{$+_I$}
    {\Gamma \vdash \alpha\cdot\ve{t}: \sui{n} \alpha_i \cdot S_{(i,1)}}
    {\Gamma \vdash \alpha\cdot\ve{r}: \sui{n} \alpha_i \cdot S_{(i,2)}}
    {\Gamma \vdash \alpha\cdot\ve{t} + \alpha\cdot\ve{r}: \sui{n} \alpha_i \cdot S_{(i,1)} + \sui{n} \alpha_i \cdot S_{(i,2)}}
  \]
  Notice that
  \[
    \sui{n} \alpha_i \cdot S_{(i,1)} + \sui{n} \alpha_i \cdot S_{(i,2)} \equiv \sui{n} \alpha_i \cdot (S_{(i,1)} + S_{(i,2)})
    \equiv \sui{n} \alpha_i \cdot R_i \equiv T
  \]
  Finally, applying the $\equiv$ rule, we conclude that $\Gamma \vdash \alpha\cdot\ve{t} + \alpha\cdot\ve{r}: T$.

  \inductioncase{Group F}
  \inductioncase{Case $\alpha\cdot\ve{t} + \beta\cdot\ve{t}\to (\alpha + \beta)\cdot \ve{t}$}
  Consider ${\Gamma \vdash \alpha\cdot\ve{t} + \beta\cdot\ve{t}: T}$.\\
  For simplicity, we rename $\alpha = \mu_1$ and $\beta = \mu_2$, then by Lemma~\ref{lem:sr:sums} there exist $S_1, S_2$ such that
  \begin{itemize}
  \item ${\pi_1 = \Gamma \vdash \mu_1\cdot\ve{t}: S_{1}}$.
  \item ${\pi_2 = \Gamma \vdash \mu_2\cdot\ve{t}: S_{2}}$.
  \item ${S_{1} + S_{2} \equiv T}$.
  \end{itemize}
  And by Lemma~\ref{lem:sr:scalars}, for $k = 1, 2$,
  there exist $R_{(k,1)},\dots,R_{(k,n_k)}$, $\gamma_{(k,1)},\dots,\gamma_{(k,n_k)}$ such that
  \begin{itemize}
  \item $S_k \equiv \sui{n_k}\gamma_{(k,i)} \cdot R_{(k,i)}$.
  \item $\pi_{(k,i)} = \Gamma \vdash \ve{t}: R_{(k,i)}$, with $size(\pi_k) > size(\pi_{(k,i)})$, for $i \in \{1, \dots, n_k\}$.
  \item $\sui{n_k} \gamma_{(k,i)} = \mu_k$.
  \end{itemize}
  Notice that
  \[
    \underbrace{\sui{n_1} \mu_{(1,i)}}_{=~\mu_1} + \underbrace{\sui{n_2} \mu_{(2,i)}}_{=~\mu_2} = \mu_1 + \mu_2 = \alpha + \beta
  \]
  Then applying the $S$ rule, 
  \[
    \prftree[r]{$S$}
    {\Gamma \vdash \ve{t}: R_{(1,i)}~\forall i \in \{1,\dots,n_1\}}
    {\Gamma \vdash \ve{t}: R_{(2,i)}~\forall i \in \{1,\dots,n_2\}}
    {\Gamma \vdash (\alpha + \beta) \cdot \ve{t}: \sui{n_1} \mu_{(1,i)} \cdot R_{(1,i)} + \sui{n_2} \mu_{(2,i)} \cdot R_{(2,i)}}
  \]
  We also know that
  \[
    \sui{n_1} \mu_{(1,i)} \cdot R_{(1,i)} \equiv S_1\qquad
    \sui{n_2} \mu_{(2,i)} \cdot R_{(2,i)} \equiv S_2\qquad
    S_1 + S_2 \equiv T
  \]
  Finally, we conclude by $\equiv$ rule that ${\Gamma \vdash (\alpha + \beta)\cdot \ve{t}: T}$.
  \inductioncase{Case $\alpha\cdot\ve{t} + \ve{t}\to (\alpha + 1)\cdot \ve{t}$}
  Consider ${\Gamma \vdash \alpha\cdot\ve{t} + \ve{t}: T}$, then by Lemma~\ref{lem:sr:sums} there exist $S_1, S_2$ such that
  \begin{itemize}
  \item ${\pi = \Gamma \vdash \alpha\cdot\ve{t}: S_{1}}$.
  \item ${\Gamma \vdash \ve{t}: S_{2}}$.
  \item ${S_{1} + S_{2} \equiv T}$.
  \end{itemize}
  And by Lemma~\ref{lem:sr:scalars}, there exist $R_1, \dots, R_n$, $\alpha_1, \dots, \alpha_n$ such that
  \begin{itemize}
  \item $S_1 \equiv \sui{n}\alpha_{i} \cdot R_i$.
  \item $\pi_{i} = \Gamma \vdash \ve{t}: R_{i}$, with $size(\pi) > size(\pi_{i})$, for $i \in \{1, \dots, n\}$.
  \item $\sui{n} \alpha_{i} = \alpha$.
  \end{itemize}
  Then applying the $S$ rule, 
  \[
    \prftree[r]{$S$}
    {\Gamma \vdash \ve{t}: R_i~\forall i \in \{1,\dots,n\}}
    {\Gamma \vdash \ve{t}: S_2}
    {\Gamma \vdash (\alpha + 1) \cdot \ve{t}: \sui{n}\alpha_{i} \cdot R_i + S_{2}}
  \]
  We also know that
  \[
    \sui{n} \mu_{i} \cdot R_{i} \equiv S_1\qquad
    S_1 + S_2 \equiv T
  \]
  Finally, we conclude by $\equiv$ rule that ${\Gamma \vdash (\alpha + 1)\cdot \ve{t}: T}$.
  \inductioncase{Case $\ve{t} + \ve{t}\to (1 + 1)\cdot\ve{t}$}
  Consider $\Gamma \vdash \ve{t} + \ve{t}: T$, then
  by Lemma~\ref{lem:sr:sums} there exist $T_1, T_2$ such that
  \begin{itemize}
  \item $\Gamma \vdash \ve{t}: T_1$.
  \item $\Gamma \vdash \ve{t}: T_2$.
  \item $T_1 + T_2 \equiv T$.
  \end{itemize}
  Then applying the $S$ rule, 
  \[
    \prftree[r]{$S$}
    {\Gamma \vdash \ve{t}: T_1}
    {\Gamma \vdash \ve{t}: T_2}
    {\Gamma \vdash (1 + 1)\cdot\ve{t}: T_1 + T_2}
  \]
  Finally, by $\equiv$ rule we conclude that ${\Gamma \vdash (1 + 1)\cdot \ve{t}: T}$.

  \inductioncase{Group B}
  \inductioncase{Case $(\lambda x.\ve{t})~\ve{b}\to\ve{t}\subst{\ve b}{x}$}
  Consider $\Gamma \vdash (\lambda x.\ve{t})~\ve{b}: T$, then by Lemma~\ref{lem:sr:app}, , there exist $R_1, \dots, R_h$, $\mu_1, \dots, \mu_h$, $\V_1,\dots,\V_h$ such that $T \equiv \suk{h} \mu_k \cdot R_k$,
  $\suk{h} \mu_k = 1$ and for all $k \in \{1,\dots,h\}$,
  \begin{itemize}
  \item $\Gamma\vdash \lambda x.\ve{t}: \sui{n_k}{\alpha_{(k,i)} \cdot\forall\vec{X}.(U\to T_{(k,i)})}$.
  \item $\Gamma\vdash\ve{b}: \suj{m_k}\beta_{(k,j)}\cdot U[\vec{A}_{(k,j)}/\vec{X}]$.
  \item $\sui{n_k}\suj{m_k} \alpha_{(k,i)}\times\beta_{(k,j)}\cdot {T_{(k,i)}[\vec{A}_{(k,j)}/\vec{X}]} \ssubt_{\V_k,\Gamma} R_k$.
  \end{itemize}
  For the sake of readability, we will split the proof:
  \begin{enumerate}
	\item We will prove that $\Gamma, x:U[\vec{A}_{(k,j)}/X] \vdash \ve{t}: T_{(k,i)}[\vec{A}_{(k,j)}/X]$, for all $k \in \{1, \dots, h\}$, $j \in \{1, \dots, m\}$, $i \in \{1, \dots, n\}$.
	\item We will prove that $\Gamma \vdash \ve{t}\subst{\ve b}{x}: T_{(k,i)}[\vec{A}_{(k,j)}/X]$, for all $k \in \{1, \dots, h\}$, $j \in \{1, \dots, m\}$, $ i \in \{1, \dots, n\}$.
	\item We will prove that $\Gamma \vdash \ve{t}\subst{\ve b}{x}: T$.
  \end{enumerate}
  \medskip
  \textleadbydots{Item (1)}
  \noindent We will prove that $\Gamma, x:U[\vec{A}_{(k,j)}/X] \vdash \ve{t}: T_{(k,i)}[\vec{A}_{(k,j)}/X]$, for all $k \in \{1, \dots, h\}$, $j \in \{1, \dots, m\}$, $i \in \{1, \dots, n\}$.\\
  For simplicity, we will omit the $k$ index, which would otherwise be present in all the types, scalars and upper bound of the summations.\\
  Considering $\lambda x.\ve{t}$ is a basis term, by Lemma~\ref{lem:sr:basevectors} then
  there exist $W_1,\dots,W_b$, $\gamma_1,\dots,\gamma_b$ such that
  \begin{itemize}
  \item $\sug{a}{b} \gamma_{a} \cdot W_{a} \equiv \sui{n}{\alpha_{i} \cdot\forall\vec{X}.(U\to T_{i})}$.
  \item $\Gamma \vdash \lambda x.\ve{t}: W_{a}$, for $a \in \{1, \dots,b\}$.
  \item $\sug{a}{b} \gamma_{a} = 1$.
  \end{itemize}
  Without loss of generality, we assume that all unit types present at both sides of the equivalences are distinct, so
  by Lemma~\ref{lem:sr:equivdistinctscalars}, then $b = n$ and there exists a permutation of $n$, $p$, such that
  $\forall\vec{X}.(U\to T_{i}) \equiv W_{p(i)}$ and $\alpha_{i} = \gamma_{p(i)}$, for all $i \in \{1, \dots, n\}$.\\
  Since for all $i \in \{1, \dots, n\}$ we have $\Gamma \vdash \lambda x.\ve{t}: \forall\vec{X}.(U\to T_{i})$,
  then by Lemma~\ref{lem:sr:abs} and Lemma~\ref{lem:sr:equivdistinctscalars},
  we know that $\Gamma, x:V_{i} \vdash \ve{t}: S_{i}$, and $V_{i} \to S_{i} \ssubt_{\V_i,\Gamma} \forall\vec{X}.(U\to T_{i})$.\\
  By applying Lemma~\ref{lem:sr:arrowscomp}, then $U\equiv V_{i}[\vec{B}/\vec{Y}]$ and $T_{i} \equiv S_{i}[\vec{B}/\vec{Y}]$, with $\vec{Y} \notin \FV{\Gamma}$.\\
  Then, by Lemma~\ref{lem:sr:substitution} and $\equiv$ rule, we have that $\Gamma, x:U \vdash \ve{t}: T_{i}$ for all $i \in \{1, \dots, n\}$.\\
  By Lemma~\ref{lem:sr:sorderhasnofv}, since $V_{i} \to S_{i} \ssubt_{\V_i,\Gamma} \forall\vec{X}.(U\to T_{i})$ for all $i \in \{1, \dots, n\}$, then we know $\vec{X} \notin \FV{\Gamma}$
  and so  $\Gamma \equiv \Gamma[\vec{C}/\vec{X}]$, for any $\vec{C}$.\\
  Therefore, by applying Lemma~\ref{lem:sr:substitution} multiple times, we have $\Gamma, x:U[\vec{A}_{j}/X] \vdash \ve{t}: T_{i}[\vec{A}_{j}/X]$
  for all $j \in \{1, \dots, m\}$, $i \in \{1, \dots, n\}$.\\
  Following this procedure for all $k \in \{1, \dots, h\}$, then we proved
  that $\Gamma, x:U[\vec{A}_{(k,j)}/X] \vdash \ve{t}: T_{(k,i)}[\vec{A}_{(k,j)}/X]$,
  for all $k \in \{1, \dots, h\}$, $j \in \{1, \dots, m\}$, $i \in \{1, \dots, n\}$.

  \medskip
  \textleadbydots{Item (2)}
  \noindent We will prove that $\Gamma \vdash \ve{t}\subst{\ve b}{x}: T_{(k,i)}[\vec{A}_{(k,j)}/X]$, for all $k \in \{1, \dots, h\}$, $j \in \{1, \dots, m_k\}$, $ i \in \{1, \dots, n_k\}$.\\
  For simplicity, we will omit the $k$ index, which would otherwise be present in all the types, scalars and upper bound of the summations.\\
  Since $\ve{b}$ is a basis term, by Lemma~\ref{lem:sr:basevectors} there exist $W'_1, \dots, W'_c$, $\eta_1, \dots, \eta_c$ such that
  \begin{itemize}
  \item $\sug{a}{c} \eta_{a} \cdot W'_{a} \equiv \suj{m}\beta_j\cdot U[\vec{A}_{j}/\vec{X}]$.
  \item $\Gamma \vdash \ve{b}: W'_{a}$, for $a \in \{1, \dots,c\}$.
  \item $\sug{a}{c} \eta_{a} = 1$.
  \end{itemize}
  Without loss of generality, we assume that all unit types present at both sides of the equivalences are distinct, so
  by Lemma~\ref{lem:sr:equivdistinctscalars}, then $c = m$, and there exists a permutation $q$ of $m$, such that
  $U[\vec{A}_{j}/\vec{X}] \equiv W'_{q(j)}$ and $\beta_j = \eta_{q(j)}$, for all $j \in \{1, \dots, m\}$.\\
  Then, following Item (1), by applying Lemma~\ref{lem:sr:substitution}, we have that $\Gamma \vdash \ve{t}\subst{\ve b}{x}: T_{i}[\vec{A}_{j}/X]$
  for all $j \in \{1, \dots, m\}$, $i \in \{1, \dots, n\}$.
  Following this procedure for all $k \in \{1, \dots, h\}$, then we proved
  that $\Gamma \vdash \ve{t}\subst{\ve b}{x}: T_{(k,i)}[\vec{A}_{(k,j)}/X]$,
  for all $k \in \{1, \dots, h\}$, $j \in \{1, \dots, m\}$, $ i \in \{1, \dots, n\}$.

  \medskip
  \textleadbydots{Item (3)}
  \noindent Using the results of Item (1) and Item (2), and since in both items we already proved
  that for all $k \in \{1, \dots, h\}$, $\sui{n_k} \alpha_i = \suj{m_k} \beta_j = 1$,
  then by applying the $S$ rule for all $k \in \{1, \dots, h\}$ (we will omit the $k$ index for simplicity, that
  will be present in all types, scalars and upper bound of the summations), 
  \[
    \prftree[r]{$1_E$}
    {\prftree[r]{$S$}
      {\Gamma \vdash \ve{t}\subst{\ve b}{x}: T_{i}[\vec{A}_{j}/X]~\forall i \in \{1,\dots,n\},~\forall j \in \{1,\dots,m\}}
      {\Gamma \vdash 1\cdot \ve{t}\subst{\ve b}{x}: \sui{n}\suj{m}\alpha_i\times\beta_j\cdot T_i[\vec{A}_j/X]}}
    {\Gamma \vdash \ve{t}\subst{\ve b}{x}: \sui{n}\suj{m}\alpha_i\times\beta_j\cdot T_i[\vec{A}_j/X]}
  \]
  Since $\sui{n_k}\suj{m_k}\alpha_{(k,i)}\times\beta_{(k,j)}\cdot T_{(k,i)}[\vec{A}_{(k,j)}/X] \ssubt_{\V,\Gamma} R_k$,
  then $\Gamma \vdash \ve{t}\subst{\ve b}{x}: R_k$.\\
  Considering that $\suk{h} \mu_k = 1$, then by applying the $S$ and the $1_E$ rule again, 
  \[
    \prftree[r]{$1_E$}
    {\prftree[r]{$S$}
      {\Gamma \vdash \ve{t}\subst{\ve b}{x}: R_k~\forall k \in \{1,\dots,h\}}
      {\Gamma \vdash 1\cdot\ve{t}\subst{\ve b}{x}: \suk{h} \mu_k \cdot R_k}}
    {\Gamma \vdash \ve{t}\subst{\ve b}{x}: \suk{h} \mu_k \cdot R_k}
  \]
  Finally, since $\mu_k \cdot R_k \equiv T$,
  we conclude by $\equiv$ rule that $\Gamma \vdash \ve{t}[\ve{b}/x]: T$.

  \inductioncase{Group A}
  \inductioncase{Case $(\ve{t} + \ve{r})~\ve{u}\to (\ve{t})~\ve{u} + (\ve{r})~\ve{u}$}
  Consider $\Gamma \vdash (\ve{t} + \ve{r})~\ve{u}: T$, then by
  Lemma~\ref{lem:sr:app}, there exist $R_1, \dots, R_h$, $\mu_1, \dots, \mu_h$, $\V_1,\dots,\V_h$ such that $T \equiv \suk{h} \mu_k \cdot R_k$,
  $\suk{h} \mu_k = 1$ and for all $k \in \{1,\dots,h\}$
  \begin{itemize}
  \item $\Gamma\vdash\ve{t} + \ve{r}: \sui{n_k}{\alpha_{(k,i)} \cdot\forall\vec{X}.(U\to T_{(k,i)})}$.
  \item $\Gamma\vdash\ve{u}: \suj{m_k}\beta_{(k,j)}\cdot U[\vec{A}_j/\vec{X}]$.
  \item $\sui{n_k}\suj{m_k} \alpha_{(k,i)}\times\beta_{(k,j)}\cdot {T_{(k,i)}[\vec{A}_{(k,j)}/\vec{X}]} \ssubt_{\V_k,\Gamma} R_k$.
  \end{itemize}
  We will simplify the rest of this proof by omitting the $k$ index,
  which would otherwise be present in all the types, scalars and upper bound of the summations.
  The rest of this proof then should be applied to all $k \in \{1, \dots, h\}$.\\
  By Lemma~\ref{lem:sr:sums}, there exist $S_1$, $S_2$ such that
  \begin{itemize}
  \item $\Gamma \vdash \ve{t}: S_{1}$.
  \item $\Gamma \vdash \ve{r}: S_{2}$.
  \item $S_{1} + S_{2} \equiv \sui{n}{\alpha_i \cdot\forall\vec{X}.(U\to T_i)}$.
  \end{itemize}
  Hence, there exist $N_1, N_2\subseteq\{1,\dots,n\}$ with $N_1\cup N_2=\{1,\dots,n\}$ such that
  \begin{align*}
    S_{1}\equiv\sum\limits_{i\in N_1\setminus N_2}\alpha_{i}\cdot\forall\vec X.(U\to T_{i})+
    \sum\limits_{i\in N_1\cap N_2}\eta_{i}\cdot\forall\vec X.(U\to T_{i}) & \mbox{\quad and}\\
    S_{2}\equiv\sum\limits_{i\in N_2\setminus N_1}\alpha_{i}\cdot\forall\vec X.(U\to T_{i})+
    \sum\limits_{i\in N_1\cap N_2}\eta'_{i}\cdot\forall\vec X.(U\to T_{i}) &
  \end{align*}
  where for all $i \in N_1\cap N_2$, $\eta_{i}+\eta'_{i}=\alpha_{i}$.
  Therefore, using $\equiv$ we get
  \begin{align*}
    \Gamma\vdash\ve t:\sum\limits_{i\in N_1\setminus N_2}\alpha_{i}\cdot\forall\vec X.(U\to T_{i})+
    \sum\limits_{i\in N_1\cap N_2}\eta_{i}\cdot\forall\vec X.(U\to T_{i}) & \mbox{\quad and}\\
    \Gamma\vdash\ve r:\sum\limits_{i\in N_2\setminus N_1}\alpha_{i}\cdot\forall\vec X.(U\to T_{i})+
    \sum\limits_{i\in N_1\cap N_2}\eta'_{i}\cdot\forall\vec X.(U\to T_{i}) &
  \end{align*}
  So, using rule $\to_E$, we get
  \begin{align*}
    \Gamma\vdash(\ve t)~\ve u:\sum\limits_{i\in N_1\setminus N_2}\suj{m}\alpha_{i}\times\beta_{j}\cdot T_{i}[\vec{A}_{j}/\vec X]+
    \sum\limits_{i\in N_1\cap N_2}\suj{m}\eta'_{i}\times\beta_{j}\cdot T_{i}[\vec{A}_{j}/\vec X] & \mbox{\quad and}\\
    \Gamma\vdash(\ve r)~\ve u:\sum\limits_{i\in N_2\setminus N_1}\suj{m}\alpha_{i}\times\beta_{j}\cdot T_{i}[\vec{A}_{j}/\vec X]+
    \sum\limits_{i\in N_1\cap N_2}\suj{m}\eta'_{i}\times\beta_{j}\cdot T_{i}[\vec{A}_{j}/\vec X] &
  \end{align*}
  By rule $+_I$ we can conclude
  \[
    \Gamma\vdash(\ve t)~\ve u+(\ve r)~\ve u:\sui{n}\suj{m}\alpha_{i}\times\beta_{j}\cdot T_{i}[\vec{A}_{j}/\vec X]
  \]
  Since $\sui{n_k}\suj{m_k}\alpha_{(k,i)}\times\beta_{(k,j)}\cdot T_{(k,i)}[\vec{A}_{(k,j)}/\vec X] \ssubt_{\V_k, \Gamma} R_k$
  for all $k \in \{1, \dots, h\}$, then by definition of $\ssubt$, we can derive $\Gamma\vdash(\ve t)~\ve u+(\ve r)~\ve u: R_k$.\\
  By applying the $S$ and $1_E$ rules, then 
  \[
    \prftree[r]{$1_E$}
    {\prftree[r]{$S$}
      {\Gamma\vdash(\ve t)~\ve u+(\ve r)~\ve u: R_k~\forall k \in \{1,\dots,k\}}
      {\Gamma\vdash 1\cdot((\ve t)~\ve u+(\ve r)~\ve u): \suk{h} \mu_k \cdot R_k}}
    {\Gamma\vdash (\ve t)~\ve u+(\ve r)~\ve u: \suk{h} \mu_k \cdot R_k}
  \]
  Finally, by the $\equiv$ rules, then $\Gamma\vdash (\ve t)~\ve u+(\ve r)~\ve u: T$.
  \inductioncase{Case $(\ve{t})~(\ve{r} + \ve{u})\to (\ve{t})~\ve{r} + (\ve{t})~\ve{u}$}
  Consider $\Gamma \vdash (\ve{t})~(\ve{r} + \ve{u}): T$, then by
  Lemma~\ref{lem:sr:app}, there exist $R_1, \dots, R_h$, $\mu_1, \dots, \mu_h$, $\V_1,\dots,\V_h$ such that $T \equiv \suk{h} \mu_k \cdot R_k$,
  $\suk{h} \mu_k = 1$ and for all $k \in \{1,\dots,h\}$
  \begin{itemize}
  \item $\Gamma\vdash\ve{t}: \sui{n_k}{\alpha_{(k,i)} \cdot\forall\vec{X}.(U\to T_{(k,i)})}$.
  \item $\Gamma\vdash \ve{r} + \ve{u}: \suj{m_k}\beta_{(k,j)}\cdot U[\vec{A}_{(k,j)}/\vec{X}]$.
  \item $\sui{n_k}\suj{m_k} \alpha_{(k,i)}\times\beta_{(k,j)}\cdot {T_{(k,i)}[\vec{A}_{(k,j)}/\vec{X}]} \ssubt_{\V_k,\Gamma} R_k$.
  \end{itemize}
  We will simplify the rest of this proof by omitting the $k$ index,
  which would otherwise be present in all the types, scalars and upper bound of the summations.
  The rest of this proof then should be applied to all $k \in \{1, \dots, h\}$.\\
  By Lemma~\ref{lem:sr:sums}, there exists $S_1$, $S_2$ such that
  \begin{itemize}
  \item $\Gamma \vdash \ve{r}: S_{1}$
  \item $\Gamma \vdash \ve{u}: S_{2}$
  \item $S_{1} + S_{2} \equiv \suj{m}\beta_j\cdot U[\vec{A}_j/\vec{X}]$
  \end{itemize}
  Hence, there exist $N_1, N_2\subseteq\{1,\dots,m\}$ with $N_1\cup N_2=\{1,\dots,m\}$,
  such that
  \begin{align*}
    S_1\equiv\sum\limits_{j\in N_1\setminus N_2}\beta_j\cdot U[\vec{A}_j/\vec{X}]+
    \sum\limits_{j\in N_1\cap N_2}\eta_{j}\cdot U[\vec{A}_j/\vec{X}] & \mbox{\quad and}\\
    S_2\equiv\sum\limits_{i\in N_2\setminus N_1}\beta_j\cdot U[\vec{A}_j/\vec{X}]+
    \sum\limits_{j\in N_1\cap N_2}\eta'_{kj}\cdot U[\vec{A}_j/\vec{X}] &
  \end{align*}
  where for all $j\in N_1\cap N_2$, $\eta_{j}+\eta'_{j}=\beta_j$.
  Therefore, using $\equiv$ we get
  \begin{align*}
    \Gamma\vdash\ve r:\sum\limits_{j\in N_1\setminus N_2}\beta_j\cdot U[\vec{A}_j/\vec{X}]+
    \sum\limits_{j\in N_1\cap N_2}\eta_{j}\cdot U[\vec{A}_j/\vec{X}] & \mbox{\quad and}\\
    \Gamma\vdash\ve u:\sum\limits_{j\in N_2\setminus N_1}\beta_j\cdot U[\vec{A}_j/\vec{X}]+
    \sum\limits_{j\in N_1\cap N_2}\eta'_{kj}\cdot U[\vec{A}_j/\vec{X}] &
  \end{align*}
  So, using rule $\to_E$, we get
  \begin{align*}
    \Gamma\vdash(\ve t)~\ve r:\sui{n}\sum\limits_{j\in N_1\setminus N_2}\alpha_i\times\beta_j\cdot T_i[\vec{A}_j/\vec X]+
    \sui{n}\sum\limits_{j\in N_1\cap N_2}\alpha_i\times\eta_{j}\cdot T_i[\vec{A}_j/\vec X] & \mbox{\quad and}\\
    \Gamma\vdash(\ve t)~\ve u:\sui{n}\sum\limits_{j\in N_2\setminus N_1}\alpha_i\times\beta_j\cdot T_i[\vec{A}_j/\vec X]+
    \sui{n}\sum\limits_{j\in N_1\cap N_2}\alpha_i\times\eta'_{kj}\cdot T_i[\vec{A}_j/\vec X] &
  \end{align*}
  By rule $+_I$ we can conclude
  \[
    \Gamma\vdash(\ve t)~\ve r+(\ve t)~\ve u:\sui{n}\suj{m}\alpha_i\times\beta_j\cdot T_i[\vec{A}_j/\vec X]
  \]
  Since $\sui{n_k}\suj{m_k}\alpha_{(k,i)}\times\beta_{(k,j)}\cdot T_{(k,i)}[\vec{A}_{(k,j)}/\vec X] \ssubt_{\V_k, \Gamma} R_k$
  for all $k \in \{1, \dots, h\}$, then by definition of $\ssubt$, we can derive $\Gamma\vdash(\ve t)~\ve r+(\ve t)~\ve u: R_k$.\\
  By applying the $S$ and $1_E$ rules, then 
  \[
    \prftree[r]{$1_E$}
    {\prftree[r]{$S$}
      {\Gamma\vdash(\ve t)~\ve r+(\ve t)~\ve u: R_k~\forall k \in \{1,\dots,h\}}
      {\Gamma\vdash 1\cdot((\ve t)~\ve r+(\ve t)~\ve u): \suk{h} \mu_k \cdot R_k}}
    {\Gamma\vdash (\ve t)~\ve r+(\ve t)~\ve u: \suk{h} \mu_k \cdot R_k}
  \]
  Finally, by the $\equiv$ rules, then $\Gamma\vdash (\ve t)~\ve r+(\ve t)~\ve u: T$.
  \inductioncase{Case $(\alpha\cdot\ve{t})~\ve{r}\to \alpha\cdot(\ve{t})~\ve{r}$}
  Consider $\Gamma \vdash (\alpha\cdot\ve{t})\ \ve{r}: T$, by
  Lemma~\ref{lem:sr:app}, there exist $R_1, \dots, R_h$, $\mu_1, \dots, \mu_h$, $\V_1,\dots,\V_h$ such that $T \equiv \suk{h} \mu_k \cdot R_k$,
  $\suk{h} \mu_k = 1$ and for all $k \in \{1,\dots,h\}$
  \begin{itemize}
  \item $\pi_k = \Gamma\vdash \alpha\cdot\ve{t}: \sui{n_{k}}{\alpha_{(k,i)} \cdot\forall\vec{X}.(U\to T_{(k,i)})}$.
  \item $\Gamma\vdash \ve{r}: \suj{m_{k}}\beta_{(k,j)}\cdot U[\vec{A}_{(k,j)}/\vec{X}]$.
  \item $\sui{n_{k}}\suj{m_{k}} \alpha_{(k,i)}\times\beta_{(k,j)}\cdot {T_{(k,i)}[\vec{A}_{(k,j)}/\vec{X}]} \ssubt_{\V_{k},\Gamma} R_k$.
  \end{itemize}
  We will simplify the rest of this proof by omitting the $k$ index,
  which would otherwise be present in all the types, scalars and upper bound of the summations.
  The rest of this proof then should be applied to all $k \in \{1, \dots, h\}$.\\
  By Lemma~\ref{lem:sr:scalars}, there exist $S_1, \dots, S_b$, $\eta_1, \dots, \eta_b$ such that
  \begin{itemize}
  \item $\sui{n}{\alpha_{i} \cdot\forall\vec{X}.(U\to T_{i})} \equiv \sug{a}{b}\eta_{a} \cdot S_{a}$.
  \item $\pi_i = \Gamma \vdash \ve{t}: S_{a}$, with $size(\pi) > size(\pi_{a})$, for $a \in \{1, \dots, b\}$.
  \item $\sug{a}{b} \eta_{a} = \alpha$.
  \end{itemize}
  Considering $\sui{n}{\alpha_{i} \cdot\forall\vec{X}.(U\to T_{i})}$ does not have any general variable $\vara{X}$ and
  that $\sui{n}{\alpha_{i} \cdot\forall\vec{X}.(U\to T_{i})} \equiv \sug{a}{b}\eta_{a} \cdot S_{a}$,
  then by Lemma~\ref{lem:sr:typecharact},
  $S_{a} \equiv \sug{c}{d_{a}}\gamma_{(a,c)}\cdot V_{(a,c)}$.\\
  Without loss of generality, we assume that all unit types present at both sides of the equivalences are distinct, so
  by Lemma~\ref{lem:sr:equivdistinctscalars}, then $n = \sug{a}{b} d_{a}$, and by taking a partition
  from $\{1,\dots, \sug{a}{b} d_{a}\}$ (defining an equivalence class)
  and the trivial permutation $p$ of $n$ such that $p(i) = i$ (which we will omit for readability), we have
  \begin{itemize}
  \item $\alpha_i = \eta_{[i]}\times\sigma_i$, where $\sigma_i = \gamma_{\left([i],\frac{i}{[i]}\right)}$.
  \item $\forall\vec{X}.(U\to T_i) \equiv V_{\left([i],\frac{i}{[i]}\right)}$.
  \end{itemize}
  Take $f(a) = \sug{e}{a-1} d_e$, so we rewrite $S_a \equiv \sug{c}{d_{a}}\gamma_{(a,c)}\cdot V_{(a,c)}$ as
  \[
    S_a \equiv \sum^{f(a) + d_a}_{g = f(a)} \sigma_g\cdot V_{\left([g],\frac{g}{[g]}\right)}
    \equiv \sum^{f(a) + d_a}_{g = f(a)} \sigma_g\cdot \forall\vec{X}.(U\to T_g)
  \]
  Applying $\to_E$ for all $a \in \{1,\dots,b\}$,
  \[
    \prftree[r]{$\to_E$}
    {\Gamma \vdash \ve{t}: \sum^{f(a) + d_a}_{g = f(a)} \sigma_g \cdot \forall\vec{X}.(U\to T_g)}
    {\Gamma\vdash \ve{r}: \suj{m}\beta_{j}\cdot U[\vec{A}_{j}/\vec{X}]}
    {\Gamma \vdash (\ve{t})~\ve{r}: \sum^{f(a) + d_a}_{g = f(a)}\suj{m} \left(\sigma_g\times\beta_{j}\right)\cdot T_{g}[\vec{A}_{j}/\vec{X}]}
  \]
  We rewrite $\sum^{f(a) + d_a}_{g = f(a)}\suj{m} \left(\sigma_g\times\beta_{j}\right)\cdot T_{g}[\vec{A}_{j}/\vec{X}] \equiv P_a$,
  then by applying the $S$ rule we have
  \vspace{1cm}
  \[
    \prftree[r]{$S$}
    {\Gamma \vdash (\ve{t})~\ve{r}: P_a~\forall a \in \{1,\dots,b\}}
    {\Gamma \vdash \alpha\cdot(\ve{t})~\ve{r}: \sug{a}{b} \eta_a \cdot P_a}
  \]
  Now we begin to unravel the final result
  \begin{align*}
    \sug{a}{b} \eta_a \cdot P_a &\equiv \sug{a}{b} \eta_a \cdot \sum^{f(a) + d_a}_{g = f(a)}\suj{m} \left(\sigma_g\times\beta_{j}\right)\cdot T_{g}[\vec{A}_{j}/\vec{X}]\\
                                &\equiv \sug{a}{b}\sum^{f(a) + d_a}_{g = f(a)}\suj{m} \left(\eta_{[g]} \times \sigma_g\times\beta_{j}\right)\cdot T_{g}[\vec{A}_{j}/\vec{X}]\\
                                &\equiv \sug{a}{b}\sum^{f(a) + d_a}_{g = f(a)}\suj{m} (\alpha_{g}\times\beta_{j})\cdot T_{g}[\vec{A}_{j}/\vec{X}]\\
                                &\equiv \sui{n}\suj{m} (\alpha_{i}\times\beta_{j})\cdot T_{i}[\vec{A}_{j}/\vec{X}]\\
  \end{align*}
  Then,
  \[
    \Gamma \vdash \alpha\cdot(\ve{t})~\ve{r}: \sui{n}\suj{m} (\alpha_{i}\times\beta_{j})\cdot T_{i}[\vec{A}_{j}/\vec{X}]
  \]
  Since $\sui{n_k}\suj{m_k} (\alpha_{(k,i)}\times\beta_{(k,j)})\cdot T_{(k,i)}[\vec{A}_{(k,j)}/\vec{X}] \ssubt_{\V_k,\Gamma} R_k$,
  then for all $k \in \{1,\dots,h\}$, $\Gamma \vdash \alpha\cdot(\ve{t})~\ve{r}: R_k$.\\
  By applying the $S$ and $1_E$ rules, then 
  \[
    \prftree[r]{$1_E$}
    {\prftree[r]{$S$}
      {\Gamma\vdash\alpha \cdot(\ve t)~\ve r: R_k~\forall k \in \{1,\dots,h\}}
      {\Gamma\vdash 1\cdot(\alpha \cdot(\ve t)~\ve r): \suk{h} \mu_k \cdot R_k}}
    {\Gamma\vdash \alpha \cdot(\ve t)~\ve r: \suk{h} \mu_k \cdot R_k}
  \]
  Finally, by the $\equiv$ rule, then $\Gamma\vdash \alpha \cdot (\ve t)~\ve r: T$.
  \inductioncase{Case $(\ve{t})~(\alpha\cdot\ve{r})\to \alpha\cdot(\ve{t})~\ve{r}$}
  Consider $\Gamma \vdash (\ve{t})~(\alpha\cdot\ve{r}): T$, by
  Lemma~\ref{lem:sr:app}, there exist $R_1, \dots, R_h$, $\mu_1, \dots, \mu_h$, $\V_1,\dots,\V_h$ such that $T \equiv \suk{h} \mu_k \cdot R_k$,
  $\suk{h} \mu_k = 1$ and for all $k \in \{1,\dots,h\}$
  \begin{itemize}
  \item $\Gamma\vdash \ve{t}: \sui{n_{k}}{\alpha_{(k,i)} \cdot\forall\vec{X}.(U\to T_{(k,i)})}$.
  \item $\pi_k = \Gamma\vdash \alpha\cdot\ve{r}: \suj{m_{k}}\beta_{(k,j)}\cdot U[\vec{A}_{(k,j)}/\vec{X}]$.
  \item $\sui{n_{k}}\suj{m_{k}} \alpha_{(k,i)}\times\beta_{(k,j)}\cdot {T_{(k,i)}[\vec{A}_{(k,j)}/\vec{X}]} \ssubt_{\V_k,\Gamma} R_k$.
  \end{itemize}
  We will simplify the rest of this proof by omitting the $k$ index,
  which would otherwise be present in all the types, scalars and upper bound of the summations.
  The rest of this proof then should be applied to all $k \in \{1, \dots, h\}$.\\
  By Lemma~\ref{lem:sr:scalars}, there exist $S_1, \dots, S_b$, $\eta_1, \dots, \eta_b$ such that
  \begin{itemize}
  \item $\suj{m}\beta_{j}\cdot U[\vec{A}_{j}/\vec{X}] \equiv \sug{a}{b}\eta_{a} \cdot S_{a}$.
  \item $\pi_i = \Gamma \vdash \ve{r}: S_{a}$, with $size(\pi) > size(\pi_{a})$, for $a \in \{1, \dots, b\}$.
  \item $\sug{a}{b} \eta_{a} = \alpha$.
  \end{itemize}
  Considering $\suj{m}\beta_{j}\cdot U[\vec{A}_{j}/\vec{X}]$ does not have any general variable $\vara{X}$ and
  that $\suj{m}\beta_{j}\cdot U[\vec{A}_{j}/\vec{X}] \equiv \sug{a}{b}\eta_{a} \cdot S_{a}$,
  then by Lemma~\ref{lem:sr:typecharact},
  $S_{a} \equiv \sug{c}{d_{a}}\gamma_{(a,c)}\cdot V_{(a,c)}$.\\
  Without loss of generality, we assume that all unit types present at both sides of the equivalences are distinct, so
  by Lemma~\ref{lem:sr:equivdistinctscalars}, then $m = \sug{a}{b} d_{a}$, and by taking a partition
  from $\{1,\dots, \sug{a}{b} d_{a}\}$ (defining an equivalence class)
  and the trivial permutation $p$ of $m$ such that $p(j) = j$ (which we will omit for readability), we have
  \begin{itemize}
  \item $\beta_j = \eta_{[j]}\times\sigma_j$, where $\sigma_j = \gamma_{\left([j],\frac{j}{[j]}\right)}$.
  \item $U[\vec{A}_{j}/\vec{X}] \equiv V_{\left([j],\frac{j}{[j]}\right)}$.
  \end{itemize}
  Take $f(a) = \sug{e}{a-1} d_e$, so we rewrite $S_a \equiv \sug{c}{d_{a}}\gamma_{(a,c)}\cdot V_{(a,c)}$ as
  \[
    S_a \equiv \sum^{f(a) + d_a}_{g = f(a)} \sigma_g\cdot V_{\left([g],\frac{g}{[g]}\right)}
    \equiv \sum^{f(a) + d_a}_{g = f(a)} \sigma_g\cdot U[\vec{A}_{g}/\vec{X}]
  \]
  Applying $\to_E$ for all $a \in \{1,\dots,b\}$,
  \[
    \prftree[r]{$\to_E$}
    {\Gamma\vdash \ve{t}: \sui{n}{\alpha_{i} \cdot\forall\vec{X}.(U\to T_{i})}}
    {\Gamma \vdash \ve{r}: \sum^{f(a) + d_a}_{g = f(a)} \sigma_g\cdot U[\vec{A}_{g}/\vec{X}]}
    {\Gamma \vdash (\ve{t})~\ve{r}: \sui{n}\sum^{f(a) + d_a}_{g = f(a)} \left(\alpha_i \times \sigma_g\right)\cdot T_{i}[\vec{A}_{g}/\vec{X}]}
  \]
  We rewrite $\sui{n}\sum^{f(a) + d_a}_{g = f(a)} \left(\alpha_i \times \sigma_g\right)\cdot T_{i}[\vec{A}_{g}/\vec{X}] \equiv P_a$,
  then by applying the $S$ rule we have
  \vspace{1cm}
  \[
    \prftree[r]{$S$}
    {\Gamma \vdash (\ve{t})~\ve{r}: P_a~\forall a \in \{1,\dots,b\}}
    {\Gamma \vdash \alpha\cdot(\ve{t})~\ve{r}: \sug{a}{b} \eta_a \cdot P_a}
  \]
  Now we begin to unravel the final result
  \begin{align*}
    \sug{a}{b} \eta_a \cdot P_a &\equiv \sug{a}{b} \eta_a \cdot \sui{n}\sum^{f(a) + d_a}_{g = f(a)} \left(\alpha_i \times \sigma_g\right)\cdot T_{i}[\vec{A}_{g}/\vec{X}]\\
                                &\equiv \sug{a}{b}\sum^{f(a) + d_a}_{g = f(a)}\suj{m} \left(\alpha_i \times \eta_{[g]} \times \sigma_g\right)\cdot T_{i}[\vec{A}_{g}/\vec{X}]\\
                                &\equiv \sug{a}{b}\sum^{f(a) + d_a}_{g = f(a)}\suj{m} (\alpha_i\times\beta_{g})\cdot T_{i}[\vec{A}_{g}/\vec{X}]\\
                                &\equiv \sui{n}\suj{m} (\alpha_{i}\times\beta_{j})\cdot T_{i}[\vec{A}_{j}/\vec{X}]\\
  \end{align*}
  Then,
  \[
    \Gamma \vdash \alpha\cdot(\ve{t})~\ve{r}: \sui{n}\suj{m} (\alpha_{i}\times\beta_{j})\cdot T_{i}[\vec{A}_{j}/\vec{X}]
  \]
  Since $\sui{n_k}\suj{m_k} (\alpha_{(k,i)}\times\beta_{(k,j)})\cdot T_{(k,i)}[\vec{A}_{(k,j)}/\vec{X}] \ssubt_{\V_k,\Gamma} R_k$,
  then for all $k \in \{1,\dots,h\}$, $\Gamma \vdash \alpha\cdot(\ve{t})~\ve{r}: R_k$.\\
  By applying the $S$ and $1_E$ rules, then 
  \[
    \prftree[r]{$1_E$}
    {\prftree[r]{$S$}
      {\Gamma\vdash\alpha \cdot(\ve t)~\ve r: R_k~\forall k \in \{1,\dots,h\}}
      {\Gamma\vdash 1\cdot(\alpha \cdot(\ve t)~\ve r): \suk{h} \mu_k \cdot R_k}}
    {\Gamma\vdash \alpha \cdot(\ve t)~\ve r: \suk{h} \mu_k \cdot R_k}
  \]
  Finally, by the $\equiv$ rule, then $\Gamma\vdash \alpha \cdot (\ve t)~\ve r:
  T$.
\end{proof}
}

\arxiv{\section{Omitted proofs in Section~\ref{ch:other-properties}}\label{app:proofsOP}}
\conf{\section{Proof of Theorem~\ref{thm:sn}}\label{app:proofsOP}}
\arxiv{\xrecap{Theorem}{Progress}{thm:progress}{
  Given $\mathbb{V} = \left\{\sui{n} \alpha_i \cdot \lambda x_i.\ve{t}_i + \sum^{m}_{j=n+1} \lambda x_j.\ve{t}_j \mid \forall i, j, \lambda x_i.\ve{t}_i \neq \lambda x_j.\ve{t}_j\right\}$
  and $\mathsf{NF}$ the set of terms in normal form (the terms that cannot be reduced any further), then
  if $\vdash \ve{t}: T$ and $\ve{t} \in \mathsf{NF}$, it follows that $\ve{t} \in \mathbb{V}$.
}
\begin{proof}
  By induction on $\ve{t}$:
  \inductioncase{Case $\ve{t} = \sui{n} \alpha_i \cdot \lambda x_i.\ve{t}_i + \sum^{m}_{j=n+1} \lambda x_j.\ve{t}_j \mid \forall i, j, \lambda x_i.\ve{t}_i \neq \lambda x_j.\ve{t}_j$}
  Trivial case.
  \inductioncase{Case $\ve{t} = \sui{n} \alpha_i \cdot \lambda x_i.\ve{t}_i + \sum^{m}_{j=n+1} \lambda x_j.\ve{t}_j \mid \exists i, j, \lambda x_i.\ve{t}_i = \lambda x_j.\ve{t}_j$}
  $\ve{t} \notin \mathsf{NF}$, since at least one reduction rule from Group F can be applied.
  \inductioncase{Case $\ve{t} = (\ve{r})~\ve{s}$}
  By induction hypothesis, we know that $\ve{r} = \sui{n} \alpha_i \cdot \lambda x_i.\ve{t}_i + \sum^{m}_{j=n+1} \lambda x_j.\ve{t}_j \in \mathbb{V}$.
  We consider the following cases:
  \begin{itemize}
  \item If $m>n+1$ or $n \neq 0$, then at least one reduction rule from Group A can be applied, hence $(\ve{r})~\ve{s} \notin \mathsf{NF}$.
  \item If $m=n+1$ and $n=0$, then $\ve{r} = \ve{b}_{n+1} \in \mathbb{V}$.
    Since $FV(\ve{r}) = \emptyset$, then $\ve{r} = \lambda x.\ve{r'}$, which implies $(\ve{r})~\ve{s}$ is a beta-redex
    or at least one reduction rule from Group A can be applied, hence $(\ve{r})~\ve{s} \notin \mathsf{NF}$.
  \end{itemize}
  \inductioncase{Case $\ve{t} = \alpha \cdot \ve{r}$}
  By induction hypothesis, we know that $\ve{r} = \sui{n} \alpha_i \cdot \lambda x_i.\ve{t}_i + \sum^{m}_{j=n+1} \lambda x_j.\ve{t}_j \in \mathbb{V}$.
  We consider the following cases:
  \begin{itemize}
  \item If $m \neq n+1$ or $n \neq 0$, then at least one reduction rule from Group E can be applied, hence $(\ve{r})~\ve{s} \notin \mathsf{NF}$.
  \item If $m=n+1$, $n=0$ and $\alpha = 1$, then $\ve{r} = \lambda x.\ve{t} \in \mathbb{V}$, but $1 \cdot \ve{r} = 1 \cdot \lambda x.\ve{t} \to \lambda x.\ve{t}$, hence $\alpha \cdot \ve{r} \notin \mathsf{NF}$.
  \item If $m=n+1$, $n=0$ and $\alpha \neq 1$, then $\ve{r} = \lambda x.\ve{t} \in \mathbb{V}$ and $\alpha \cdot \ve{r} = \alpha \cdot \ve{b} \in \mathbb{V}$.
  \end{itemize}
  \inductioncase{Case $\ve{t} = \ve{t}_1 + \ve{t}_2$}
  By induction hypothesis, we know that $\ve{t}_{k} = \sui{n^k} \alpha^k_i \cdot (\lambda x_i.\ve{t}_i)^k_i + \sum^{m^k}_{j=n+1} (\lambda x_j.\ve{t}_j)^k_j \in \mathbb{V}$, with $k = 1, 2$.\\
  We consider the following cases:
  \begin{itemize}
  \item $\exists i, j ~/~ (\lambda x_i.\ve{t}_i)^1 = (\lambda x_j.\ve{t}_j)^2$, then at least one reduction rule from Group F can be applied, hence $\ve{t}_1 + \ve{t}_2 \notin \mathsf{NF}$.
  \item $\forall i, j ~/~ (\lambda x_i.\ve{t}_i)^1 \neq (\lambda x_j.\ve{t}_j)^2$, then by definition of $\mathbb{V}$, $\ve{t}_1 + \ve{t}_2 \in \mathsf{NF}$.\qed
  \end{itemize}
\end{proof}
}

\xrecap{Theorem}{Strong Normalisation}{thm:sn}
{If $\Gamma \vdash \ve{t}: T$ is a valid sequent, then $\ve{t}$ is strongly normalising.}

\begin{proof}
  Consider the following derivation tree in \lvecr, where $T \equiv T_1$,
  \[
    \pi_1 = \left\{\vcenter{
        \prftree
        {\vdots}
        {\Gamma \vdash \ve{t}: T_1}}\right.
  \]
  Since the only difference between $\lvecr$ and $\lvec$ is the replacement of
  the $\alpha_I$ rule for the $S$ and $1_E$ rules, then if $S$ and $1_E$ are not
  present in $\pi_1$, we have that $\pi_1$ (and particularly, $\Gamma \vdash
  \ve{t}: T$) is also a valid derivation for $\lvec$. Also, notice that up to this
  point, neither the term nor the types have scalars associated with them. If a
  scalar were to be introduced, then the derivation trees (for $\lvecr$ and
  $\lvec$) would be
  \[
    \begin{array}{c@{\qquad}c}
      \text{In }\lvecr
      &
        \text{In }\lvec
      \\[2ex]
      \vcenter{\prftree[r]{$S$}
      {\pi_1}
      {\dots}
      {\pi_n}
      {\Gamma \vdash \left(\sui{n} \alpha_i\right) \cdot \ve{t}: \sui{n} \alpha_i \cdot T_i}}
      &
        \vcenter{\prftree[r]{$\alpha_I$}
        {\pi_1}
        {\Gamma \vdash \left(\sui{n} \alpha_i\right) \cdot \ve{t}: \left(\sui{n} \alpha_i\right) \cdot T}}
    \end{array}
  \]
  Where $\pi_i = \Gamma \vdash \ve{t}: T_i$ are valid sequents for some $T_i$,
  with $i \in \{2, \dots, n\}$.\\ Now, notice that by having $\Gamma \vdash
  \left(\sui{n} \alpha_i\right) \cdot \ve{t}: \sui{n} \alpha_i \cdot T_i$
  (specifically, by having a linear combination of types), we are restricting the
  terms we can type. In other words, for every derivation tree in $\lvecr$, there
  is a simpler derivation tree in $\lvec$, and thus if a sequent $\Gamma \vdash
  \ve{t}: T$ is valid in $\lvecr$, then there is a derivation tree for the same
  term in $\lvec$. Finally, since $\lvec$ is strongly
  normalising~\cite[Thm.~5.7]{vectorial}, then $\lvecr$ is strongly normalising.
\end{proof}

\arxiv{
\recap{Lemma}{lem:wp:weightequiv}{
  If $T \equiv R$, then $\tnorm{T} = \tnorm{R}$.
}
\begin{proof}
  We prove the lemma holds for every definition of $\equiv$
  \inductioncase{Case $1\cdot T \equiv T$}
  Trivial case.
  \inductioncase{Case $\alpha\cdot(\beta\cdot T) \equiv (\alpha\times\beta)\cdot T$}
  \vspace{-0.7cm}
  \begin{align*}
    \tnorm{\alpha\cdot(\beta\cdot T)} &= \alpha\cdot\tnorm{\beta\cdot T} = (\alpha \times \beta) \cdot \tnorm{T}
                                        = \tnorm{(\alpha \times \beta) \cdot T}
  \end{align*}
  \inductioncase{Case $\alpha\cdot T+\alpha\cdot R \equiv \alpha\cdot (T+R)$}
  \vspace{-0.7cm}
  \begin{align*}
    \tnorm{\alpha\cdot T+\alpha\cdot R} &= \tnorm{\alpha\cdot T} + \tnorm{\alpha\cdot R}\\
                                        &= \alpha \cdot \tnorm{T} + \alpha \cdot \tnorm{R} = \alpha \cdot (\tnorm{T} + \tnorm{R})\\
                                        &= \alpha \cdot (\tnorm{T + R}) = \tnorm{\alpha\cdot (T+R)}\\
  \end{align*}
  \inductioncase{Case $\alpha\cdot T+\beta\cdot T \equiv (\alpha+\beta)\cdot T$}
  \vspace{-0.7cm}
  \begin{align*}
    \tnorm{\alpha\cdot T+\beta\cdot T} &= \tnorm{\alpha \cdot T} + \tnorm{\beta \cdot T} = \alpha \cdot \tnorm{T} + \beta \cdot \tnorm{T}\\
                                       &= (\alpha + \beta) \cdot \tnorm{T} = \tnorm{(\alpha+\beta)\cdot T}
  \end{align*}
  \inductioncase{Case $T+R \equiv R+T$}
  \vspace{-0.7cm}
  \begin{align*}
    \tnorm{T + R} = \tnorm{T} + \tnorm{R} = \tnorm{R} + \tnorm{T} = \tnorm{T + R}
  \end{align*}
  \inductioncase{Case $T+(R+S) \equiv (T+R)+S$}
  \vspace{-0.7cm}
  \begin{align*}
    \tnorm{T + (R+S)} &= \tnorm{T} + \tnorm{R + S} = \tnorm{T} + \tnorm{R} + \tnorm{S}\\
                      &= \tnorm{T + R} + \tnorm{S} = \tnorm{(T + R) + S} \qed
  \end{align*}
\end{proof}

\recap{Lemma}{lem:wp:weightofvalues}{
  If $\ve{v}\in \mathbb{V}$, and $\vdash \ve{v}: T$, then
  $\tnorm{T} \equiv \tnorm{\ve{v}}$.
}
\begin{proof}
  Let $\ve{v} = \sui{k} \alpha_i \cdot \lambda x_i.\ve{t}_i + \sum^{n}_{i = k+1} \lambda x_i.\ve{t}$.
  We proceed by induction on $n$.
  \inductioncase{Case $n = 1$}
  There are two possible escenarios:
  \textleadbydots{\textbf{$k = 1$}}
  \noindent In this scenario, consider $\pi =~\vdash \alpha_1 \cdot \lambda x_1.\ve{t}_1: T$.
  By Lemma~\ref{lem:sr:scalars}, there exist $R_1, \dots, R_m$, $\beta_1, \dots, \beta_m$ such that
  \begin{itemize}
  \item $T \equiv \suj{m}\beta_j \cdot R_j$.
  \item $\pi_i = ~ \vdash \lambda x_1.\ve{t}_1: R_j$, with $size(\pi) > size(\pi_j)$, for $j \in \{1, \dots, m\}$.
  \item $\sui{m} \beta_j = \alpha_1$.
  \end{itemize}
  Considering $\lambda x_1.\ve{t}_1$ is a basis term, then by Lemma~\ref{lem:sr:basevectors},
  for each $j \in \{1, \dots, m\}$ (we will omit the $j$ index for readability),
  there exist $U_1, \dots, U_h$, $\sigma_1, \dots, \sigma_h$ such that
  \begin{itemize}
  \item $R \equiv \suk{h} \sigma_k \cdot U_k$.
  \item $\vdash \lambda x_1.\ve{t}_1: U_k$, for $k \in \{1,\dots,h\}$.
  \item $\suk{h} \sigma_k = 1$.
  \end{itemize}
  Then,
  \[
    T \equiv \suj{m} \beta_j \cdot R_j \equiv \suj{m} \beta_j \cdot (\suk{h_j} \sigma_{(j,k)} \cdot U_{(j,k)})
  \]
  Finally, by definition of $\tnorm{\bullet}$, we have
  \begin{align*}
    \tnorm{\ve{v}} &= \tnorm{\alpha_1 \cdot \lambda x_1.\ve{t}_1} = \alpha_1 \cdot \tnorm{\lambda x_1.\ve{t}_1}\\
                   &= \alpha_1 = \sui{m} \beta_j = \sui{m} \beta_j \cdot \underbrace{\left(\suk{h_j} \sigma_{(j,k)}\right)}_{=~1}\\
                   &= \sui{m} \beta_j \cdot \left(\suk{h_j} \sigma_{(j,k)} \cdot \tnorm{U_{(j,k)}}\right)\\
                   &= \sui{m} \beta_j \cdot \tnorm{\suk{h_j} \sigma_{(j,k)} \cdot U_{(j,k)}}
                     = \tnorm{\sui{m} \beta_j \cdot \left(\suk{h_j} \sigma_{(j,k)} \cdot U_{(j,k)}\right)}\\
                   &= \tnorm{T}
  \end{align*}
  \textleadbydots{\textbf{$k = 0$}}
  \noindent In this scenario, consider $\vdash \lambda x_1.\ve{t}_1: T$.
  Considering $\lambda x_1.\ve{t}_1$ is a basis term, then
  by Lemma~\ref{lem:sr:basevectors}, there exist $U_1, \dots, U_m$, $\beta_1, \dots, \beta_m$ such that
  \begin{itemize}
  \item $T \equiv \suj{m} \beta_k \cdot U_j$.
  \item $\vdash \lambda x_1.\ve{t}_1: U_j$, for $j \in \{1,\dots,m\}$.
  \item $\suj{m} \beta_j = 1$.
  \end{itemize}
  Finally, by definition of $\tnorm{\bullet}$, we have
  \begin{align*}
    \tnorm{\ve{v}} &= \tnorm{\lambda x_1.\ve{t}_1} = 1 = \suj{m} \beta_j\\
                   &= \suj{m} \beta_j \cdot \tnorm{U_j} = \tnorm{\suj{m} \beta_j \cdot U_j}\\
                   &= \tnorm{T}
  \end{align*}
  \inductioncase{Induction step}
  Consider now that $\vdash \ve{v} = \ve{v}' + \ve{v}'': T$,
  where $\ve{v}' = \sui{k} \alpha_i \cdot \lambda x_i.\ve{t}_i + \sum^{n}_{i = k+1} \lambda x_i.\ve{t}_i$
  and either $\ve{v}'' = \beta \cdot \lambda x.\ve{t}$, or $\ve{v}'' = \lambda x.\ve{t}$.
  By Lemma~\ref{lem:sr:sums}, we know there exists $R$ and $S$ such that
  \begin{itemize}
  \item $T \equiv R + S$.
  \item $\Gamma\vdash\ve{v}': R$.
  \item $\Gamma\vdash\ve{v}'': S$.
  \end{itemize}
  By induction hypothesis, since $\vdash \ve{v}' = \sui{k} \alpha_i \cdot \lambda x_i.\ve{t}_i + \sum^{n}_{i = k+1} \lambda x_i.\ve{t}_i: R$,
  then $\tnorm{R} = \tnorm{\ve{v}'}$;
  and since either $\ve{v}'' = \beta \cdot \lambda x.\ve{t}$ or $\ve{v}'' = \lambda x.\ve{t}$, in both cases we know that $\tnorm{S} = \tnorm{\ve{v}''}$.
  Finally, and considering by Lemma~\ref{lem:wp:weightequiv} that $\tnorm{T} = \tnorm{R} + \tnorm{S}$, we have
  \begin{align*}
    \tnorm{\ve{v}} &= \tnorm{\ve{v}' + \ve{v}''}\\
                   &= \tnorm{\ve{v}'} + \tnorm{\ve{v}''}\\
                   &= \tnorm{R} + \tnorm{S}\\
                   &= \tnorm{T}
                     \qed
  \end{align*}
\end{proof}
}

\end{document}